%%
%% This is file `elsarticle-template-num.tex',
%% generated with the docstrip utility.
%%
%% The original source files were:
%%
%% elsarticle.dtx  (with options: `numtemplate')
%%
%% Copyright 2007, 2008 Elsevier Ltd.
%%
%% This file is part of the 'Elsarticle Bundle'.
%% -------------------------------------------
%%
%% It may be distributed under the conditions of the LaTeX Project Public
%% License, either version 1.2 of this license or (at your option) any
%% later version.  The latest version of this license is in
%%    http://www.latex-project.org/lppl.txt
%% and version 1.2 or later is part of all distributions of LaTeX
%% version 1999/12/01 or later.
%%
%% The list of all files belonging to the 'Elsarticle Bundle' is
%% given in the file `manifest.txt'.
%%

%% Template article for Elsevier's document class `elsarticle'
%% with numbered style bibliographic references
%% SP 2008/03/01

\documentclass[preprint,12pt]{elsarticle}

%% Use the option review to obtain double line spacing
%% \documentclass[authoryear,preprint,review,12pt]{elsarticle}

%% Use the options 1p,twocolumn; 3p; 3p,twocolumn; 5p; or 5p,twocolumn
%% for a journal layout:
%% \documentclass[final,1p,times]{elsarticle}
%% \documentclass[final,1p,times,twocolumn]{elsarticle}
%% \documentclass[final,3p,times]{elsarticle}
%% \documentclass[final,3p,times,twocolumn]{elsarticle}
%% \documentclass[final,5p,times]{elsarticle}
%% \documentclass[final,5p,times,twocolumn]{elsarticle}

%% if you use PostScript figures in your article
%% use the graphics package for simple commands
 \usepackage{graphics}
%% or use the graphicx package for more complicated commands
%% \usepackage{graphicx}
%% or use the epsfig package if you prefer to use the old commands
%% \usepackage{epsfig}

%% The amssymb package provides various useful mathematical symbols
\usepackage{amssymb}
%% The amsthm package provides extended theorem environments
%% \usepackage{amsthm}

%% The lineno packages adds line numbers. Start line numbering with
%% \begin{linenumbers}, end it with \end{linenumbers}. Or switch it on
%% for the whole article with \linenumbers.
%% \usepackage{lineno}

\usepackage{algorithm,algorithmic,multirow}

\journal{a journal}%Theoretical Computer Science}

\newcommand{\fpara}[4]{ % #1 x; #2 y; #3 angle; #4 height
\begin{picture}(0,0)%
\setlength{\unitlength}{1pt}%
\put(#1,#2){\rotatebox{#3}{\raisebox{0mm}[0mm][0mm]{%
\makebox[0mm]{$\left.\rule{0mm}{#4pt}\right($}}}}%
\end{picture}}
\newcommand{\bpara}[4]{ % #1 x; #2 y; #3 angle; #4 height
\begin{picture}(0,0)%
\setlength{\unitlength}{1pt}%
\put(#1,#2){\rotatebox{#3}{\raisebox{0mm}[0mm][0mm]{%
\makebox[0mm]{$\left)\rule{0mm}{#4pt}\right.$}}}}%
\end{picture}}

\newtheorem{lemma}{Lemma}
\newtheorem{observation}{Observation}

\newtheorem{thm}{Theorem}

\newenvironment{proof}{\noindent {\em Proof.}}{\qed\bigskip}

%%%%%%%%%%%%%%%%%%%%%% FOR SHORTHAND %%%%%%%%%%%%%%%%%%%%%%%%%%%%%
\newcommand{\opt}{{\mathit{opt}}}
\newcommand{\angResl}{{\mathit{angResl}}}

\newcommand{\APX}{{\mathit{APX}}}
\newcommand{\LB}{{\mathit{LB}}}

\begin{document}

\begin{frontmatter}

%% Title, authors and addresses

%% use the tnoteref command within \title for footnotes;
%% use the tnotetext command for theassociated footnote;
%% use the fnref command within \author or \address for footnotes;
%% use the fntext command for theassociated footnote;
%% use the corref command within \author for corresponding author footnotes;
%% use the cortext command for theassociated footnote;
%% use the ead command for the email address,
%% and the form \ead[url] for the home page:
%% \title{Title\tnoteref{label1}}
%% \tnotetext[label1]{}
%% \author{Name\corref{cor1}\fnref{label2}}
%% \ead{email address}
%% \ead[url]{home page}
%% \fntext[label2]{}
%% \cortext[cor1]{}
%% \address{Address\fnref{label3}}
%% \fntext[label3]{}

\title{Complexity Analysis of Balloon Drawing for Rooted Trees}%\tnoteref{talk}}%
%\tnotetext[talk]{A preliminary version of this work was presented at
%Asia Pacific Symposium on Information Visualization 2006, February
%1-3, 2006, Tokyo, Japan.}

\author{Chun-Cheng~Lin\fnref{TMUE}}
%\ead{cclin321@gmail.com}

\author{Hsu-Chun~Yen\fnref{NTU,KNU}\corref{cor1}}
\ead{yen@cc.ee.ntu.edu.tw} \cortext[cor1]{Corresponding author.}

\author{Sheung-Hung Poon\fnref{NTHU}}
%\ead{}

\author{Jia-Hao Fan\fnref{NTU}}

\fntext[TMUE]{Dept. of Computer Science, Taipei Municipal University of Education, Taipei, Taiwan, % 105,
R.O.C.}%
\fntext[NTU]{Dept. of Electrical Engineering, National Taiwan University, Taipei, Taiwan, %106,
R.O.C.}%
\fntext[KNU]{Dept. of Computer Science, Kainan University, Taoyuan, Taiwan, %338,
R.O.C.}%
\fntext[NTHU]{Dept. of Computer Science, National Tsing Hua University, Hsinchu, Taiwan, %300,
R.O.C.}%

%Research supported in part by NSC XXX}

%\address[KUAS]{Dept. of Computer Science and Information Engineering, National Kaohsiung University of Applied Sciences, Kaohsiung, Taiwan 807, R.O.C.}
%\address[NTHU]{Dept. of Computer Science, National Tsing Hua University, Hsinchu, Taiwan 300, R.O.C.}
%\address[NTU]{Dept. of Electrical Engineering, National Taiwan University, Taipei, Taiwan 106, R.O.C.}
%\address[KNU]{Dept. of Computer Science, Kainan University, Taoyuan, Taiwan 338, R.O.C.}

\begin{abstract}
In a {\em  balloon drawing} of a tree, all the children under the
same parent are placed on the circumference of the circle centered
at their parent, and the radius of the circle centered at each node
along any path from the root reflects the number of descendants
associated with the node. Among various styles of tree drawings
reported in the literature, the balloon drawing enjoys a desirable
feature of displaying tree structures in a rather balanced fashion.
For each internal node in a balloon drawing, the ray from the node
to each of its children divides the wedge accommodating the subtree
rooted at the child into two sub-wedges. Depending on whether the
two sub-wedge angles are required to be identical or not, a balloon
drawing can further be divided into two types: {\em even sub-wedge}
and {\em uneven sub-wedge} types. In the most general case, for any
internal node in the tree there are two dimensions of freedom that
affect the quality of a balloon drawing: (1) altering the order in
which the children of the node appear in the drawing, and (2) for
the subtree rooted at each child of the node, flipping the two
sub-wedges of the subtree. In this paper, we give a comprehensive
complexity analysis for optimizing balloon drawings of rooted trees
with respect to {\em angular resolution}, {\em aspect ratio} and
{\em standard deviation of angles} under various drawing cases
depending on whether the tree is of even or uneven sub-wedge type
and whether (1) and (2) above are allowed. It turns out that some
are NP-complete while others can be solved in polynomial time. We
also derive approximation algorithms for those that are intractable
in general.
\end{abstract}

\begin{keyword}
%% keywords here, in the form: keyword \sep keyword
tree drawing \sep graph drawing \sep graph algorithms

%% PACS codes here, in the form: \PACS code \sep code

%% MSC codes here, in the form: \MSC code \sep code
%% or \MSC[2008] code \sep code (2000 is the default)

\end{keyword}

\end{frontmatter}

%% \linenumbers

%% main text
\section{Introduction}
\label{sec:introduction}

Graph drawing addresses the issue of constructing geometric
representations of graphs in a way to  gain better understanding and
insights into the graph structures.  Surveys on graph drawing can
be found in \cite{BETT1999,KW2001}. If the given data is
hierarchical (such as a file system% or a web graph
), then it can often be expressed as a rooted tree. Among existing
algorithms in
the literature for drawing rooted trees, the work of %Reingold and Tilford
\cite{RT1981} developed a popular method for drawing binary trees.
The idea behind \cite{RT1981} is to recursively draw the left and
right subtrees independently in a bottom-up manner, then shift the
two drawings along the $x$-direction as close to each other as
possible while centering the parent of the two subtrees one level up
between their roots. Different from the conventional `triangular'
tree drawing of \cite{RT1981}, $hv$-drawings \cite{Shiloach1976},
radial drawings \cite{Eades1992} and balloon drawings
\cite{CK1995,JP1998,KY1993,LY2007,MH1998}
%with respect to cone
%trees %\footnote{Given a rooted tree, the cone tree method places the
%root of every subtree on the vertex of a cone and its children on
%the circumference of the base of the cone in 3-D space such that
%there are no cones overlapping each other. A balloon drawing can be
%viewed as the projection of a cone tree on a plane.}
%\cite{RMC1991}
are also popular for visualizing hierarchical graphs. Since the
majority of  algorithms for drawing rooted trees take linear time,
rooted tree structures are suited to be used in an environment in
which  real-time interactions with users are frequent.

Consider Figure~\ref{fgIllustration} for an example. A {\em balloon
drawing} \cite{CK1995,JP1998,LY2007} of a rooted tree is a drawing
having the following properties:
\begin{itemize}

\item all the children under the same parent are placed on the
circumference of the circle centered at their parent;

\item there exist no edge crossings in the drawing;

\item the radius of the circle centered at each node along any path
from the root node reflects the number of descendants associated
with the node (i.e., for any two edges on a path from the root node,
the farther from the root an edge is, the shorter its drawing length
becomes).
\end{itemize}
In the balloon drawing of a tree, each subtree resides in a {\em
wedge} whose end-point is the parent node of the root of the
subtree. The ray from the parent node to the root of the subtree
divides the wedge into two {\em sub-wedges}. Depending on whether
the two sub-wedge angles are required to be identical or not, a
balloon drawing can further be divided into two types: drawings with
{\em even sub-wedges} (see Figure~\ref{fgIllustration}(a)) and
drawings with {\em uneven sub-wedges} (see
Figure~\ref{fgIllustration}(b)). One can see from the transformation
from Figure~\ref{fgIllustration}(a) to
Figure~\ref{fgIllustration}(b) that a balloon drawing with uneven
sub-wedges is derived from that with even sub-wedges by shrinking
the drawing circles in a bottom-up fashion so that the drawing area
is as small as possible \cite{LY2007}. Another way to differentiate
the two is that for the even sub-wedge case, it is required that the
position of the root of a subtree coincides with the center of the
enclosing circle of the subtree.

\begin{figure}[tbp]
\centering \scalebox{0.625}{\includegraphics{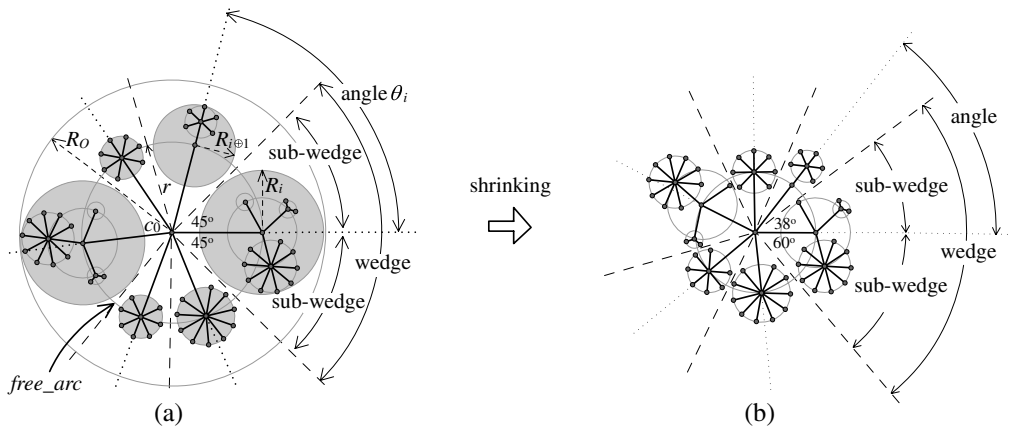}}%
\centering \caption{Illustration of balloon drawings with (a) even
sub-wedges and (b) uneven sub-wedges, where each node is drawn by a point; each edge is drawn by a solid straight line
segment; the center of the largest circle in a balloon drawing is
the root node.} \label{fgIllustration}
\end{figure}

%\begin{figure}[tbp]
%\centering \scalebox{0.75}{\includegraphics*[0,0][576,236]{figure/fgOptArea.eps}}%
%\centering \caption{Immediate step of minimizing area where each
%shaded circle is a star graph.} \label{fgOptArea}
%\end{figure}

% The main aesthetic criteria on balloon drawing are
%\emph{angular resolution}, \emph{aspect ratio}, and \emph{standard
%deviation of angles}.

{\em Aesthetic criteria} specify graphic structures and properties
of drawing, such as minimizing number of edge crossings or bends,
minimizing area, and so on, but the problem of simultaneously
optimizing those criteria is, in many cases, NP-hard. The main
aesthetic criteria on the angle sizes in balloon drawings are {\em angular resolution},
{\em aspect ratio}, and {\em standard deviation of angles}. Note that this paper mainly concerns the angle sizes, while it is interesting to investigate other aesthetic criteria, such as the drawing area, total edge length, etc. Given a
drawing of tree $T$, an angle formed by the two adjacent edges
incident to a common node $v$ is called an angle incident to node
$v$. Note that an angle in a balloon drawing consists of two
sub-wedges which belong to two different subtrees, respectively (see
Figure~\ref{fgIllustration}). With respect to a node $v$, the {\em
angular resolution} is the smallest angle incident to node $v$, the
{\em aspect ratio} is the  ratio of the largest angle to the
smallest angle incident to node $v$, and the {\em standard deviation
of angles} is a statistic used as a measure of the dispersion or
variation in the distribution of angles, equal to the square root of
the arithmetic mean of the squares of the deviations from the
arithmetic mean.

The {\em angular resolution} (resp., {\em aspect ratio}; {\em
standard deviation of angles}) of a drawing of $T$ is defined as the
minimum angular resolution (resp., the maximum aspect ratio; the
maximum standard deviation of angles) among all nodes in $T$. The
angular resolution (resp., aspect ratio; standard deviation of
angles) of a tree drawing is in the range of $(0^{\circ},
360^{\circ})$ (resp., $[1,\infty)$ and $[0,\infty)$). A tree layout
with a large angular resolution can easily  be identified by eyes,
while a tree layout with a small aspect ratio or standard deviation
of angles often enjoys a very balanced view of tree drawing. It is
worthy of pointing out the fundamental difference between aspect
ratio and standard deviation. The aspect ratio only concerns the
deviation between the largest and the smallest angles in the
drawing, while the standard deviation deals with the deviation of
all the angles.

With respect to a balloon drawing of a rooted tree, changing the
order in which the children of a node are listed or flipping the two
sub-wedges of a subtree affects the quality of the drawing. For
example, in comparison between the two balloon drawings of a tree
under different tree orderings respectively shown in
Figures~\ref{fgExperiments}(a) and \ref{fgExperiments}(b), we
observe that the drawing in Figure~\ref{fgExperiments}(b) displays
little variations of angles, which give a very balanced drawing.
Hence some interesting questions arise: {\em How to change the tree
ordering or flip the two sub-wedge angles of each subtree such that
the balloon drawing of the tree has the maximum angular resolution,
the minimum aspect ratio, and the minimum standard deviation of
angles?}

\begin{figure}[tbp]
\centering \scalebox{1.0}{\includegraphics{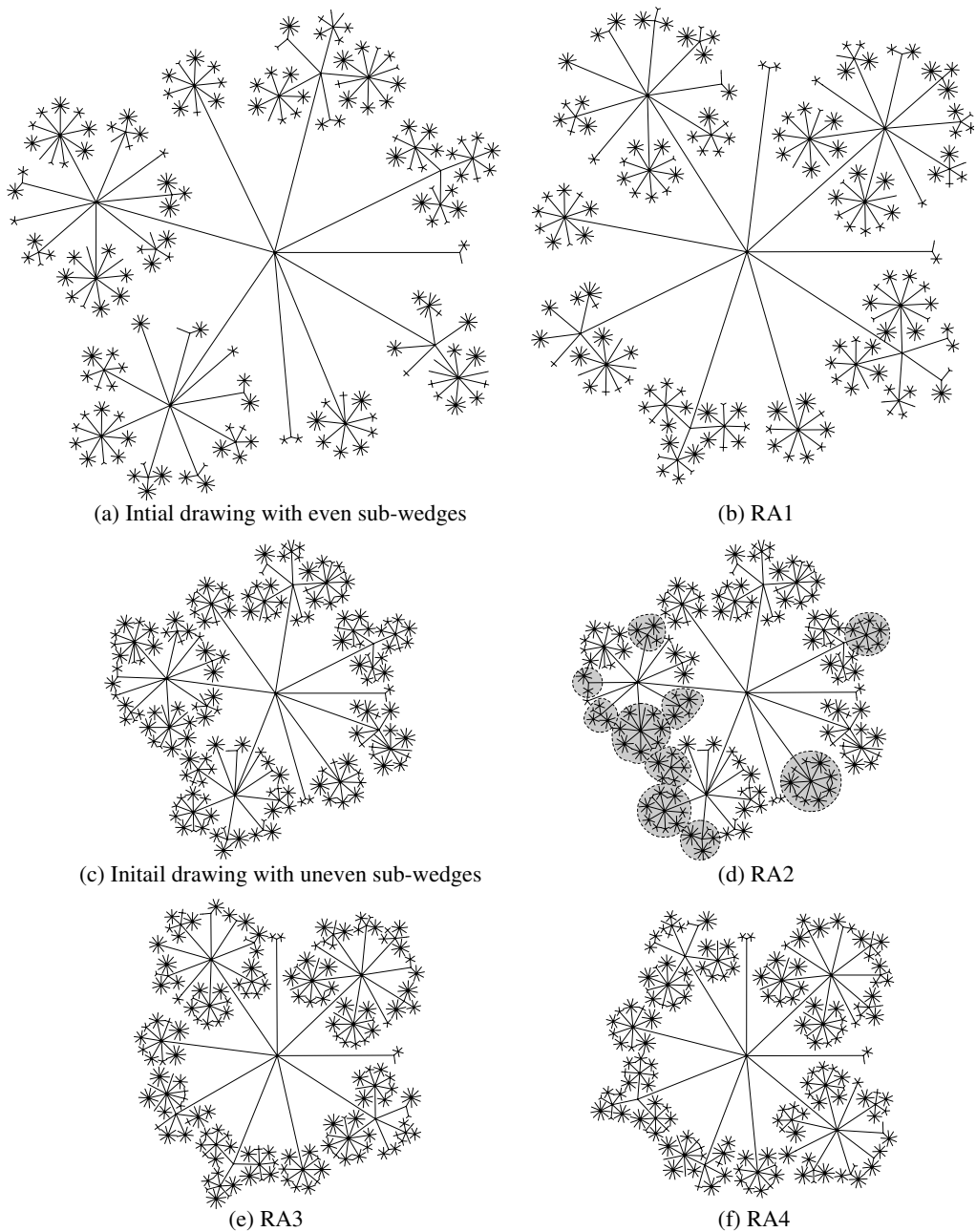}}%
\centering \caption{An experimental example, where (a) and (c) are
initial balloon drawings with even and uneven sub-wedges,
respectively; (b), (d), (e) and (f) achieve the optimality of RA1,
RA2, RA3 and RA4, respectively. Note that the differences of (d) from (c) are encompassed by shaded regions.} \label{fgExperiments}
\end{figure}

Throughout the rest of this paper, we let {\em RE}, {\em RA}, and
{\em DE} denote the problems of optimizing angular resolution,
aspect ratio, and standard deviation of angles, respectively. In
this paper, we investigate the tractability of the RE, RA, and DE
problems in a variety of cases, and our main results are listed in
Table~\ref{tb-results}, in which
% we investigate the
%tractability of these problems in
trees with `flexible' (resp., `fixed')  uneven sub-wedges  refer  to
the case when sub-wedges of subtrees are (resp., are not)  allowed
to flip;
a `semi-ordered' tree is an unordered tree where
only the circular ordering of the children of each node is fixed, without specifying if this ordering is clockwise or counterclockwise in the drawing. Note that a semi-ordered tree allows to flip uneven sub-wedges in the drawing, because flipping sub-wedges of a node in the bottom-up fashion of the tree does not modify the circular ordering of its children.
See Figure~\ref{fgExperiments} for an experimental example
with the drawings which achieve the optimality of RA1--RA4. In
Table~\ref{tb-results}, with the exception of RE1 and RA1 (which
were previously obtained by Lin and Yen in \cite{LY2007}), all the
remaining results are new.
% investigated the RE and RA problems and
%concluded that both the RE1 and RA1 problems can be solved by
%Procedure \ref{alg:C1} in $O(n)$ and both the RE2 and the RA2
%problems can be solved in $O(n^{2.5})$ time.
%This paper further
%investigates the DE problem and solves all the problems shown in
%Table \ref{tb-results}, in which the time complexity of the RE2
%and RA2 problems are improved in this paper. Furthermore,
We also give
%approximation algorithms  for the NP-completeness cases in Table
%\ref{tb-results} are given in this paper. We have
2-approximation algorithms for RA3 and RA4, and
$O(\sqrt{n})$-approximation algorithms for DE3 and DE4. Finding
improved approximation bounds for those intractable problems remains
an interesting open question.

\begin{table}[tbp]
  \caption{The time complexity for optimizing main aesthetic criteria of balloon drawing.}% when all angle degrees are given in ascending order.}
  \label{tb-results}
  \begin{center}{\scriptsize  \tabcolsep=5pt
\begin{tabular}{clcccc}
  \hline
  \multicolumn{2}{c}{case}            & \multicolumn{1}{c}{aesthetic criterion} & denotation  & complexity      & reference \\
  \hline
  \multirow{2}{0.5cm}{C1:} & \fpara{0}{-8}{0}{14} \ unordered trees with \ \ \ \ \ \ \ \ \bpara{0}{-8}{0}{14}         & angular resolution  & RE1         &  $O(n \log n)$   & \cite{LY2007} \\%
                           & \ \ \ \ \ \ \ \ \ \ \ \ \ \ \ even sub-wedges             & aspect ratio        & RA1         & $O(n \log n)$   & \cite{LY2007} \\%
                           &                                 & standard deviation  & DE1         & $O(n \log n)$$^*$ & [Thm \ref{thm-DE1}]  \\
  \hline
  \multirow{2}{0.5cm}{C2:} & \fpara{0}{-8}{0}{14} \ semi-ordered  trees with \ \ \ \ \bpara{0}{-8}{0}{14}    & angular resolution  & RE2         &  $O(n)$$^*$        & [Thm \ref{thm-RE2}] \\%
                           & \ \ flexible uneven sub-wedges  & aspect ratio        & RA2         & $O(n^2)$$^*$ & [Thm \ref{thm-RA2}] \\%
                           &                                 & standard deviation  & DE2         & $O(n)$$^*$        & [Thm \ref{thm-DE2}]  \\
  \hline
  \multirow{2}{0.5cm}{C3:} & \fpara{0}{-8}{0}{14} \ unordered trees with \ \ \ \ \ \ \ \ \bpara{0}{-8}{0}{14}           & angular resolution  & RE3         & $O(n \log n)$$^*$ & [Thm \ref{thm-RE3}] \\%
                           & \ \ fixed \ \ \ uneven sub-wedges     & aspect ratio        & RA3         & NPC$^*$         & [Thm \ref{thm-RA3}, \ref{thm-RA3-approx}] \\%
                           &                                 & standard deviation  & DE3         & NPC$^*$         & [Thm \ref{thm-DE3}, \ref{thm-DE3-approx2}] \\%
  \hline
  \multirow{2}{0.5cm}{C4:} & \fpara{0}{-8}{0}{14} \ unordered trees with \ \ \ \ \ \ \ \ \bpara{0}{-8}{0}{14}           & angular resolution  & RE4         & $O(n \log n)$$^*$ & [Thm \ref{thm-RE4}] \\%
                           & \ \ flexible uneven sub-wedges  & aspect ratio        & RA4         & NPC$^*$         & [Thm \ref{thm-RA4}, \ref{thm-RA4-approx}] \\%
                           &                                 & standard deviation  & DE4         & NPC$^*$         & [Thm \ref{thm-DE4}, \ref{thm-DE4-approx2}] \\%
  \hline
\end{tabular}
\par\smallskip\parbox{.9\textwidth}{$^*$The marked entries are
the contributions of this paper. Note that earlier results reported
in \cite{LY2007} for  RE2 and RA2 require $O(n^{2.5})$ time.} }
%In addition, this paper also presents approximation
%algorithms for RA3, RA4, DE3, and DE4.% with constant approximation ratio.
\end{center}
\end{table}

%\bigskip

%{\bf Some of the proofs can be found in the Appendix, which may be
%read by the PC members at their discretion.}

% organizing
The rest of the paper is organized as follows. Some preliminaries
are given in Section~\ref{sec:preliminary}. The problems for cases
C1 and C2 are investigated in Section~\ref{sec:C1C2}. The problems
for cases C3 and C4 are investigated in Section~\ref{sec:C3C4}. The
approximation algorithms for those intractable problems are given in
Section~\ref{sec:approx}. Finally, a conclusion is given in
Section~\ref{sec:conclusion}.

\section{Preliminaries}
\label{sec:preliminary}

In this section, we first introduce two conventional models of
balloon drawing, then define our concerned problems, and finally
introduce some related problems.

%A \emph{drawing} of a graph $G$ (=$(V,E)$
%where $V$ is the set of vertics and $E \subseteq V \times V$ is
%the set of edges) on the plane is a mapping $D$ from  $V$ to
%$\mathbb{R}^{2}$, where $\mathbb{R}$ is the set of real numbers.
%That is, each node $v$ is placed at point $D(v)$ on the plane, and
%each edge $(u, v)$ is displayed as a line segment connecting
%$D(u)$ and $D(v)$.

%A rooted tree where the order of the subtrees is significant and
%fixed is called an \emph{ordered tree}; otherwise, it is called
%\emph{unordered}.
%Unless stated otherwise, trees are assumed to be
%rooted and unordered for the rest of the paper.
%A {\em star graph} with $n+1$ nodes is a rooted tree in which the
%root node is of  degree $n$ and the others  are of  degree one.

%One of  the objectives of this paper is, given a rooted unordered
%tree $G$, finding the optimum angular resolution and the optimum
%aspect ratio with respect to  the so-called {\em balloon drawing}
%of $G$ on the plane.

\subsection{Two Models of Balloon Drawing}

% top-down method of balloon drawing
There exist two models in the literature for generating {\em balloon
drawings} of trees. %, as shown in Figures \ref{fgExperiments}(a) and \ref{fgExperiments}(b), respectively.
Given a node $v$, let $r(v)$
be the radius of the drawing circle centered at $v$. %As shown in Figure \ref{fgExperiments}(a),
If we require that $r(v)$ = $r(w)$ for arbitrary two nodes $v$ and
$w$ that are of the same depth from the root of the tree, then such
a drawing is called a balloon drawing under the {\em fractal model}
\cite{KY1993}. %The word \emph{fractal}, proposed by B. Mandelbrot
%\cite{Mandelbrot1982}, captures the essence of the presence of a
%variety of self-similar objects.
The fractal drawing of a tree structure means that if $r_m$ and
$r_{m-1}$ are the lengths of edges at depths $m$ and $m-1$,
respectively,  then
$r_{m} = \gamma \times r_{m-1}$ %\label{eqnFractalNaturalLength}
where $\gamma$ is the predefined ratio ($0 < \gamma < 1$) associated
with the drawing under the fractal model.  Clearly, edges at the
same depth have the same length in a fractal drawing.
%By setting
%every angle incident to common node to be of equivalent size,
%this model can generate fractal balloon drawing in linear time.

% bottom-up method of balloon drawing
Unlike the fractal model, the {\em subtrees with nonuniform sizes}
(abbreviated as {\em SNS}) model \cite{CK1995,JP1998} allows
subtrees associated with  the same parent to reside in circles of
different sizes (see also Figure~\ref{fgIllustration}(a)), and hence
the drawing based on this model often results in a clearer display
on large subtrees than that under the fractal model.
%Their results are as follows:
%\begin{theorem}[see \cite{CK1995,JP1998}]\label{thmSNS}
Given a rooted ordered tree $T$ with $n$ nodes, a balloon drawing
under the SNS model can be obtained in $O(n)$ time (see
\cite{CK1995,JP1998}) in a bottom-up fashion by computing the edge
length $r$ and the angle $\theta_{i}$ between two adjacent
edges respectively according to %  Equations (\ref{eqnSNSEdgeLen}) and
%(\ref{eqnSNSAngle})
%\begin{eqnarray}
$r = C/(2\pi) \cong
(2\sum\nolimits_{i}R_{i})/(2\pi)$ %\label{eqnSNSEdgeLen}\\
and $\theta_{i} \cong (R_{i} + free\_arc +
R_{i+1})/r$ %\label{eqnSNSAngle}
%\end{eqnarray}
(see Figure~\ref{fgIllustration}(a))
where $r$ % is the length of the edge
%between node $O$ and its child (which
is the radius of the inner circle centered at node $c_0$; $C$ is the
circumference of the inner circle; $R_{i}$ is the radius of the
outer circle enclosing all subtrees of the $i$-th child of $c_0$,
and $R_{O}$ is the radius of the outer circle enclosing all subtrees
of $c_0$; since there exists a gap between $C$ and the sum
of all diameters%in Equation (\ref{eqnSNSEdgeLen})
, we can distribute to every $\theta_{i}$ the gap between them
evenly, which is called a free arc, denoted by $free\_arc$.
%\end{theorem}

Note that the balloon drawing under the SNS model is our so-called
balloon drawing with even sub-wedges. A careful examination reveals
that the area of a balloon drawing with even sub-wedges (generated
by the SNS model) may be reduced by shrinking the free arc between
each pair of subtrees and shortening the radius of each inner circle
in a bottom-up fashion \cite{LY2007}, by which we can obtain a
smaller-area balloon drawing with uneven sub-wedges (e.g., see the
transformation from Figure~\ref{fgIllustration}(a) to
Figure~\ref{fgIllustration}(c)).

\subsection{Notation and Problem Definition}

In what follows, we introduce some notation, used in the rest of
this paper. A {\em circular permutation} $\pi$ is expressed as: $\pi
= \langle\pi_1, \pi_2, ..., \pi_n\rangle$ where for $i = 1, 2, ...,
n$, $\pi_i$ is placed along a circle in a counterclockwise
direction. Note that $\pi_n$ is adjacent to $\pi_1$; $i \oplus 1$
denotes $i + 1 \ (mod \ n)$; $i\ominus 1$ denotes $i - 1 \ (mod \
n)$. Due to the hierarchical nature of trees  and the ways the
aesthetic criteria
(measures) for balloon drawings are defined, an algorithm optimizing a {\em star graph} %(see Figure \ref{fgIllustration}(a))
can be applied repeatedly to a general tree in a bottom-up fashion
\cite{LY2007}, yielding an optimum solution with respect to a given
aesthetic criterion. Thus, it suffices to consider the balloon
drawing of a star graph when we discuss these problems.

A star graph is characterized by a root node $c_0$ together with its
$n$ children $c_{1}, ..., c_{n}$, each of which is the root of a
subtree located entirely in a {\em wedge}, as shown in
Figure~\ref{fgIllustration}(a) (for the even sub-wedge type) and
Figure~\ref{fgUnevenNotation} (for the uneven sub-wedge type).
%(Note that $C_{i}$
%abstracts out details of the subtree rooted at $c_{i}$ in a
%balloon drawing.)
In what follows, we can only see Figure~\ref{fgUnevenNotation}
because the even sub-wedge type can be viewed as a special case of
the uneven sub-wedge type. The ray from $c_{0}$ to $c_{i}$ further
divides the associated wedge into two sub-wedges $SW_{i, 0}$ and
$SW_{i, 1}$ with sizes of
angles $w_{0}(i)$ and $w_{1}(i)$, respectively. % (see Figure \ref{fgIllustration}(a) and Figure \ref{fgUnevenNotation}).
Note that $w_{0}(i)$ and $w_{1}(i)$ need not be equal in general. An
{\em ordering} of $c_{0}$'s children is simply a circular
permutation $\sigma = \langle \sigma_1, \sigma_2, ..., \sigma_n
\rangle$, in which $\sigma_i \in \{1, 2, ..., n\}$ for each $i$.

\begin{figure}[tbp]
\centering \scalebox{0.625}{\includegraphics{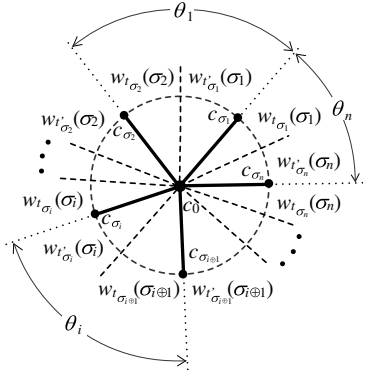}}%
\centering \caption{ Notations used in a balloon drawing of a star
graph with uneven sub-wedges.} \label{fgUnevenNotation}
\end{figure}

There are two dimensions of freedom affecting the quality of a
balloon drawing for a star graph. The first is concerned with the
ordering in which the children of the root node $c_0$ are drawn.
With a given ordering, it is also possible to alter the order of
occurrences of the two sub-wedges associated with each child of the
root. With respect to child $c_{i}$ and its two sub-wedges $SW_{i,
0}$ and $SW_{i, 1}$, we use $t_{i} \in \{0, 1\}$ to denote the index
of the first sub-wedge encountered  in a counterclockwise traversal
of the drawing. For convenience, we let $t'_{i} = 1 - t_{i}$. We
also write $t=(t_1,...,t_n)$ ($t_{i} \in \{0, 1\}, 1 \leq i \leq
n$), which is called the {\em sub-wedge assignment} (or simply {\em
assignment}). As shown in Figure~\ref{fgUnevenNotation}, the
sequence of sub-wedges encountered along the cycle centered at $c_0$
in a counterclockwise direction can be expressed as:
\begin{eqnarray}
\label{E-subWedge-uneven}
\begin{array}{ccccc}
\langle \ \underbrace{w_{t_{\sigma_1}}(\sigma_1),
w_{t'_{\sigma_1}}(\sigma_1)}, &
%\underbrace{w_{t_{\sigma_2}}(\sigma_2),
%w_{t'_{\sigma_2}}(\sigma_2)}, &
..., &
\underbrace{w_{t_{\sigma_i}}(\sigma_i), w_{t'_{\sigma_i}}(\sigma_i)}, & ..., & \underbrace{w_{t_{\sigma_n}}(\sigma_n), w_{t'_{\sigma_n}}(\sigma_n)}\ \rangle.\\
c_{\sigma_1} & %c_{\sigma_2} &
... & c_{\sigma_i} & ... & c_{\sigma_n}
\end{array}
\end{eqnarray}

If $w_0(i) = w_1(i)$ for each $i\in\{1,...,n\}$, then the drawing is
said to be of {\em even sub-wedge type}; otherwise, it is of {\em
uneven sub-wedge type}. %(In the even sub-wedge case, we use $w(i)$
%to denote $w_0(i) (= w_1(i))$.)
As mentioned earlier, the order of the two sub-wedges associated
with a child (along the counterclockwise direction) affects the
quality of a drawing in the uneven sub-wedge case. For the case of
uneven sub-wedge type, if the assignment $t$ is given {\em a
priori},
%$t_1 = ... = t_i = ... = t_n$ for $1 \leq i \leq n$,
then the drawing is said to be of {\em fixed} uneven sub-wedge type;
otherwise, of {\em flexible} uneven sub-wedge type (i.e., $t$ is a
design parameter).

As shown in Figure~\ref{fgUnevenNotation}, with respect to an
ordering $\sigma$ and an assignment $t$ in circular
permutation~(\ref{E-subWedge-uneven}), $c_{\sigma_{i}}$ and
$c_{\sigma_{i \oplus 1}}$, $1 \leq i \leq n$, are neighboring nodes,
and the size of the angle formed by the two adjacent edges
$\overrightarrow{c_{0}c_{\sigma_i}}$ and
$\overrightarrow{c_{0}c_{\sigma_{i \oplus 1}}}$ is
$\theta_i=w_{t_i'}(\sigma_i)+w_{t_{i\oplus 1}}(\sigma_{i\oplus 1})$.
%(In  Figure \ref{fgIllustration}(a) $\theta_{i}=w(\sigma_i)+w(\sigma_{i\oplus
%1})$, assuming the even sub-wedge case. Note that in this case,
%$w(\sigma_i)= w_{t_{\sigma_{i}}}(\sigma_i) =
%w_{t'_{\sigma_{i}}}(\sigma_i)$.)
Hence, the {\em angular resolution} (denoted by
$AngResl_{\sigma,t}$), the {\em aspect ratio} (denoted by
$AspRatio_{\sigma,t}$), and the {\em standard deviation of angles}
(denoted by $StdDev_{\sigma,t}$) can be formulated as {\footnotesize
\begin{eqnarray}
%\label{E-ang}%
&&AngResl_{\sigma,t} = \min_{1 \leq i \leq n} \theta_i
 = \min_{1\leq i\leq n} \{ w_{t_i'}(\sigma_i)+w_{t_{i\oplus 1}}(\sigma_{i\oplus1})
 \}
 ; \nonumber\\
%\label{E-Asp}%
&&AspRatio_{\sigma,t} =   \frac{\max_{1 \leq i \leq n}
\theta_i}{\min_{1 \leq i \leq n} \theta_i} = \frac{\max_{1\leq i\leq
n} \{ w_{t_i'}(\sigma_i)+w_{t_{i\oplus 1}}(\sigma_{i\oplus1})
 \}}{\min_{1\leq i\leq n} \{ w_{t_i'}(\sigma_i)+w_{t_{i\oplus 1}}(\sigma_{i\oplus1})
 \}};\nonumber\\
%\label{E-stdDev1}
&&StdDev_{\sigma,t} =
\sqrt{\frac{\sum_{i=1}^n\theta_i^2}{n}-\left(\frac{\sum_{i=1}^n\theta_i}{n}\right)^2}
% = \sqrt{\frac{\sum_{i=1}^n (w_{t_i'}(\sigma_i)+w_{t_{i\oplus 1}'}(\sigma_{i\oplus 1}))^2}{n}-\left(\frac{\sum_{i=1}^n (w_{t_i'}(\sigma_i)+w_{t_{i\oplus 1}'}(\sigma_{i\oplus 1}))}{n}\right)^2}
\nonumber\\
\label{E-stdDev2} && =
\sqrt{\frac{\sum_{i=1}^n(w_{t_i'}(\sigma_i)^2+w_{t_{i\oplus
1}}(\sigma_{i\oplus 1})^2)}{n}+\frac{2\sum_{i=1}^n
w_{t_i'}(\sigma_i)w_{t_{i\oplus 1}}(\sigma_{i\oplus
1})}{n}-\left(\frac{2\pi}{n}\right)^2}.
\end{eqnarray}
}We observe that the first and third terms inside the square root of
the above equation are constants for any circular permutation
$\sigma$ and assignment $t$, and hence,  the second term inside the
square root is the dominant factor as far as $StdDev_{\sigma,t}$ is
concerned. We denote by $SOP_{\sigma,t}$ the sum of products of
sub-wedges,
 which can be expressed as:
\[%begin{equation} %\label{E-SOP}
SOP_{\sigma,t} = \sum_{i=1}^n w_{t_i'}(\sigma_i)w_{t_{i\oplus 1}}(\sigma_{i\oplus 1}).%\nonumber
\]%end{equation}

We are now in a position to define the RE, RA and DE problems in
Table~\ref{tb-results} for four cases (C1, C2, C3, and C4) in a
precise manner. The four cases depend on whether the circular
permutation $\sigma$ and the assignment $t$  in a balloon drawing
are fixed (i.e., given a priori) or flexible (i.e., design
parameters). For example, case C3 allows an  arbitrary ordering of
the children (i.e., the tree is unordered), but  the relative
positions of the two sub-wedges associated with a child node are
fixed (i.e., flipping is not allowed). The remaining three cases are
easy to understand.

We consider the most flexible case, namely,  C4, for which both
$\sigma$ and $t$ are design parameters, which can be chosen from
 the set $\Sigma$ of all circular permutations of $\{1, ..., n\}$ and
  the set $\mathbb{T}$ of all $n$-bit binary strings, respectively.
  %In what
%follows, we shall investigate the tractability of
The RE and RA problems, respectively,  are concerned with finding
$\sigma$ and $t$ to achieve the following:
\begin{eqnarray}
optAngResl = \max_{\sigma \in \Sigma; t \in \mathbb{T}} \{AngResl_{\sigma,t}\}; %,\nonumber\\
optAspRatio = \min_{\sigma \in \Sigma; t \in \mathbb{T}} \{AspRatio_{\sigma,t}\}\texttt{.}\nonumber%
\end{eqnarray}
The DE problem is concerned with finding $\sigma$ and $t$ to achieve
the following:
\begin{eqnarray}
optStdDev &=& \min_{\sigma \in \Sigma; t \in \mathbb{T}}
\{StdDev_{\sigma,t}\}\mbox{}.\nonumber
% optSOP &=& \min_{\sigma \in \Sigma;
%t \in \mathbb{T}}\{SOP_{\sigma,t}\}.
\end{eqnarray}
As stated earlier, $optStdDev$ is closely related to the SOP
problem, which is concerned with finding $\sigma$ and $t$ to achieve
the following:
\begin{eqnarray}
%optAngResl &=& \max_{\sigma \in \Sigma; t \in \mathbb{T}} \{AngResl_{\sigma,t}\},\\
%optAspRatio &=& \min_{\sigma \in \Sigma; t \in \mathbb{T}} \{AspRatio_{\sigma,t}\}\mbox{, and}\\
%optStdDev &=& \min_{\sigma \in \Sigma; t \in \mathbb{T}}
%\{StdDev_{\sigma,t}\}\mbox{
%or }\\
optSOP &=& \min_{\sigma \in \Sigma; t \in
\mathbb{T}}\{SOP_{\sigma,t}\}.\nonumber
\end{eqnarray}

% BTSP
%
%Our problems are related to the {\em bottleneck traveling
%salesman problem} (BTSP), which is an NP-complete problem
%\cite{GJ1979,GG1964,V2003}. Considering $n$ cities
%$C_1,C_2,...,C_n$ where the distance from $C_i$ to $C_j$ is
%denoted as $c_{ij}$, the BTSP is to find a tour visiting each city
%exactly once and minimize the maximum distance traveled between
%any two adjacent cities on the tour. Denote BP as the BTSP for the
%specific distance matric \cite{GG1964}:
%\begin{equation}
%c_{ij} = \left\{
%\begin{array}{cc}
%  \int_{A_i}^{B_j} f(x) dx &\texttt{ if } B_j \geq A_i\\
%  \int^{A_i}_{B_j} g(x) dx &\texttt{ if } B_j < A_i \\
%\end{array}
%\right.
%\end{equation}
%where $c_{ij}$ is associated with $(A_i, B_j)$; $f(\cdot)$ and
%$g(\cdot)$ are integrable and $f(x), g(x) > 0$. \cite{GG1964} gave
%a polynomial time algorithm for the BP problem restricted to $f(x)
%+ g(x) \geq 0$ for all $x$, but left the general BP open.
%Recently, Vairaktarakis \cite{V2003} proved that the BP is
%strongly NP-complete even when $f(x)=g(x)=1$, i.e.,
%$c_{ij}=|A_i-B_j|$. Moreover, \cite{V2003} applied the similar
%proof approach to prove the {\em cyclic 2-station workforce
%leveling problem }(2SLW) is also NP-complete, which is related to
%our RA and DE problems.

\subsection{Related Problems}

Before deriving our main results, we first recall two problems,
namely,  the {\em two-station assembly line problem} (2SAL) and the
{\em cyclic two-station workforce leveling problem }(2SLW) that are
closely related to our problems of optimizing balloon drawing under
a variety of aesthetic criteria. Consider a serial assembly line
with two stations, say $ST_1$ and $ST_2$, and a set $\mathbb{J} =
\{J_1, J_2, ..., J_n\}$ of $n$ jobs. Each job $J_i = (W_{i1},
W_{i2})$ consists of two tasks processed by the two stations,
respectively, where $W_{i1}$ (resp., $W_{i2}$) is the workforce
requirement at $ST_1$ (resp., $ST_2$). Assume the processing time of
each job at each station is the same, say $\tau/n$. Consider a
circular permutation $\langle J_{\delta_1}, J_{\delta_2}, ...,
J_{\delta_n}\rangle$ of $\mathbb{J}$ where $\delta = \langle
\delta_1, \delta_2, ..., \delta_n\rangle$ is a circular permutation
of $\{1, 2, ..., n\}$. At any time point,  a single station can only
process one job. We also assume that the two stations are always
busy. During the first time range $[0,\tau/n]$, $J_{\delta_1}$ and
$J_{\delta_2}$ are processed by $ST_2$ and $ST_1$, respectively, and
the workforce requirement is $W_{\delta_12}+W_{\delta_21}$.
Similarly, for each $i$, during the time range
$[(i-1)\tau/n,i\tau/n]$, $J_{\delta_i}$ and $J_{\delta_{i\oplus 1}}$
are processed at $ST_2$ and $ST_1$ stations respectively, and the
workforce requirement is $W_{\delta_i 2}+W_{\delta_{i\oplus 1}1}$.

For example, consider $\mathbb{J} = \{J_1, J_2, J_3, J_4\}$ where
$J_1=(2,3)$, $J_2=(1,7)$, $J_3=(6,2)$, and $J_4=(4,2)$. For a
certain circular permutation $\langle J_3, J_2, J_4, J_1 \rangle$ of
$\mathbb{J}$, the workforce requirements for each period of time as
well as the jobs served at the two stations are given in
Figure~\ref{fgWorkforcePlanning}, where the largest workforce
requirement is 11; the range of the workforce requirements among all
the time periods is [3,11].

\begin{figure}[tbp]
\begin{center}
\begin{minipage}{130pt}
\scalebox{0.625}{
\includegraphics{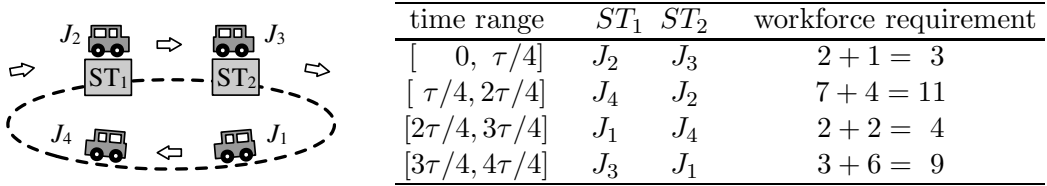}}
\end{minipage}~%
\begin{minipage}{280pt}
\begin{center}{\small  \tabcolsep=3pt
\begin{tabular}{cccc}
  \hline
time range    & \ \ \ $ST_1$ & $ST_2$ &  \ \ \  workforce requirement \\
  \hline
 $[~~~~0,~\tau/4]$     & $J_{2}$ & $J_{3}$ &    $2 + 1 = ~3$ \\
  $[~\tau/4,2\tau/4]$  & $J_{4}$ & $J_{2}$ &   $7 + 4 = 11$ \\
$[2\tau/4,3\tau/4]$ & $J_{1}$ & $J_{4}$ &    $2 + 2 = ~4$\\
  $[3\tau/4,4\tau/4]$ & $J_{3}$ & $J_{1}$ &  $3 + 6 = ~9$\\
  \hline
\end{tabular}}
\end{center}
\end{minipage}
\end{center}
\caption{An example for 2SAL and 2SLW.}\label{fgWorkforcePlanning}
\end{figure}

The $2SAL$ and $2SLW$ problems are defined as follows:
\begin{itemize}
\item {\bf 2SAL}:  Given a set  of $n$ jobs,  find a circular
permutation of the $n$ jobs such that the largest workforce
requirement is minimized.

\item {\bf 2SLW} (decision version): Given a set  of $n$ jobs and
a range $[LB, UB]$ of workforce requirements, decide whether a
circular permutation  exists  such that the workforce requirement
for each time period is between $LB$ and $UB$.
\end{itemize}
It is known that 2SAL is  solvable  in $O(n\log n)$ time
\cite{LV1997}, while 2SLW is NP-complete \cite{V2003}.

\section{Cases C1 (Unordered Trees with Even Sub-Wedges) and C2 (Semi-Ordered Trees with Flexible Uneven Sub-Wedges)}
\label{sec:C1C2}

First of all, we investigate the DE1 problem (SOP1 problem), i.e.,
finding a balloon drawing optimizing $optSOP$ for case C1 (i.e.,
unordered trees with even sub-wedges). In this case, the two
sub-wedges associated with a child node in a star graph are of the
same size. For notational convenience, we order the set of wedge
angles $\{w_0(i)+w_1(i): i = 1, \cdots, n\}$ (note that in this case
$w_0(i) = w_1(i)$ for each $i$) in ascending order as either
\begin{eqnarray}
&m_{1}, m_{2}, \cdots, m_{k-1}, m_{k}, M_{k}, M_{k-1}, \cdots,
M_{2}, M_{1} & \mbox{  if $n=2k$, or}
\label{E-subwedge-C1-even}\\
&m_{1}, m_{2}, \cdots, m_{k-1}, m_{k}, mid, M_{k}, M_{k-1}, \cdots,
M_{2}, M_{1}  &\mbox{  if $n=2k+1$,}\label{E-subwedge-C1-odd}
\end{eqnarray} for some $k$, where
$m_{i}$ (resp., $M_{i}$) is the $i$-th minimum (resp., maximum)
among all, and $mid$ is the median if the number of elements is odd.
Note that the size of each angle between two edges in the drawing
may be one of the forms $(m_a + m_b)/2$, $(m_a + M_b)/2$, $(M_a +
m_b)/2$, or $(M_a + M_b)/2$ for some $a,b \in \{1, \cdots, n\}$, and
hence, there may exist more than one angle with the same value. In
what follows, we are able to solve the DE1 problem by applying
Procedure~\ref{alg:C1}.

\begin{algorithm}{\small
\floatname{algorithm}{Procedure} \caption{$\mbox{\sc
OptBalloonDrawing-DE1}$ } \label{alg:C1}
\textbf{Input:} a star graph $S$ with $n$ child nodes of nonuniform sizes\\
\textbf{Output:} a balloon drawing  of $S$ optimizing standard deviation of angles\\
\begin{algorithmic}[1]
%\STATE  sort the set of the wedge angles (accommodating the  $n$ nonuniform circles) into ascending order as mentioned in Equations~(\ref{E-subwedge-C1-even}) and (\ref{E-subwedge-C1-odd}).
\STATE sort $\{w_0(i)+w_1(i): i = 1, \cdots, n\}$ in ascending order as either Equation~(\ref{E-subwedge-C1-even}), if $n = 2k$, or Equation~(\ref{E-subwedge-C1-odd}), if $n = 2k+1$
\STATE for convenience, let the child node with wedge $m_i$, $mid$ or $M_i$ be also denoted by $m_i$, $mid$ or $M_i$, respectively
\IF {$n = 2k$}
    \IF {$k$ is odd}
        \STATE output $\langle M_{1}, m_{2}, M_{3}, m_{4}, \cdots, M_{k-1}, m_k, M_k, m_{k-1}, \cdots, M_{4}, m_{3}, M_{2}, m_{1} \rangle$
    \ELSE
        \STATE output $\langle M_{1}, m_{2}, M_{3}, m_{4}, \cdots, m_{k-1}, M_k, m_k, M_{k-1}, \cdots, M_{4}, m_{3}, M_{2}, m_{1} \rangle$
    \ENDIF
\ELSE
    \IF {$k$ is odd}
        \STATE output $\langle M_{1}, m_{2}, M_{3}, m_{4}, \cdots, M_{k-1}, m_k, mid, M_k, m_{k-1}, \cdots, M_{4}, m_{3}, M_{2}, m_{1} \rangle$
    \ELSE
        \STATE output $\langle M_{1}, m_{2}, M_{3}, m_{4}, \cdots, m_{k-1}, M_k, mid, m_k, M_{k-1}, \cdots, M_{4}, m_{3}, M_{2}, m_{1} \rangle$
    \ENDIF
\ENDIF
%Then output the ordering of the $n$ children of $S$: \\
%$\langle M_{1}, m_{2}, M_{3}, m_{4}, \cdots, M_{k-1}, m_k, (mid,) M_k, m_{k-1}, \cdots, M_{4}, m_{3}, M_{2}, m_{1} \rangle$, if $k$ is odd;\\%
%$\langle M_{1}, m_{2}, M_{3}, m_{4}, \cdots, m_{k-1}, M_k, (mid,) m_k, M_{k-1}, \cdots, M_{4}, m_{3}, M_{2}, m_{1} \rangle$, if $k$ is even.%
% with the corresponding drawing  illustrated below:
% in Figure \ref{fgProcedure1}.
%$C_{S}$ as the circular permutationin which
%the index is increasing and alternates between $M$ and $m$ from
%the two ends.
%(Note that $M_{1}$ and $m_{1}$ are  adjacent.)
\end{algorithmic}
%\bigskip
%\centering
%\scalebox{0.6}{\includegraphics{picBalloon/fgProcedure1.eps}}
}\end{algorithm}

%We now show that
%
%optimal by the following theorem. Note the case of the star graphs
%with $n+1$ nodes for $n \leq 3$ is trivial.

% main theorem
\begin{thm}
\label{thm-DE1} The DE1 problem is solvable in $O(n \log n)$ time.
\end{thm}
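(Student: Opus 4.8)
The plan is to reduce DE1 to the purely combinatorial problem of minimizing the sum of products of cyclically adjacent wedge sizes, and then to prove that the zigzag arrangement produced by Procedure~\ref{alg:C1} attains this minimum; the running time then follows from the cost of sorting.

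First I would observe that in the even sub-wedge case flipping is vacuous: since $w_0(i)=w_1(i)$ for every $i$, the term $w_{t_i'}(\sigma_i)w_{t_{i\oplus1}}(\sigma_{i\oplus1})$ is independent of the assignment $t$, so only the ordering $\sigma$ matters. Writing $W_j=w_0(j)+w_1(j)$ for the full wedge of child $j$, each angle is $\theta_i=\frac{1}{2}(W_{\sigma_i}+W_{\sigma_{i\oplus1}})$, whence $SOP_{\sigma,t}=\frac{1}{4}\sum_{i=1}^n W_{\sigma_i}W_{\sigma_{i\oplus1}}$. Inspecting Equation~(\ref{E-stdDev2}), the first summand under the radical equals $2\sum_j w_0(j)^2$ and the last equals $(2\pi/n)^2$, both independent of $\sigma$; hence $StdDev_{\sigma,t}$ is a strictly increasing function of $SOP_{\sigma,t}$, and minimizing the standard deviation is \emph{equivalent} to minimizing $\sum_i W_{\sigma_i}W_{\sigma_{i\oplus1}}$ over all circular permutations $\sigma$ of the sorted values in~(\ref{E-subwedge-C1-even})/(\ref{E-subwedge-C1-odd}).

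The heart of the argument is then the combinatorial claim that this circular sum of adjacent products is minimized by the zigzag order output by Procedure~\ref{alg:C1}. I would prove it by an exchange (local-search) argument whose key computational ingredient is that reversing the segment between two cycle edges $(\sigma_i,\sigma_{i\oplus1})$ and $(\sigma_j,\sigma_{j\oplus1})$ changes the objective by exactly $\Delta=(W_{\sigma_j}-W_{\sigma_{i\oplus1}})(W_{\sigma_i}-W_{\sigma_{j\oplus1}})$, a clean product of two differences. Using this factorization I would first establish the structural property that some optimal arrangement places the largest wedge $M_1$ between the two smallest wedges $m_1,m_2$ (any violation admits a move with $\Delta\le0$), and then peel off extremes and iterate so that each large wedge is forced to sit between small ones; the resulting nesting of forced adjacencies is precisely the sequence emitted by Procedure~\ref{alg:C1}. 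The parity cases ($k$ odd versus $k$ even, and the extra median $mid$ when $n=2k+1$) are handled by checking that each stated output realizes exactly this extremal nesting.

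I expect the main obstacle to be upgrading local (exchange) optimality to \emph{global} optimality: a single exchange shows the zigzag cannot be improved locally, but concluding that it is a global minimum requires either a careful induction that strips the current extreme pair without disturbing optimality of the remaining cycle, or an argument that all exchange-stable arrangements share the same cost. Some care is also needed with ties among the $W_j$ (so that ``$\le$'' rather than ``$<$'' suffices throughout) and with the base/boundary of the induction in the odd-$n$ case. Once optimality is settled, the complexity is immediate: sorting the $n$ wedge sizes in Step~1 costs $O(n\log n)$ and dominates, while the subsequent branching builds the output in a single $O(n)$ pass, yielding the claimed $O(n\log n)$ bound.
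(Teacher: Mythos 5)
Your proposal follows essentially the same route as the paper's proof: reduce DE1 to minimizing the cyclic sum of adjacent wedge products (SOP), then prove optimality of the zigzag output of Procedure~\ref{alg:C1} by a segment-reversal exchange argument whose cost change factors as a product of two differences, combined with an induction that successively forces $m_1M_1$, then $M_2\,m_1M_1\,m_2$, and so on — exactly the paper's nested sequences $S_i$. The difficulty you flag (promoting local exchange-stability to global optimality) is precisely what the paper's induction on $S_i$ resolves, and your complexity accounting matches the paper's.
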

\begin{proof}
%By Lemma \ref{LemmaReduction}, we only need to consider the
%problem for the balloon drawing of star graphs with children of
%nonuniform size. In the proof,
In what follows, we show that Procedure~\ref{alg:C1}, which clearly
runs in $O(n \log n)$ time, can be applied to correctly producing
the optimum solution. We only consider an output case in
Procedure~\ref{alg:C1}:
%\begin{equation}
\[ \sigma = \langle M_{1}, m_{2}, M_{3}, m_{4}, ..., M_{k-1}, m_{k}, mid,
M_{k}, m_{k-1}, ..., M_{4}, m_{3}, M_{2}, m_{1} \rangle \]
%\end{equation}
i.e.,   $n=2k+1$ and $k$ is odd%, and  assuming that the sizes of
%angles are all distinct (break ties arbitrarily)
; the remaining cases are similar (in fact, simpler). Note that
$SOP_{\sigma,t} = (\sum_{i=1}^{k-1} M_i m_{i+1} + m_{k} \times mid +
mid \times M_k + \sum_{i=1}^{k-1} m_i M_{i+1} + m_1 M_1)/4$, for
this output case.

%Suppose that $\delta$ is the permutation that witnesses $optSOP$.
We proceed by induction on an integer number $i$, for $i=1$ to $k$,
to prove that, with respect to the SOP measure, no circular
permutations perform better than a certain circular permutation
$\delta$ which contains the sequence
\[
S_i = \left\{
\begin{array}{ll}
m_1 M_1, & \mbox{if $i = 1$;} \\%
M_{i} S_{i-1} m_{i}, & \mbox{if $i$ is even;} \\%
m_{i} S_{i-1} M_{i}, & \mbox{if $i$ is odd.}%
\end{array}
\right. %\mbox{, for }i = 1, 2, \cdots, k.
\] If the above holds, then no circular permutations perform better than a certain circular permutation $\delta$ which
contains sequence $S_k$. That is, no circular permutations perform
better than circular permutation $\delta = \langle S_k, mid \rangle
= \sigma$, as required.

For $i=1$, we show that no circular permutations perform better than
a certain circular permutation $\delta$ which contains sequence $S_1
= m_1 M_1$. Contrarily suppose that there exists a circular
permutation $\delta'$ in which $m_1$ is not adjacent to $M_1$ so
that $SOP_{\delta'} < SOP_\delta$. We assume that $m_1$ (resp.,
$M_1$) is adjacent to $x = m_1 + l_1$ (resp., $y = m_1 + l_2$) in
$\delta'$ where $m_1 \leq x, y \leq M_1$, $x \neq y$, and $l_1, l_2
\geq 0$. W.l.o.g., let $\delta'$ be $\langle x m_1 S' y M_1 S''
\rangle$ where $S' \cup S'' = \{m_2, \cdots, m_n, mid,$ $M_n,
\cdots, M_2\} \setminus \{x, y\}$. Consider circular permutation
$\delta = \langle x y S'^R m_1 M_1 S'' \rangle$ where $S'^R$ is the
reverse of $S'$. Then $SOP_{\delta'}-SOP_{\delta}= (x m_1 + y M_1 -
xy - m_1M_1)/4 = l_2 ( M_1 - m_1 - l_1 )/4 = l_2 ( M_1 - x )/4 \geq
0$, which is a contradiction.

Suppose that no circular permutations perform better than a certain
circular permutation which contains sequence $S_{i-1}$. We show that
no circular permutation perform better than a certain circular
permutation $\delta_i$ which contains sequence $S_i$. In the
following, we only consider the case when $i$ is even (i.e., $S_i =
M_i S_{i-1} m_i$); the other case is similar.

Contrarily suppose that there exists a circular permutation
$\delta_i'$ which perform better than $\delta_i$, i.e.,
$SOP_{\delta_i'} < SOP_{\delta_i}$. By the inductive hypothesis,
$SOP_{\delta_i'} \geq SOP_{\delta_{i-1}}$ for some circular
permutation $\delta_{i-1}$ which contains sequence $S_{i-1}$.
W.l.o.g., suppose that $\delta_{i-1} =\langle S_{i-1} x_1 S' m_i x_2
S'' x_3 M_i S''' x_4 \rangle$ where $m_i \leq x_1, \cdots, x_4 \leq
M_i$ and $S' \cup S'' \cup S''' = \{m_{i+1}, \cdots, m_n, mid,$
$M_n, \cdots, M_{i+1}\} \setminus \{ x_1, \cdots, x_4\}$; the other
cases are similar. Assume $x_1 = m_i + l_1$, $\cdots$, $x_4 = m_i +
l_4$ where $l_1, \cdots, l_4 \geq 0$. Let $M_i = m_i + l_5$ where
$l_5 \geq l_j$ for each $j \in \{1, \cdots, 4\}$. Consider $\delta_i
=\langle S_{i-1} m_i S'^R x_1 x_2 S'' x_3 x_4 S'''^R M_i \rangle$.
Then $SOP_{\delta_{i-1}}-SOP_{\delta_{i}} = (M_{i-1} x_1 + m_i x_2 +
x_3 M_i + x_4 m_{i-1} - M_{i-1} m_i - x_1 x_2 - x_3 x_4 - M_i
m_{i-1})/4 = l_1 ( M_{i-1} - m_i - l_2)/4 + ( m_{i-1} - m_i - l_3 )
(l_4 - l_5)/4 = l_1 ( M_{i-1} - x_2 )/4 + ( m_{i-1} - x_3 ) (l_4 -
l_5)/4 \geq 0$. Hence, $SOP_{\delta_i'} \geq SOP_{\delta_{i-1}} \geq
SOP_{\delta_{i}}$, which is a contradiction.
\end{proof}

Now consider case C2 (semi-ordered trees with flexible uneven angles). In
this case, the ordering of children of the root, $\sigma =
(1,2,\cdots,n)$, is fixed, and only the assignment of $t=(t_1,
\cdots, t_n)$ needs to be specified. Our solutions for RE2, RA2 and
DE2 are based on dynamic programming approaches. Those results are
given as follows:

\begin{thm}
\label{thm-RE2}%
The RE2 problem can be solved in $O(n)$ time.
\end{thm}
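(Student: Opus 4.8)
The plan is to reduce the problem to a single star graph and then solve the resulting max--min optimization by a linear-time bottleneck dynamic program over the cycle of children.

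First I would invoke the bottom-up reduction stated in Section~\ref{sec:preliminary}: since the angular resolution of a drawing is the minimum angular resolution taken over all nodes, and since for a semi-ordered tree flipping the two sub-wedges of a node does not alter the fixed circular ordering of its children, an optimal drawing is obtained by optimizing each star graph independently in a bottom-up fashion. Hence it suffices to maximize $AngResl_{\sigma,t}$ for a single star with the fixed ordering $\sigma=\langle 1,2,\dots,n\rangle$. The only remaining freedom is the assignment $t=(t_1,\dots,t_n)\in\{0,1\}^n$, and by~(\ref{E-subWedge-uneven}) the $i$-th angle $\theta_i=w_{t_i'}(i)+w_{t_{i\oplus1}}(i\oplus1)$ depends only on the two consecutive choices $t_i$ and $t_{i\oplus1}$. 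Thus the task is exactly $\max_{t\in\{0,1\}^n}\min_{1\le i\le n}\theta_i(t_i,t_{i\oplus1})$, a bottleneck (max--min) optimization over a cyclic chain of binary variables in which each term couples two neighbours.

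The key step is a bottleneck dynamic program. The cyclic term $\theta_n(t_n,t_1)$ prevents a naive left-to-right sweep, so I would break the cycle by fixing $t_1=c$ for $c\in\{0,1\}$ and running the sweep twice. For a fixed $c$, define $f_i(b)$ to be the largest value of $\min\{\theta_1,\dots,\theta_{i-1}\}$ achievable over all choices of $t_2,\dots,t_{i-1}$ subject to $t_1=c$ and $t_i=b$. The base case is $f_2(b)=\theta_1(c,b)$, and the recurrence is
\[
f_{i+1}(b')=\max_{b\in\{0,1\}}\min\bigl\{\,f_i(b),\ \theta_i(b,b')\,\bigr\},\qquad i=2,\dots,n-1 .
\]
After $f_n$ is computed I would close the cycle with the wrap-around angle $\theta_n(b,c)$, giving the optimum for this $c$ as $\max_{b\in\{0,1\}}\min\{f_n(b),\theta_n(b,c)\}$; the overall optimum $optAngResl$ is the larger of the two values obtained for $c=0$ and $c=1$, and an optimal $t$ is recovered by storing the maximizing predecessor at each step.

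Correctness rests on the optimal-substructure property of the min operator: because every future angle depends on the past only through the current state $t_i=b$, retaining for each $b$ only the partial assignment with the largest min-so-far never discards an optimal completion, since $\min\{\mu,F\}\ge\min\{\mu',F\}$ whenever $\mu\ge\mu'$. Each of the two sweeps visits $n$ nodes with $O(1)$ states and transitions per node, so the star is processed in $O(n)$ time and, by the bottom-up reduction, the whole tree in $O(n)$ time. I expect the main obstacle to be purely in the justification rather than the computation: namely arguing that breaking the single cyclic constraint by enumerating $t_1$ is sound, and that the bottleneck DP is exact despite the pruning, while also confirming that the star-by-star reduction preserves global optimality for the semi-ordered, flexible-uneven case.
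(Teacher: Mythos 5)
Your proposal is correct and follows essentially the same route as the paper: the paper's table $f_i(w_{t_1}(1), w_{t'_i}(i))$ is precisely your DP state indexed by the fixed choice $t_1=c$ and the current choice $t_i=b$, with the same recurrence, the same wrap-around term $w_{t_1}(1)+w_{t'_n}(n)$ closing the cycle, and the same constant-size state space giving $O(n)$ time. Your explicit monotonicity justification of the pruning step is a point the paper leaves as ``easy to see,'' but it is not a different argument.
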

\begin{proof}
%For the RE1 problem, our idea is as follows.
W.l.o.g., assume $\sigma = (1,2,...,n)$. Recall from
Equation~(\ref{E-subWedge-uneven}) that if $t=(t_{1}, ..., t_{n})$
is the assignment of sub-wedges, then the sequence of  sub-wedges
encountered in a counterclockwise direction is $\langle w_{t_1}(1),
w_{t_1'}(1), w_{t_2}(2), w_{t'_2}(2), \cdots,$ $w_{t_n}(n),
w_{t'_n}(n) \rangle$. We define $f_i (w_{t_1}(1), w_{t'_i}(i))$ as
follows:
\[
 \max_{t_{j} \in \{0,1\}, 2 \leq j \leq i-1} \{\min\{
(w_{t'_1}(1)+w_{t_2}(2)), (w_{t'_2}(2)+w_{t_3}(3)), ...,
(w_{t'_{i-1}}(i-1)+w_{t_i}(i)) \}\}.
\]
That is, the solution maximizes the minimum sum of adjacent
sub-wedge pairs for the first $i$ children, given $w_{t_1}(1)$ and
$w_{t'_i}(i)$ as the outer sub-wedges of first child and $i$-th
child, respectively. Notice that $w_{t'_{i}}(i)+w_{1}(1)$ is not
included in calculating $f_i (w_{t_1}(1), w_{t'_i}(i))$, meaning
that the first child is not considered to be adjacent to the $i$-th
child. We can observe that $f_i (w_{t_1}(1), w_{t'_{i}}(i))$ can be
formulated as the following dynamic programming formula:
\[ f_i (w_{t_1}(1), w_{t'_{i}}(i)) =
\max_{t_{i-1} \in \{0,1\}} \{ \min \{ f_{i-1} (w_{t_{1}}(1),
w_{t'_{i-1}}(i-1)), \ \ w_{t'_{i-1}}(i-1) + w_{t_i}(i) \} \}.
\]
Finally,  we have:
\[
 optAngResl = \max_{t_{1}, t'_{n} \in \{0,1\}}  \{ \min \{ f_n (w_{t_{1}}(1),
 w_{t'_{n}}(n)), \ \ w_{t_{1}}(1) + w_{t'_{n}}(n) \}
     \}.
\]
It is easy to see that the above algorithm gives the correct answer
and runs in linear time.
\end{proof}

%\begin{theorem}
%\label{thm-RE2}%
%The RE2 problem can be solved in $O(n)$ time.
%\end{theorem}

%Using a proof technique similar to the above, we are able to show
%the following result, whose proof is in Appendix.
\begin{thm}
\label{thm-RA2}%
The RA2 problem can be solved in $O(n^2)$ time.
\end{thm}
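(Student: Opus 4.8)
The plan is to reduce aspect-ratio minimization to $O(n)$ instances of a ``minimize the maximum angle subject to a fixed lower bound'' problem, each solvable by a dynamic program in the spirit of Theorem~\ref{thm-RE2}. Since $\sigma=(1,2,\dots,n)$ is fixed, only the assignment $t$ is free, and every angle $\theta_i = w_{t_i'}(i)+w_{t_{i\oplus1}}(i\oplus1)$ takes one of the four values obtained by pairing a sub-wedge of child $i$ with a sub-wedge of child $i\oplus1$. Hence the set $V$ of all values any $\theta_i$ can assume has size at most $4n=O(n)$, and in particular the smallest angle of an optimal drawing is some $\ell^{\ast}\in V$. First I would build $V$ by scanning the $n$ edges.

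For a fixed candidate lower bound $\ell\in V$, define $u(\ell)$ to be the minimum, over all assignments $t$ with every $\theta_i\ge\ell$, of the largest angle $\max_i\theta_i$ (and $u(\ell)=\infty$ if no such assignment exists). I would compute $u(\ell)$ by a dynamic program mirroring the one in Theorem~\ref{thm-RE2}: fix $t_1\in\{0,1\}$ (so that the wrap-around angle $\theta_n=w_{t_n'}(n)+w_{t_1}(1)$ is well specified), process children $1,2,\dots,n$ in order, and let $g_i(t_i)$ be the least possible value of $\max\{\theta_1,\dots,\theta_{i-1}\}$ over assignments of $t_2,\dots,t_{i-1}$ keeping each of these angles $\ge\ell$, given the orientations $t_1$ and $t_i$. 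The recurrence $g_i(t_i)=\min_{t_{i-1}\in\{0,1\}}\{\max\{g_{i-1}(t_{i-1}),\theta_{i-1}\}:\theta_{i-1}\ge\ell\}$ costs $O(1)$ per state, and after reaching $g_n$ I would fold in the wrap-around edge, setting $u(\ell)=\min_{t_1,t_n}\{\max\{g_n(t_n),\theta_n\}:\theta_n\ge\ell\}$. Each $\ell$ thus costs $O(n)$, for $O(n^2)$ overall.

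Correctness hinges on the claim that $optAspRatio=\min_{\ell\in V} u(\ell)/\ell$, which I would prove by two inequalities. For the ``$\le$'' direction, the assignment attaining $u(\ell)$ has all angles in $[\ell,u(\ell)]$, so its true smallest angle is at least $\ell$ and its aspect ratio is at most $u(\ell)/\ell$; hence $optAspRatio\le u(\ell)/\ell$ for every $\ell$. For the ``$\ge$'' direction, take an optimal drawing with smallest angle $\ell^{\ast}\in V$ and largest angle $u^{\ast}$; this drawing certifies that all angles can be kept $\ge\ell^{\ast}$ with maximum $u^{\ast}$, so $u(\ell^{\ast})\le u^{\ast}$, whence $\min_{\ell} u(\ell)/\ell\le u(\ell^{\ast})/\ell^{\ast}\le u^{\ast}/\ell^{\ast}=optAspRatio$.

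The one subtlety to watch — and the step most likely to trip up a careless argument — is that in the solution realizing $u(\ell)$ the true minimum angle may strictly exceed the imposed bound $\ell$. This is harmless: a larger true minimum only makes the genuine aspect ratio $u(\ell)/(\text{true min})$ no larger than the reported $u(\ell)/\ell$, so the ``$\le$'' direction is never violated, while the ``$\ge$'' direction is secured precisely because the optimal denominator $\ell^{\ast}$ is itself a member of $V$. The remaining care is the bookkeeping for the circular (wrap-around) angle, handled by fixing $t_1$ and running the two resulting dynamic programs, exactly as in the proof of Theorem~\ref{thm-RE2}.
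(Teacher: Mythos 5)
Your proposal is correct and follows essentially the same strategy as the paper's proof: enumerate the $O(n)$ candidate values that an angle can take and, for each candidate, run an $O(n)$ dynamic program in the style of Theorem~\ref{thm-RE2} to optimize the other extreme, giving $O(n^2)$ overall. The only difference is that you fix a candidate \emph{smallest} angle and minimize the maximum, whereas the paper fixes a candidate \emph{largest} angle and maximizes the minimum; your write-up is, if anything, more explicit about the correctness argument and the boundary subtlety.
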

\begin{proof}
Since only flipping subwedges is allowed in this case, $w_0(i)$ and $w_1(i)$ can be the neighbors of $w_0(i\oplus 1)$ and $w_1(i\oplus 1)$ for each $i \in \{ 1, \cdots, n \}$, resulting in four possible angles, i.e., $w_0(i) + w_0(i\oplus 1)$, $w_0(i) + w_1(i\oplus 1)$, $w_1(i) + w_0(i\oplus 1)$, $w_1(i) + w_1(i\oplus 1)$.
That is, $w_0(1)$ and $w_1(1)$ can be neighbored with $w_0(2)$ and $w_1(2)$; $w_0(2)$ and $w_1(2)$ can be neighbored with $w_0(3)$ and $w_1(3)$; $\cdots$ ; $w_0(n)$ and $w_1(n)$ can be neighbored with $w_0(1)$ and $w_1(1)$.
Hence, there are $O(4n)$ possible angles in total for a given sequence of
sub-wedges. %We sort the $O(8n)$ angles, and then
%Each iteration during the binary search of these angles
We assume the angle $x+y$ formed by each pair $(x,y)$ of sub-wedges
to be the `largest' angle in a drawing. Then by using the dynamic
programming approach of Theorem~\ref{thm-RE2} in $O(n)$ time, we can
obtain the smallest angle $f_n(x,y)$ in the drawing, and hence the
aspect ratio for this drawing is $(x+y)/f_n(x,y)$.
%Each iteration retains the smallest aspect ratio up to now.
Then $optApsRatio$ can be obtained after considering %the binary search of
all the $O(4n)$ possible angles, so the time complexity is $O ( 4n
\times n) = O(n^2)$.
\end{proof}

Note that the use of dynamic programming allows us to reduce the
running time of RE2 and RA2 from $O(n^{2.5})$ in \cite{LY2007} to
$O(n)$ and $O(n^2)$, respectively.

\begin{thm}
\label{thm-DE2}%
The DE2 problem can be solved in $O(n)$ time.
\end{thm}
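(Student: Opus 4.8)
The plan is to reduce DE2 to minimizing the sum of products $SOP_{\sigma,t}$ and then solve the latter by a cyclic dynamic program in linear time. Since the ordering $\sigma = (1,2,\ldots,n)$ is fixed in case C2, the only design parameter is the assignment $t=(t_1,\ldots,t_n)$. As already observed after Equation~(\ref{E-stdDev2}), the first and third terms under the square root of $StdDev_{\sigma,t}$ do not depend on $t$, so minimizing $StdDev_{\sigma,t}$ is equivalent to minimizing $SOP_{t} = \sum_{i=1}^n w_{t_i'}(i)\,w_{t_{i\oplus 1}}(i\oplus 1)$; once $optSOP$ is found, $optStdDev$ follows in constant time. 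Thus it suffices to compute $optSOP$ in $O(n)$ time.

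First I would note that each summand $w_{t_i'}(i)\,w_{t_{i\oplus 1}}(i\oplus 1)$ couples only the two consecutive assignments $t_i$ and $t_{i\oplus 1}$: the factor $w_{t_i'}(i)$ is determined by $t_i$ and the factor $w_{t_{i\oplus 1}}(i\oplus 1)$ by $t_{i\oplus 1}$. This chain structure is exactly what a left-to-right dynamic program exploits. For a fixed value of $t_1$, I would define
\[
g_i(t_i) = \min_{t_2,\ldots,t_{i-1}\in\{0,1\}}\ \sum_{j=1}^{i-1} w_{t_j'}(j)\,w_{t_{j+1}}(j+1),
\]
the minimum cost of the first $i-1$ product terms given the prescribed $t_1$ and $t_i$, with base case $g_1(t_1)=0$ and transition
\[
g_i(t_i) = \min_{t_{i-1}\in\{0,1\}} \bigl\{ g_{i-1}(t_{i-1}) + w_{t_{i-1}'}(i-1)\,w_{t_i}(i) \bigr\}.
\]
Each of the two states $t_i\in\{0,1\}$ is evaluated in constant time, so one left-to-right pass costs $O(n)$.

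The hard part will be the cyclic coupling, since the last term $w_{t_n'}(n)\,w_{t_1}(1)$ wraps child $n$ back to child $1$ and prevents a single acyclic pass from being correct. I would resolve this by enumerating the two possible values of $t_1$ at the outset; for each fixed $t_1$ the remaining terms form a linear chain solved by the recurrence above, and I would close the cycle by adding the wrap-around term and minimizing over $t_n$, setting $SOP(t_1) = \min_{t_n\in\{0,1\}}\{ g_n(t_n) + w_{t_n'}(n)\,w_{t_1}(1)\}$ and finally $optSOP = \min_{t_1\in\{0,1\}} SOP(t_1)$. Since fixing $t_1$ incurs only a constant-factor overhead and each pass is linear, the overall running time is $O(n)$; correctness follows from the standard optimal-substructure argument that $g_i$ depends only on $g_{i-1}$.
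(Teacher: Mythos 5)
Your proposal is correct and follows essentially the same route as the paper's proof: both reduce DE2 to minimizing $SOP$ (the only $t$-dependent term under the square root), then run a two-state dynamic program along the chain of children, handling the cyclic wrap-around term by conditioning on $t_1$ and closing the cycle with a final minimization over $t_n$. The paper's $g_i(w_{t_1}(1), w_{t_i'}(i))$ is exactly your $g_i(t_i)$ with $t_1$ fixed, so the two arguments coincide up to notation.
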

\begin{proof}
%W.l.o.g., assume $\sigma = (1,2,\cdots,n)$. Recall from
%Equation~(\ref{E-subWedge-uneven}) that if $t=(t_{1}, \cdots,
%t_{n})$ is the assignment of sub-wedges, then the list of sub-wedges
%encountered in a counter-clockwise direction is $t_1 t'_{1} t_2
%t'_{2} \cdots t_n t'_{n}$.
Similar to the proof in Theorem~\ref{thm-RE2}, we define
\begin{eqnarray}
g_i (w_{t_1}(1), w_{t'_i}(i)) = \min_{t_{j} \in \{0,1\}, 2 \leq j
\leq i-1} \{ w_{t'_1}(1)\times w_{t_2}(2) + w_{t'_2}(2)\times
w_{t_3}(3) + \nonumber\\
\cdots + w_{t'_{i-1}}(i-1)\times w_{t_i}(i) \}, \nonumber
\end{eqnarray}
%That is, the solution minimizing the sum of product of adjacent
%sub-wedge pairs for the first $i$ children, given $w_{t_1}(1)$ and
%$w_{t'_i}(i)$ as the outer sub-wedges of first child and $i$-th
%child, respectively.
%Notice that $w_{t'_{i}}(i)+w_{1}(1)$ is not
%included in calculating $f_i (w_{t_1}(1), w_{t'_i}(i))$, meaning
%that the first child is not considered to be adjacent to the $i$-th
%child.
%We can observe that $g_i (w_{t_1}(1), w_{t'_{i}}(i))$
which can be formulated as the following dynamic programming
formula:
\[ g_i (w_{t_1}(1), w_{t'_{i}}(i)) =
\min_{t_{i-1} \in \{0,1\}} \{ g_{i-1} (w_{t_{1}}(1) ,
w_{t'_{i-1}}(i-1)) \ + \ w_{t'_{i-1}}(i-1) \times w_{t_i}(i) \}.
\]
Then,  we have
\[
 optSOP = \min_{t_{1}, t'_{n} \in \{0,1\}}  \{  g_n (w_{t_{1}}(1),
 w_{t'_{n}}(n)) \ + \ w_{t_{1}}(1) \times w_{t'_{n}}(n)
     \}.
\]
Finally, by Equation~(\ref{E-stdDev2}), the solution of the DE2 problem can be obtained as follows:
\[
optStdDev = \sqrt{\frac{\sum_{i=1}^n(w_{t_i'}(\sigma_i)^2+w_{t_{i\oplus
1}}(\sigma_{i\oplus 1})^2)}{n}+optSOP-\left(\frac{2\pi}{n}\right)^2}.
\]
Note that the first and third terms inside the square root of the above equation are constants.
%It is easy to see that the above algorithm gives the correct answer
%and runs in linear time.
\end{proof}

\section{Cases C3 and C4 (Unordered Trees with Fixed/Flexible Uneven Sub-Wedges)}
\label{sec:C3C4}

%W.l.o.g., we assume that the sizes of sub-wedges are all distinct (ties are broken arbitrarily).
In this section, we consider cases C3 and C4 (unordered trees with
fixed/flexible {\em uneven} sub-wedges). For notational convenience,
we order all the sub-wedges $\{w_0(1), w_1(1), \cdots, w_0(n),
w_1(n)\}$ in Equation~(\ref{E-subWedge-uneven}) in ascending order
as
\begin{equation}\label{E-C3C4-subwedges}
m_{1}, m_{2}, ..., m_{n-1}, m_{n}, M_{n}, M_{n-1}, ..., M_{2},
M_{1}\nonumber
\end{equation}
where $m_{i}$ (resp., $M_{i}$) is the $i$-th minimum (resp.,
maximum) among all. That is, $c_i = (w_{t_i}(i), w_{t_i'}(i))$ for
$i = 1,...,n$ in Equation~(\ref{E-subWedge-uneven}) may be one of
the forms $(m_j, m_k)$, $(m_j, M_k)$, $(M_j, m_k)$, or $(M_j, M_k)$
for some $j, k \in \{1,...,n\}$. For convenience, each $m_{i}$
(resp., $M_{i}$) is said a type-$m$ (resp., type-$M$) sub-wedge.

For cases C3 and C4, we consider a bipartite graph $G = (V, U)$ and
a function $\phi: V \cup U \rightarrow \mathbb{R}$ in which
\begin{itemize}
  \item for case C3, $\phi(V) = \{w_{t_i}(i): i = 1, \cdots, n\}$, $\phi(U) = \{w_{t_i'}(i): i = 1, \cdots, n\}$;
  for case C4, $\phi(V) = \{M_1, \cdots, M_n\}$, $\phi(U) = \{m_1, \cdots, m_n\}$;

  \item the cost of each edge $(v,u)$ is $c(v,u) = \phi(v)+\phi(u)$ for RE, RA and DE problems; $c(v,u) = \phi(v) \times \phi(u)$ for SOP
  problem; the cost of a matching $N$ for $V \times U$ is
  $c(N) = \sum_{(v,u) \in N} c(v,u)$.
\end{itemize}
Note that, for convenience, each node in $V \cup U$
is also denoted by its $\phi$ function value.

In case C3 (unordered tree with fixed uneven sub-wedges), for each
$i = 1, 2, ..., n$, sub-wedge $w_{t_{i}}(i)$ in $V$ must be adjacent
to (matched with) sub-wedge $w_{t_j'}(j)$ for some $j \in \{1, 2,
..., n\}$ in $U$ in any solution of our concerned problems, and
hence the optimal solution must be a perfect matching $N$ for $V
\times U = \{w_{t_i}(i): i = 1, ..., n\} \times \{w_{t_i'}(i): i =
1, ..., n\}$.

In case C4 (unordered tree with flexible uneven sub-wedges), we have
the following observation.
\begin{observation}\label{obs:MmSolution}
For the RE4, RA4, DE4 or SOP4 problem, there must exist an optimal
solution in which each type-$m$ sub-wedge is adjacent to (matched
with) a certain type-$M$
sub-wedge.% (i.e., each type-$M$ sub-wedge is not adjacent to other
%type-$M$ sub-wedge).
\end{observation}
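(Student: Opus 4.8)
The plan is to prove the observation by a local exchange argument: starting from any optimal drawing, I would repeatedly replace one $m$-$m$ angle together with one $M$-$M$ angle by two $m$-$M$ angles, arguing that each such swap neither worsens the objective (for any of RE4, RA4, DE4, SOP4) nor destroys feasibility, until no $m$-$m$ angle remains.

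First I would recast a drawing as a matching on the $2n$ sub-wedges. The two sub-wedges of each child are forced to be cyclically adjacent, so they form a \emph{fixed} perfect matching $P = \{(w_0(i), w_1(i)) : 1 \le i \le n\}$; a drawing then corresponds to choosing a second perfect matching $A$ (the \emph{angle} matching, whose $n$ edges are exactly the angles of the drawing) such that $P \cup A$ is a single Hamiltonian cycle on the $2n$ sub-wedges. Since both $P$ and $A$ are perfect matchings, this cycle alternates $P$-edges and $A$-edges, a fact I will exploit for feasibility below. A counting step settles the combinatorics: if $a$, $b$, $c$ denote the numbers of $m$-$m$, $M$-$M$, $m$-$M$ angles, then $2a + c = 2b + c = n$, so $a = b$; hence whenever an $m$-$m$ angle exists, an $M$-$M$ angle exists as well.

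Next I would perform the exchange. Given an $m$-$m$ angle $\{m_a, m_b\}$ and an $M$-$M$ angle $\{M_c, M_d\}$, I remove these two $A$-edges and reconnect by the $2$-opt move that reverses the segment lying between them, producing the two new angles $\{m_a, M_c\}$ and $\{m_b, M_d\}$, both of which are $m$-$M$. That the objective does not deteriorate is a routine rearrangement computation: for SOP4/DE4 one has $m_a m_b + M_c M_d - (m_a M_c + m_b M_d) = (m_b - M_c)(m_a - M_d) \ge 0$, since every type-$m$ sub-wedge is at most every type-$M$ sub-wedge; for RE4 and RA4 it suffices to note that both new sums $m_a + M_c$ and $m_b + M_d$ lie in the interval $[\,m_a + m_b,\ M_c + M_d\,]$, so the smallest angle cannot decrease and the largest cannot increase, whence neither the angular resolution nor the aspect ratio worsens. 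Iterating reduces $a$ by one each time and terminates in at most $n/2$ steps with an all-$m$-$M$ optimal solution.

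The step I expect to be the main obstacle is verifying that the swap keeps the drawing \emph{feasible}, i.e., that $P \cup A'$ is again a single Hamiltonian cycle and that each new $A'$-edge joins sub-wedges of two \emph{different} children (an angle may not pair a child's own two sub-wedges). The single-cycle property is precisely why I insist on the segment-reversing reconnection rather than the other cross reconnection, which would split the tour into two cycles; since $P$ is left untouched, $P \cup A'$ stays a single cycle by the standard $2$-opt argument. The ``different children'' requirement is where the alternation of $P$- and $A$-edges pays off: writing the cycle as $v_1, v_2, \ldots, v_{2n}$ with the $A$-edges on the odd–even consecutive pairs and the $P$-edges on the even–odd consecutive pairs, the two endpoints joined by each new edge turn out to have the same index parity (one new edge joins two odd-indexed vertices, the other two even-indexed vertices), whereas every $P$-edge joins an odd-indexed vertex to an even-indexed one. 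Hence no new edge can coincide with a $P$-edge, so no child's two sub-wedges are ever paired as an angle. Once these two feasibility points are secured, the remaining verifications are the elementary inequalities displayed above.
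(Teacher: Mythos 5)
Your proof is correct and follows essentially the same route as the paper: a counting argument showing that $m$-$m$ and $M$-$M$ angles occur in equal numbers, followed by an exchange that pairs them off into $m$-$M$ angles without worsening any of the four objectives. In fact your write-up supplies the details the paper dismisses with ``one can easily verify'' --- the explicit inequality $(m_b-M_c)(m_a-M_d)\ge 0$ for SOP4/DE4, the interval argument for RE4/RA4, and the $2$-opt feasibility check that the reconnection preserves a single Hamiltonian cycle and never pairs a child's own two sub-wedges.
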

The above observation must hold; otherwise, there must exist $k$
pairs of adjacent type-$m$ sub-wedges and $k$ pairs of adjacent
type-$M$ sub-wedges for some $k \geq 1$ in the optimal drawing $D$.
But one can easily verify that any of our concerned aesthetic
criteria of drawing $D$ must be no better than the drawing where
each of the $2k$ type-$m$ sub-wedges is altered to be adjacent to a
certain of the $2k$ type-$M$ sub-wedges in drawing $D$ (i.e., a
drawing in Observation~\ref{obs:MmSolution}). Such an optimal
solution in Observation~\ref{obs:MmSolution} must be a perfect
matching $N$ for $V \times U = \{M_1, ..., M_n\} \times \{m_1, ...,
m_n\}$.

\begin{figure}[tbp]
\centering \scalebox{0.625}{\includegraphics{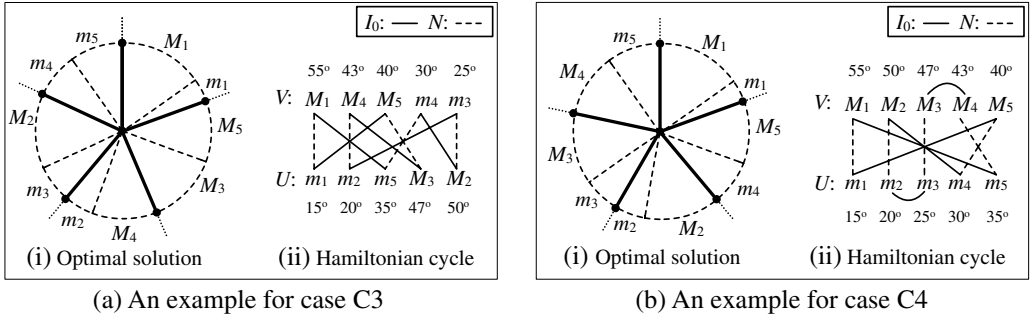}}%
\centering \caption{Two examples for expressing optimal solutions as
bipartite graphs.} \label{fgBipartite}
\end{figure}

If $I_0$ denotes the set of the edges corresponding to each pair
$(w_{t_i}(i),w_{t_i'}(i))$ for $i \in \{1,...,n\}$ (note that
$(w_{t_i}(i),w_{t_i'}(i)) \in$ $V \times U$ in case C3;
$(w_{t_i}(i),w_{t_i'}(i)) \in$ $V \times V \cup V \times U \cup U
\times V \cup U \times U$ in case C4), then $I_0 \cup N$ forms a
Hamiltonian cycle for $V \cup U$. %Such a Hamiltonian cycle is called {\em complementary Hamiltonian cycle}.
Two examples for the same problem instance but under different cases
are shown in Figure~\ref{fgBipartite}, where the edges in $N$
(resp., $I_0$) are represented by dash (resp., solid) lines. As a
result, the RE (resp., RA; DE) problem is equivalent to finding a
matching $N_\opt$ for $V
\times U$ such that $I_0 \cup N_\opt$ is a %complementary
Hamiltonian cycle of $V \cup U$ and the smallest edge cost in
$N_\opt$ is maximal (resp., the ratio of the largest and the
smallest edge costs in $N_\opt$ is minimal; the standard deviation
of the edge costs in $N_\opt$ is minimal).

Before showing our results, we introduce some notation as follows.
We place all the nodes in $V$ (resp., $U$) on the line $y=1$ (resp.,
$y=0$) of the $xy$-plane. %(For convenience, $V$ (resp., $U$) is %called the upper (resp., lower) side.)
%Before proving our results,
%we introduce some notation %similar to \cite{V2003,KS2006}
%as follows:
Given any matching $N$ with two edges $e_1 = (v_a, u_b)$ and $e_2 =
(v_c, u_d)$ in $V \times U$, an {\em exchange} on $e_1$ and $e_2$
returns a matching $N'$ such that $N'=N\otimes (e_1, e_2) = (N
\setminus\{e_1, e_2\}) \cup \{(v_a,u_d),(v_c,u_b)\}$. Denote by
$e_v$ the edge incident to node $v$ in $N$.
%Also Denote by $x(v)$ the
%$x$-coordinate of node $v$.
%Given two edges $e_1 = (v_a, u_b)$ and $e_2 = (v_c, u_d)$ in $V
%\times U$, the edge $e_1$ is said to {\em cross} $e_2$ if either
%$x(v_a)
%> x(v_c)$ and $x(u_b) < x(u_d)$ or $x(v_c)
%> x(v_a)$ and $x(u_d) < x(u_b)$. Note that the two new edges $(v_a,u_d)$ and
%$(v_c,u_b)$ do not cross each other if and only if $e_1$ crosses
%$e_2$.

\begin{thm}
\label{thm-RE3}\label{thm-RE4}%
The RE3 and RE4 problems can be solved in $O(n\log n)$ time.
\end{thm}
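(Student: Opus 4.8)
The plan is to run entirely inside the matching-with-Hamiltonicity reformulation already set up before the theorem: an optimal drawing is a perfect matching $N$ of $V\times U$ for which $I_0\cup N$ is a single Hamiltonian cycle, and RE maximizes the smallest edge cost $\phi(v)+\phi(u)$ over $(v,u)\in N$. Contracting each $I_0$-edge (the two sub-wedges of a common child) to a single vertex turns $N$ into a set of edges that must form a Hamiltonian cycle on the $n$ children, each child offering its two sub-wedges as the two ``ports'' joined to its cyclic neighbours. Thus RE is exactly the task of arranging the children in a circular order so that every angle (the sum of the two adjacent ports) is as large as possible in the min sense. For RE3 the ports keep their fixed roles, so an angle is (outer of one child) $+$ (inner of the next); this is precisely the per-period workforce of the cyclic two-station line, and RE3 is the max--min counterpart of 2SAL. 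Replacing each sub-wedge value $w$ by $\bar w = W-w$ for a common upper bound $W$ turns every angle $w+w'$ into $2W-(\bar w+\bar w')$, so maximizing the minimum angle becomes minimizing the maximum of $\bar w+\bar w'$: a genuine 2SAL instance on the complemented jobs. Invoking the $O(n\log n)$ algorithm for 2SAL~\cite{LV1997} then settles RE3 within the same bound.

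For RE4 the extra freedom is that each child may flip its two sub-wedges, i.e.\ each job may be reversed. Here I would first invoke Observation~\ref{obs:MmSolution} to restrict attention to matchings $N$ pairing each type-$m$ sub-wedge with a type-$M$ sub-wedge, so that $N\subseteq\{M_1,\dots,M_n\}\times\{m_1,\dots,m_n\}$ without loss of generality; since the solution exhibited by that observation is itself a valid single cycle, nothing is lost by this restriction. The induced threshold graph $H_\tau$ (keep $(M_i,m_j)$ iff $M_i+m_j\ge\tau$) has a monotone, staircase structure once the $M_i$ are sorted decreasingly and the $m_j$ increasingly, and the question reduces to whether some perfect $M$--$m$ matching of $H_\tau$ closes up with $I_0$ into a single cycle. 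I would search for the largest feasible $\tau$ over the relevant threshold values and decide feasibility by a greedy that walks the staircase while maintaining connectivity of $I_0\cup N$; exploiting the monotone structure, each feasibility test costs $O(n)$ after one initial sort, so the whole computation stays within $O(n\log n)$. This is the flippable analogue of the 2SAL computation, and Observation~\ref{obs:MmSolution} is exactly what makes the $M$--$m$ restriction legitimate.

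The step I expect to be the crux is the cyclic (Hamiltonicity) constraint, \emph{not} the optimization of the angle values. It is tempting to simply sort and pair largest-with-smallest to maximize the minimum sum, but this ignores connectivity: already for two children $\{10,0\}$ and $\{1,9\}$ the sort-and-pair matching coincides with $I_0$ and hence yields two disjoint $2$-cycles rather than a single $4$-cycle, forcing a strictly smaller optimum. So the constrained optimum can lie well below the unconstrained sort-and-pair value, and the algorithm must interleave the angle optimization with the single-cycle requirement throughout. The monotone threshold structure is what keeps this tractable, and the exchange operation $N\otimes(e_1,e_2)$ is the natural tool for merging two cycles whenever the resulting cross-pairing stays above the current threshold. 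Proving that this greedy/exchange procedure certifies the largest feasible $\tau$ (equivalently, that a threshold-respecting merging exchange exists precisely as long as it can without lowering the optimum), and that it is realizable within the $O(n\log n)$ budget, is the main technical obstacle.
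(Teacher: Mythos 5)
Your RE3 argument is sound and is essentially the paper's: the paper also treats RE3 as the max--min twin of 2SAL and solves it by a slight adaptation of the $O(n\log n)$ algorithm of Lee and Vairaktarakis \cite{LV1997}; your complementation $\bar w = W-w$, which converts max--min into min--max so that 2SAL can be invoked as a black box, is a clean equivalent of that adaptation. Your small example showing that naive sort-and-pair can collide with $I_0$ and produce two $2$-cycles is exactly the situation that Lines~8--15 of the paper's Algorithm~\ref{alg:RE3-RE4} are designed to repair, so you have correctly located where the difficulty lives.

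The gap is in RE4, and it is precisely the step you defer. After invoking Observation~\ref{obs:MmSolution} (which the paper also does), you replace the algorithm by a threshold search: for each $\tau$ keep the staircase graph $H_\tau$ and ``decide feasibility by a greedy that walks the staircase while maintaining connectivity of $I_0\cup N$.'' Two things are missing. First, the feasibility test itself is the entire content of the theorem: deciding whether a perfect matching of $H_\tau$ exists whose union with $I_0$ is a single Hamiltonian cycle is a bottleneck-TSP-type question, and the claim that a staircase-walking greedy decides it correctly in $O(n)$ is exactly what needs proof --- it is not implied by the monotone structure alone, especially in the RE4 setting where the $I_0$ pairs may join two type-$M$ or two type-$m$ sub-wedges, so the instance is no longer a literal 2SAL instance. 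The paper avoids this by running the same 2SAL-style construction (sort, pair $\beta_i$ with $\alpha_i$, then merge cycles by exchanges $N\otimes(e_{\alpha_j},e_{\beta_{j+1}})$ in nonincreasing order of $\alpha_j+\beta_{j+1}$) and importing Proposition~1 and Property~1 of \cite{LV1997}, which certify that no merging exchange drops the minimum edge cost below $optAngResl\le\min_i\{\beta_i+\alpha_i\}$. Second, even granting an $O(n)$ test, you do not say how to search the threshold: the candidate values $M_i+m_j$ number $\Theta(n^2)$, so a binary search needs selection in a row- and column-sorted matrix (or a restriction of the candidates to the $O(n)$ values $\alpha_i+\beta_{i+1}$, which again requires the exchange analysis) to stay within $O(n\log n)$. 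Until the greedy feasibility test is proved correct and the threshold enumeration is pinned down, neither the correctness nor the running time of your RE4 procedure is established.
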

\begin{proof}
(Sketch) First consider the RE3 problem. A careful examination
reveals that the RE3 problem and the 2SAL problem are rather similar
in nature. Hence, Algorithm~\ref{alg:RE3-RE4} (a slight modification
of
% is the dual version of
the algorithm for the 2SAL) \cite{LV1997}
%(mentioned in Section
%\ref{sec:preliminary}),
is sufficient to solve the RE3 problem in  $O(n\log n)$ time. %An example for the output of Algorithm \ref{alg:RE3-RE4} for RE3 is given in Figure \ref{fig:fgBipartite}(a)(ii).

%In fact,
%the RE3 is just the dual problem of 2SAL.
\begin{algorithm}%[tbp]
\floatname{algorithm}{Algorithm} \caption{$\mbox{\sc
OptBalloonDrawing-RE3-RE4 }$} \label{alg:RE3-RE4}
\begin{algorithmic}[1]
\STATE  construct a bipartite graph $V \times U = \{w_{t_i}(i): i
=1, 2, ..., n\} \times \{w_{t_i'}(i): i = 1, 2, ..., n\}$ for RE3
        (resp., $V \times U = \{M_1, M_2, ..., M_n\} \times \{m_1, m_2, ..., m_n\}$ for RE4)\\
\STATE  sort the sizes of the sub-wedges in $V$ %$\{w_{t_i}(i): i =1, 2, ..., n\}$
        in nonincreasing order as $\beta_1, \beta_2, ..., \beta_n$\\
\STATE  sort the sizes of the sub-wedges in $U$ %$\{w_{t_i'}(i): i = 1, 2, ..., n\}$
        in nondecreasing order as $\alpha_1, \alpha_2, ..., \alpha_n$\\
\STATE  consider a matching $N$ in which $\alpha_i$ is matched with $\beta_i$ for each $i \in \{ 1, 2, ..., n\}$.
\IF     {$I_0 \cup N$ is a %complementary
        Hamiltonian cycle for $V \cup U$}
\STATE  STOP
\ENDIF
\STATE order $\Omega = \{ \alpha_i + \beta_{i+1}: i = 1, 2, ..., n-1
\}$,
in nonincreasing order\\
%\STATE For $i = 1, 2, ..., n-1$, if $I_0 \cup N$ is a %complementary
%Hamiltonian cycle for $V \cup U$, then STOP; otherwise, if
%$\alpha_j+\beta_{j+1}$ is the $i$-th maximum in $\Omega$, and
%$\alpha_j$ and $\beta_{j+1}$ belong to different cycles in $I_0 \cup
%N$, then $N \leftarrow N \otimes (e_{\alpha_j}, e_{\beta_{j+1}})$.
\STATE $i \leftarrow 0$\\
\REPEAT
\STATE            $i \leftarrow i+1$\\
\IF{ $\alpha_j$ and $\beta_{j+1}$ belong to different cycles in $I_0
\cup N$, where $\alpha_j+\beta_{j+1}$ is the $i$-th maximum in $\Omega$}%
    \STATE $N \leftarrow N \otimes (e_{\alpha_j}, e_{\beta_{j+1}})$
\ENDIF%
\UNTIL{$I_0 \cup N$ is a %complementary
Hamiltonian cycle for $V \cup
U$}
\end{algorithmic}
\end{algorithm}

The reader is referred to \cite{LV1997} for more details on the proof
of the correctness of the algorithm. A brief explanation for the correctness is given as follows.
From \cite{LV1997}, we have the following proposition and property:

\bigskip

\noindent {\bf Proposition 1.} {\em A matching $N$ determines a solution for RE3 if $I_0 \cup N$ is a unique cycle.}

\bigskip

\noindent {\bf Property 1.} {\em Let $optAngResl$ be the optimal solution for RE3.
Then $optAngResl \leq \min \{ \beta_i + \alpha_i , 1 \leq i \leq n \}$,
where $V = \{\beta_1, \cdots, \beta_n\}$; $U = \{\alpha_1, \cdots, \alpha_n\}$; $\beta_1 \geq \cdots \geq \beta_n$; $\alpha_1 \leq \cdots \leq \alpha_n$.}

\bigskip

See Algorithm~\ref{alg:RE3-RE4}. If $I_0 \cup N$ is a unique cycle at the end of Line~7, then Proposition 1 and Property 1 implies optimality;
otherwise, Lines~8--15 are executed.
At each iteration of the loop in Lines~10--15, no matter whether $N \leftarrow N \otimes (e_{\alpha_j}, e_{\beta_{j+1}})$ is executed or not,
the cases discussed in \cite{LV1997} can be tailored to show that the cost of each matched edge in $N$ is no less than $ optAngResl$.
%But also the cost of each matched edge in the matching at the end of Line~7 is no less than $optAngResl$ by Property 1.
Hence, the solution produced by Algorithm~\ref{alg:RE3-RE4} must be no less than $optAngResl$.

The time complexity of the algorithm is explained briefly as follows. It is easy to see that Lines~1--8 can be executed in $O(n\log n)$ time.
At the end of Line~7, the nodes of each various cycle are stored in a linked list in $O(n)$ time.
Let $\mathbb{S}$ be a stack storing the labels $\alpha_i$ top to bottom, in nonincreasing order of $\alpha_i + \beta_{i+1}$.
Stack $\mathbb{S}$ is used to detect which two cycles we merge next.
This is done by checking if the endpoints of the edge $(\alpha_i, \beta_{i+1})$, corresponding to top element $\beta_i$ of stack $\mathbb{S}$, belong to different cycles.
If they do, the two cycles are merged next;
otherwise, the element at the top of the stack is discarded.
Therefore, it takes $O(n)$ time to detect which cycles to merge.
The exchanging operation in Line~13 is done in $O(1)$ time.
But also, merging two cycles is equivalent to merging two linked lists, which is done in $O(1)$ time as well.
As a result, the time complexity of Algorithm~\ref{alg:RE3-RE4} is $O(n\log n)$.

%\begin{algorithm}[tbp]
%\floatname{algorithm}{Algorithm} \caption{$\mbox{\sc
%OptBalloonDrawing-RE4}$ } \label{alg:RE4}
%\begin{algorithmic}[1]
%\STATE  Construct a bipartite graph $V \times U = \{M_1, M_2, ..., M_n\} \times \{m_1, m_2, ..., m_n\}$.\\
%\STATE Consider a matching $N$ in which $m_i$ is matched with
%$M_i$ for each $i \in \{ 1, 2, ..., n\}$. If $I_0 \cup N$ is a %complementary
%Hamiltonian cycle for $V \cup U$, then STOP.\\
%\STATE Order $\Omega = \{ m_i + M_{i+1}: i = 1, 2, ..., n-1 \}$,
%in nonincreasing order.\\
%\STATE For $i = 1, 2, ..., n-1$, if $I_0 \cup N$ is a %complementary
%Hamiltonian cycle for $V \cup U$, then STOP; otherwise, if
%$m_j+M_{j+1}$ is the $i$-th maximum in $\Omega$, and $m_j$ and
%$M_{j+1}$ belong to different cycles in $I_0 \cup N$, then $N
%\leftarrow N \otimes (e_{m_j}, e_{M_{j+1}})$.
%\end{algorithmic}
%\end{algorithm}

In what follows, we consider the RE4 problem.
By Observation~\ref{obs:MmSolution}, we find an optimal solution for
the RE4 problem where each type-$m$ sub-wedge is adjacent to a
certain type-$M$ sub-wedge, i.e., a perfect matching $N$ for
$V\times U = \{M_1, M_2, ..., M_n\} \times \{m_1, m_2, ..., m_n\}$.
By viewing $m_i$ (resp., $M_i$) as $\alpha_i$ (resp., $\beta_i$) for
each $i \in \{1, ..., n\}$, the RE4 problem is similar to the RE3
problem. As a result, Algorithm~\ref{alg:RE3-RE4} can also be
applied to
solving the RE4 problem in $O(n\log n)$ time. %An example for the
%output of Algorithm \ref{alg:RE3-RE4} for RE4 is given in Figure
%\ref{fig:fgBipartite}(b)(ii).
\end{proof}

%\begin{figure}[tbp]
%\centering
%\parbox{2.8in}{
%\centering \scalebox{0.55}{\includegraphics*[0,0][416,183]{fg2SLW.eps}}%
%\centering \caption{Illustration of reduction.} \label{fig:fg2SLW} }
%\qquad
%\begin{minipage}{3.3in}%
%\centering \scalebox{0.55}{\includegraphics*[0,0][442,183]{fg4matching.eps}}%
%\centering \caption{Three possible transition matchings for
%$\mathcal{B}_i$.} \label{fig:fg4matching}
%\end{minipage}
%\end{figure}

We  now turn our attention to the RA3 and RA4 problems. %Recall that
%the RA problem is concerned with minimizing the aspect ratio, which
%involves keeping the lower and the upper bounds of the angles as
%close to each other as possible. Such an observation allows us to
%take advantage of what is known for the 2SLW problem (which also
%involves finding a circular permutation to bound a measure within
%given lower and  upper bounds) to solve the two problems. It turns
%out that, like 2SLW,   RA3 and RA4 are NP-complete.
We consider a decision version of the RA3 (resp., RA4) problem:

\bigskip\noindent {{\sc The RA3 (resp., RA4) Decision Problem.}}\\
Given a balloon drawing of an unordered tree with fixed (resp.,
flexible) uneven sub-wedges, does there exist a circular permutation
$\sigma$ of $\{1, ..., n\}$ (resp., a circular permutation $\sigma$
of $\{1, ..., n\}$ and a sub-wedge assignment $t$) so that the size
of each angle is between $A$ and $B$? If the answer returns yes,
then $AspRatio_{\sigma,t} \leq B/A$.
\bigskip

%\noindent By searching over all possible $A$ and $B$ values above,
%we can identify $optAspRatio$.

%Even though RA3 and RA4 and 2SLW bear a certain degree of
%similarity, a direct reduction from 2SLW to  RA3 (or RA4) does not
%seem obvious.

Taking advantage of  the analogy between  RA3 (RA4) and 2SLW, we are
able to show:
\begin{thm}
\label{thm-RA3}\label{thm-RA4}%
Both the RA3 and RA4 problems are NP-complete.
\end{thm}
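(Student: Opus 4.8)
The plan is to prove NP-completeness of both RA3 and RA4 by reduction from the 2SLW problem, which is stated to be NP-complete \cite{V2003}. Membership in NP is immediate: given a circular permutation $\sigma$ (and, for RA4, a sub-wedge assignment $t$), one can verify in polynomial time that every angle $\theta_i = w_{t_i'}(\sigma_i)+w_{t_{i\oplus 1}}(\sigma_{i\oplus 1})$ lies in the interval $[A,B]$, so the decision version is checkable in polynomial time. The bulk of the work is therefore the hardness direction.

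First I would establish the correspondence between the ingredients of 2SLW and those of the balloon-drawing problem, guided by the detailed analogy already set up in the Related Problems subsection. In 2SLW we are given $n$ jobs $J_i=(W_{i1},W_{i2})$ and a workforce range $[LB,UB]$, and we ask whether a circular permutation exists so that every consecutive workforce requirement $W_{\delta_i 2}+W_{\delta_{i\oplus 1}1}$ falls in $[LB,UB]$. For RA3, I would map each job $J_i$ to a child $c_i$ of the star by identifying the two task workforces with the two (fixed) sub-wedges of that child, i.e.\ setting $w_{t_i}(i)=W_{i1}$ and $w_{t_i'}(i)=W_{i2}$, and I would set the angle bounds $A=LB$ and $B=UB$. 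Under this identification the angle $\theta_i$ between consecutive children equals exactly the workforce requirement of the corresponding time period, so a circular permutation places every angle in $[A,B]$ if and only if the same permutation keeps every workforce requirement in $[LB,UB]$. Since the fixed sub-wedge assignment in C3 mirrors the fixed orientation of $(W_{i1},W_{i2})$ in a job, the reduction is parsimonious and clearly polynomial-time.

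For RA4 the subtlety is that flipping is now permitted ($t$ is a design parameter), which would correspond to freely swapping $W_{i1}$ and $W_{i2}$ within each job — an apparently more powerful adversary than 2SLW allows. The hard part will be showing RA4 is no easier despite this extra freedom. I would handle this by a gadget that makes flipping useless: for each job $J_i=(W_{i1},W_{i2})$ I would attach to child $c_i$ sub-wedge values that are symmetric or otherwise forced, for instance by padding each child with a shared large constant so that both orientations yield the same pair, or by encoding each original value as a distinct sub-wedge so that any flip that helped would also have to be realizable by a relabeling captured in 2SLW. The cleanest route is to design the sub-wedge sizes so that in any feasible drawing the optimal (indeed only admissible) assignment $t$ is forced, thereby collapsing the flexible case back to the fixed correspondence and reusing the RA3 argument.

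I expect the main obstacle to be precisely this neutralization of the flip freedom in the RA4 reduction: I must certify that allowing $t$ to vary does not create spurious feasible solutions that have no counterpart in the 2SLW instance, while simultaneously ensuring every genuine 2SLW solution survives the encoding. Once the gadget forces the assignment, the feasibility equivalence is a routine check, and combined with the RA3 reduction and NP-membership this yields NP-completeness of both problems. I would then remark that these hardness results are what motivate the $2$-approximation algorithms for RA3 and RA4 developed later in the paper.
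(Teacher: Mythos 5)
Your overall strategy -- NP membership by direct verification plus a reduction from 2SLW for both problems -- is exactly the paper's, and your RA3 reduction (identify $w_{t_i}(i)=W_{i1}$, $w_{t_i'}(i)=W_{i2}$, $A=LB$, $B=UB$, up to normalizing so the sub-wedges sum to $2\pi$) matches the paper's treatment, which likewise regards RA3 and 2SLW as essentially the same problem. You also correctly locate where all the work lies: neutralizing the flip freedom in RA4.

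However, that is precisely where your proposal has a genuine gap, and the specific gadgets you float would not close it. Padding \emph{both} sub-wedges of a child ``so that both orientations yield the same pair'' either does nothing (flipping still changes which sub-wedge abuts which neighbor) or symmetrizes the child and destroys the correspondence with the asymmetric job $(W_{i1},W_{i2})$. More fundamentally, your stated goal -- design the instance so that the assignment $t$ is \emph{forced} in every feasible drawing -- is not what the paper achieves and is likely unattainable: in the paper's construction a feasible RA4 drawing may genuinely contain type-00 angles (both small sub-wedges adjacent) and type-11 angles. The paper instead pads only \emph{one} side of each child, setting $w_0(i)=W_{i1}\rho$ and $w_1(i)=(W_{i2}+W_{\max})\rho$ with $\rho$ a normalizer and $A=(LB+W_{\max})\rho$, $B=(UB+W_{\max})\rho$, which guarantees $w_1(i)\ge w_0(j)$ for all $i,j$. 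The reverse direction then needs a nontrivial normalization argument: if a feasible drawing has a type-00 angle, a counting argument shows it must also have a type-11 angle, and reversing the sub-sequence between them replaces the pair by two type-01 angles whose sizes provably stay in $[A,B]$ (using $w_0(i)+w_1(k)\ge w_0(i)+w_0(j)\ge A$ and $w_0(i)+w_1(k)\le w_1(l)+w_1(k)\le B$); iterating yields a drawing with assignment $(0,\dots,0)$ or $(1,\dots,1)$, which maps back to a 2SLW solution. Without this exchange-and-reversal argument (or a substitute for it), your reduction does not rule out spurious RA4 solutions with no 2SLW counterpart, so the hardness of RA4 remains unproved in your write-up.
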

\begin{proof}
(Sketch)
RA3 and 2SLW bear a certain degree of similarity.
Recall that given a set of $n$ jobs and a range $[LB, UB]$, the 2SLW problem decides wether a circular permutation exists such that the workforce requirement (i.e., the sum of the workforce requirements for two jobs respectively executed at two stations at the same time) for each time period is between $LB$ and $UB$.
Given a balloon drawing of an unordered tree with fixed uneven sub-wedges, the RA3 decision problem decides whether a circular permutation so that the size of each angle (i.e., the sum of two adjacent subwedges respectively from two various children) is between $A$ and $B$.
It is obvious that the decision version of the RA3 problem can be captured by
the 2SLW problem (and vice versa) in a straightforward way,
%, mentioned in Section
%\ref{sec:preliminary}, and
hence  NP-completeness follows.

As for the RA4 problem, since the upper bound (i.e., in NP) for the
RA4 problem is easy to show, we show the RA4 problem to be NP-hard
by the reduction from the 2SLW problem as follows.
%The reduction to
%the decision version of RA4 is a bit more complicated than that of
%RA3 as it allows sub-wedge flipping.

The idea of our proof is to design an RA4 instance so that one
cannot obtain any better solution by flipping sub-wedges. To this
end, from a 2SLW instance -- a set $\mathbb{J} = \{J_1, J_2, ...,
J_n\}$ of jobs and two numbers $LB, UB$ where $J_i = (W_{i1},
W_{i2})$ for each $i \in \{1, ..., n\}$, we construct a RA4 instance
-- a set of sub-wedges $\{w_0(1), w_1(1), \cdots, w_0(n), w_1(n)\}$
and two numbers $A$ and $B$ in which we let $W_{max} = \max\{W_{11},
W_{12}, \cdots, W_{n1}, W_{n2}\}$ and $\rho=2\pi/\sum_{j=1}^n
(W_{j1}+W_{j2}+W_{max})$; $w_0(i) = W_{i1}\times\rho$ and $w_1(i) =
( W_{i2} + W_{max} ) \times \rho$ for each $i \in \{1, ..., n\}$; $A
= (LB+W_{max})\times\rho$ and $B = (UB+W_{max})\times\rho$.

Now we show that there exists a circular permutation $\langle
J_{\delta_1}, J_{\delta_2}, ..., J_{\delta_n}\rangle$ of
$\mathbb{J}$ so that the workforce requirement for each time period
is between $LB$ and $UB$ if and only if there exist a circular
permutation $\sigma$ of $\{1, ..., n\}$ and a sub-wedge assignment
$t$ so that the size of each angle in the RA4 instance is between
$A$ and $B$.

We are given a 2SLW instance with a circular permutation $\langle J_{\delta_1}, J_{\delta_2},
..., J_{\delta_n} \rangle$ of $\mathbb{J}$ so that the workforce
requirement for each time period is between $LB$ and $UB$. It turns
out that $LB \leq W_{\delta_i2}+W_{\delta_{i\oplus 1}1} \leq UB$ for
each $i \in \{1, ..., n\}$. It implies that $(LB+W_{max})\times\rho
\leq (W_{\delta_i2} + W_{\delta_{i\oplus1}1} + W_{max})\times \rho
\leq (UB + W_{max})\times\rho$ for each $i \in \{1, ..., n\}$.
Consider $\sigma = \delta$ and $t = (0, 0, ..., 0)$ in the RA4
instance constructed above. Since $w_0(\sigma_i) =
W_{\sigma_i1}\times\rho$ and $w_1(\sigma_i) = ( W_{\sigma_i2} +
W_{max} ) \times \rho$ for each $i \in \{1, ..., n\}$ in the
construction, thus $(LB+W_{max})\times\rho \leq
w_1(\sigma_i)+w_0(\sigma_{i\oplus 1}) \leq (UB +
W_{max})\times\rho$. That is, $A \leq \theta_{\sigma_i} \leq B$ for
each $i \in \{1, ..., n\}$.

Conversely, we are given a
RA4 instance with a circular permutation $\sigma$ of $\{1, ..., n\}$
and a sub-wedge assignment $t$ so that the size of each angle in the
RA4 instance is between $A$ and $B$. For any $i,j \in \{1, ...,
n\}$, since $w_1(i) = (W_{i2} + W_{max})\times \rho \geq
W_{max}\times \rho \geq W_{j1}\times \rho = w_0(j)$, hence $w_1(i)
\geq w_0(j)$. In the RA4 instance, the size of each angle can be
$w_0(i)+w_0(j)$, $w_0(i)+w_1(j)$, or $w_1(i)+w_1(j)$ for some $i,j
\in \{1, ..., n\}$. For convenience, the angle with size
$w_0(i)+w_0(j)$ (resp., $w_0(i)+w_1(j)$; $w_1(i)+w_1(j)$) for some
$i,j \in \{1, ..., n\}$ is called a type-00 (resp., 01; 11) angle
(note that the order of $i$ and $j$ is not crucial here).

If there exists a type-00 angle in the RA4 instance, then there must
exist at least one type-11 angle in this instance; otherwise, all
the angles are type-01 angles.

In the case when there exists a type-00 angle with size
$w_0(i)+w_0(j)$ so that there exists a type-11 angle with size
$w_1(k)+w_1(l)$  for some $i,j, k, l \in \{1, ..., n\}$, then
w.l.o.g., the sub-wedge sequence of the instance is expressed as a
circular permutation $\langle S_1, w_0(i), w_0(j), S_2, w_1(k),
w_1(l), S_3 \rangle$ where $S_1$ -- $S_3$ are sub-wedge
subsequences; the number of sub-wedges in each of $S_1$ and $S_3$
(resp., $S_2$) is odd (resp.,
even)% (note that the 2nd and the 3rd sub-wedges in the circular
%permutation forms an angle, then the 4th and the 5th, then the 6th
%and the 7th, and so on so forth)
. Let $S_2^R$ be the reverse of $S_2$. Consider a new circular
permutation $\langle S_1, w_0(i), w_1(k), S_2^R, w_0(j), w_1(l), S_3
\rangle$, in which the size of each angle is between $A$ and $B$,
because the size of each angle in $S_3 \cup S_1$ and $S_2^R$ is
originally between $A$ and $B$; $A \leq w_0(i)+w_1(k) \leq B$ (since
$w_0(i)+w_1(k) \geq w_0(i)+w_0(j) \geq A$ and $w_0(i)+w_1(k) \leq
w_1(l)+w_1(k) \leq B$); similarly, $A \leq w_0(j)+w_1(l) \leq B$.

If there still exists a type-00 angle in the new circular
permutation, then we repeat the above procedure until we obtain a
circular permutation $\delta$ where all the angles are type-01
angles. By doing this, the size of each angle in $\delta$ is between
$A$ and $B$, and the sub-wedge assignment $t$ in the drawing
achieved by $\delta$ is $(0,0,...,0)$ or $(1,1, ..., 1)$. In the
case of $t = (1, 1, ..., 1)$, we let $\delta \leftarrow \delta^R$,
then $t$ becomes $(0,0,...0)$.

Consider the 2SLW instance (constructed above) corresponding to the
circular permutation $\delta$. In the 2SLW instance, for each $i \in
\{1, ..., n\}$, workforce requirement
$W_{\delta_i2}+W_{\delta_{i\oplus1}1} = (w_1(\delta_i) +
w_0(\delta_{i\oplus1}))/\rho - W_{max}$. Hence, $A/\rho -W_{max}
\leq W_{\delta_i2}+W_{\delta_{i\oplus1}1} \leq B/\rho - W_{max}$,
which implies $LB \leq W_{\delta_i2}+W_{\delta_{i\oplus1}1} \leq
UB$.
\end{proof}

We can utilize a technique similar to the reduction from {\em
Hamiltonian-circle problem on cubic graphs} (HC-CG) to 2SLW
(\citep{V2003}) to establish NP-hardness for DE3 and DE4. Hence, we
have the following theorem, whose proof is given in Appendix %\footnote[1]{The complete proof of Theorem~\ref{thm-DE3} can also be obtained from an online technical report (http://tmue.edu.tw/$\thicksim$cclin321/publications.php).}
because it is too cumbersome and our main result for the DE3 and DE4
problems is to design their approximation algorithms.

\begin{thm}
\label{thm-DE3}\label{thm-DE4}%
Both the DE3 and DE4 problems are NP-complete.
\end{thm}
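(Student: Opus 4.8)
The plan is to establish NP-completeness of DE3 and DE4 by reduction from the Hamiltonian-circuit problem on cubic graphs (HC-CG), mirroring the strategy that V\'egh used to show 2SLW is NP-complete and that we exploited for RA3 and RA4. Membership in NP is routine for both problems: given a circular permutation $\sigma$ and (for DE4) an assignment $t$, one computes $StdDev_{\sigma,t}$ via Equation~(\ref{E-stdDev2}) in polynomial time and checks it against the target threshold, so the entire effort lies in proving NP-hardness. As with the RA4 argument, I would first handle DE3 (fixed sub-wedges, so only $\sigma$ is free) and then bootstrap to DE4 by designing the sub-wedge sizes so that flipping can never help; the gadget trick of setting $w_1(i) = (W_{i2}+W_{max})\times\rho$ dominating every $w_0(j)=W_{j1}\times\rho$ forces the optimal assignment to be $(0,0,\ldots,0)$ (or its reverse $(1,\ldots,1)$), exactly as in Theorem~\ref{thm-RA4}.

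\textbf{Reduction from HC-CG.} Given a cubic graph $H$ on $N$ vertices, the idea is to build a star graph whose children encode the edges (or half-edges) of $H$ in such a way that a circular permutation achieving a prescribed standard deviation corresponds precisely to a Hamiltonian circuit in $H$. First I would assign sub-wedge values so that the angle $\theta_i = w_{t_i'}(\sigma_i)+w_{t_{i\oplus1}}(\sigma_{i\oplus1})$ between two adjacent children is small (equal to some target value $A$) exactly when the corresponding half-edges are ``compatible'' in the sense required to trace out a circuit in $H$, and is strictly larger otherwise. The key quantitative lever is that $StdDev_{\sigma,t}$ is minimized precisely when every angle equals the common mean $2\pi/n$; since $\sum_i \theta_i = 2\pi$ is fixed, a drawing attains $StdDev = 0$ if and only if all angles are equal, and more generally the standard deviation grows as the angles deviate from uniformity. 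I would therefore choose the numerical values so that a perfectly uniform (or near-uniform, up to a computable threshold) angle distribution is realizable if and only if $H$ has a Hamiltonian circuit.

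\textbf{The main obstacle.} The hard part will be controlling the standard deviation, which is a \emph{global} quadratic quantity (through $SOP_{\sigma,t}=\sum_i w_{t_i'}(\sigma_i)w_{t_{i\oplus1}}(\sigma_{i\oplus1})$), rather than the purely \emph{local} max/min constraints that made the RA3/RA4 reduction from 2SLW so direct. In the RA reductions one only needed each individual angle to land in a window $[A,B]$, and 2SLW is tailor-made for that; but for DE one must argue that \emph{no} alternative permutation can achieve an equally small sum of squared deviations unless it corresponds to a genuine Hamiltonian circuit. Thus the crux is a gap argument: I would need to show that any permutation deviating from the circuit structure is forced to create at least one ``bad'' angle pair whose contribution to $SOP$ (and hence to $StdDev$) exceeds the target by a gap that is bounded below by a polynomially computable constant, so that the yes/no distinction survives in polynomial encoding. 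Designing the edge weights to guarantee this uniform gap while keeping all numbers polynomially bounded, and then verifying both directions of the equivalence, is precisely why the full argument is ``too cumbersome'' and is deferred to the appendix. Once the DE3 hardness is in place, the dominance construction lifts it to DE4 by the same flipping-is-useless argument used for RA4, completing the proof.
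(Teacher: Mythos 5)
Your high-level plan matches the paper's: both reduce from HC-CG by adapting V\'egh-style block gadgets, both recognize that NP membership is routine, and both identify the real difficulty as converting the global quadratic quantity $SOP_{\sigma,t}$ into something that admits a hardness gap. But the proposal stops exactly where the proof has to begin, so there is a genuine gap: you have named the obstacle without overcoming it. The paper's argument hinges on several concrete ingredients you do not supply. First, the decision problem is not phrased as ``can the angles be made (near-)uniform'' --- the target is $SOP_{\sigma,t}\leq UB$ with $UB=\sum_{i=1}^{7n}M_im_i+7n$, an additive budget of exactly $7n$ above the cost of the diagonal matching $N_D$, which is itself unattainable as a single Hamiltonian cycle. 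Your framing via ``$StdDev=0$ iff all angles are equal'' points at the wrong invariant: in the intended yes-instances the angles are not uniform, and the reduction instead meters out a budget of exactly $7$ per block. Second, the gap is not obtained by making one ``bad'' adjacent pair expensive; it is obtained by viewing each block's transition matching as a permutation, decomposing it into factors, and lower-bounding the cost increment of a factor of size $h$ via Lemma~\ref{lemma-perm} (the inequalities $c(\varrho(X))\geq\sum x_iy_i+\sum r_{i,i+1}s_{i+1,i}\,(+\,n-2)$). The specific arithmetic --- choosing the $A_{i,j},B_{i,j}$ so that $r_{1,2}s_{2,1},\ldots,r_{6,7}s_{7,6}$ equal $2,1,1,4,4,2$ --- is what makes the subsequent case analysis on the number $\overline{h}$ of subcycle merges per block come out to ``exactly $7$ for the three legal matchings, strictly more for everything else,'' and that case analysis (together with statements (S-1)--(S-3) ruling out same-side and cross-block transition edges) is the entire content of the hardness direction.

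A secondary, smaller divergence: you propose proving DE3 first and lifting to DE4 by the RA4-style dominance trick that makes flipping useless. The paper instead proves DE4 directly and disposes of flipping through statement (S-1) (no transition edge joins two upper or two lower nodes, enforced because any such pair costs at least $(M_n-m_n)^2=18^2(n-1)^2>7n$), noting that DE3 is the \emph{simpler} case. Your route is plausible but would still require you to build the DE3 gadget and prove the gap, so it does not shortcut any of the missing work.
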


\section{Approximation Algorithms for Those Intractable Problems}
\label{sec:approx}

We have shown RA3 and RA4 to be NP-complete. The results on
approximation algorithms for those problems are given as follows.

\begin{thm}
\label{thm-RA3-approx}\label{thm-RA4-approx}%
Algorithm~$\ref{alg:RE3-RE4}$ is a $2$-approximation algorithm for
RA3 and RA4.
\end{thm}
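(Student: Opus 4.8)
The plan is to argue that the very drawing returned by Algorithm~\ref{alg:RE3-RE4} --- which we already know attains the optimal angular resolution by Theorem~\ref{thm-RE3} --- is automatically a $2$-approximation for the aspect ratio, so no separate algorithm is needed. Let $D$ be the drawing output by Algorithm~\ref{alg:RE3-RE4}, and write $\APX_{\min}$ and $\APX_{\max}$ for its smallest and largest angle. Let $D^*$ be an optimal RA drawing with smallest angle $a$ and largest angle $b$, so $optAspRatio = b/a$. Since the aspect ratio of $D$ is $\APX_{\max}/\APX_{\min}$, it suffices to prove $\APX_{\max}/\APX_{\min}\le 2b/a$.

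First I would lower-bound $\APX_{\min}$. As Algorithm~\ref{alg:RE3-RE4} solves RE3 (resp.\ RE4) optimally, the smallest angle of $D$ equals $optAngResl$, the largest minimum-angle achievable by any drawing; since $D^*$ is one such drawing and its minimum angle is $a$, we get $\APX_{\min}=optAngResl\ge a$. Next I would upper-bound $\APX_{\max}$. Let $w_{\max}$ be the largest value in the fixed sub-wedge multiset $\{w_0(1),w_1(1),\dots,w_0(n),w_1(n)\}$ (the same multiset underlies both cases, the flexible case being pinned down by Observation~\ref{obs:MmSolution}). Every angle of any balloon drawing is the sum of exactly two of these sub-wedges, and the $2n$ sub-wedges are partitioned one-to-one among the $n$ angles. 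Hence $\APX_{\max}\le 2w_{\max}$ trivially. Moreover, in $D^*$ the unique angle containing the largest sub-wedge has size at least $w_{\max}$, and every angle of $D^*$ is at most $b$, so $w_{\max}\le b$. Combining, $\APX_{\max}\le 2w_{\max}\le 2b$.

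Putting the two bounds together gives $\APX_{\max}/\APX_{\min}\le 2b/a = 2\cdot optAspRatio$, the claimed guarantee, and the identical chain of inequalities covers RA3 and RA4 at once. The step I expect to require the most care is the inequality $w_{\max}\le b$: it hinges on the structural fact that each of the $2n$ sub-wedges participates in exactly one angle of \emph{every} drawing, so that the fixed largest sub-wedge forces the maximum angle of the optimum drawing $D^*$ to be at least $w_{\max}$. Everything else --- the optimality of Algorithm~\ref{alg:RE3-RE4} for angular resolution and the trivial bound $\APX_{\max}\le 2w_{\max}$ --- is immediate.
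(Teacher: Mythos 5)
Your proposal is correct and follows essentially the same route as the paper: the paper likewise combines $\APX_{\min}=optAngResl\ge a$ (optimality of Algorithm~\ref{alg:RE3-RE4} for angular resolution) with the chain $\APX_{\max}\le 2M_1\le 2(x+M_1)\le 2b$, where $M_1$ is the largest sub-wedge --- exactly your $w_{\max}\le b$ argument, differing only in that the paper routes the comparison through the min-max-angle drawing rather than directly through the optimal aspect-ratio drawing $D^*$. No gap.
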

\begin{proof}
Let $a_\angResl$ (resp., $b_\angResl$ and $r_\angResl$) be the
minimal angle (resp., the maximal angle and the aspect ratio) among
the circular permutation generated by %either
Algorithm~\ref{alg:RE3-RE4}. % (when proving the RA3 problem) or Algorithm \ref{alg:RE4} (when proving the RA4 problem).
Denote $a_\opt$ (resp., $b_\opt$ and $r_\opt$) as the maximum of the
minimal angle (resp., the minimum of the maximal angle and the
optimal aspect ratio) among any circular permutation. Since
$b_\angResl \leq 2 M_1 \leq 2 (x+M_1) \leq 2 b_\opt$ where $x$ is
the sub-wedge adjacent to $M_1$ in the circular permutation with the
minimum of the maximal angle, we have $b_\angResl \leq 2 b_\opt$. By
Theorem~\ref{thm-RE3}, we have $a_\angResl = a_\opt = optAngResl$.
Therefore, $r_\angResl = b_\angResl / a_\angResl \leq 2 b_\opt /
a_\opt \leq 2 r_\opt$.
\end{proof}

Next, we design approximation algorithms for the NP-complete DE
problems. Here we only consider the approximation algorithms for the
SOP4 and DE4 problems because the approximation algorithms for the
SOP3 and DE3 problems are similar and simpler. Recall that the SOP4
problem is equivalent to finding a matching $N_\opt$ for bipartite
graph $V \times U$, such that $c(N_\opt)$ is the minimal, where
$c(N) = \sum_{(v,u)\in N} \phi(v) \times \phi(u)$.

Consider a matching $N_D$ for bipartite graph $V \times U$ in which
$M_i$ is matched with $m_i$ for each $i$, i.e.,
%\begin{equation}\label{E-MDweight}
$c(N_D)=\sum_{i=1}^n M_i m_i.
$ %\end{equation}
 Assume that $I_0 \cup N_D$ consists of $\eta$ subcycles for $1 \leq
\eta \leq n$, in which we recall that $I_0$ denotes the set of the
edges corresponding to each pair $(w_{t_i}(i), w_{t_i'}(i))$ for $i
\in \{1, \cdots, n\}$. According to matching $N_D$, we have that
each subcycle in $I_0 \cup N_D$ contains at least one matched edge
between $M_i$ and $m_i$ for some
$i$. Let the %
{\em exchange graph} $\chi = (V_\chi, E_\chi)$ for bipartite graph
$V \times U$ be a complete graph in which
\begin{itemize}
  \item each node in $V_\chi$ corresponds to a subcycle of $I_0 \cup N_D$,
i.e., $|V_\chi| = \eta$;
  \item each edge $e_i = (u,v)$ in $E_\chi$
corresponding to two subcycles $C_u$ and $C_v$ in $I_0 \cup N_D$ has
cost $\psi(e_i) = \min\{r_{a,b}s_{b,a}| (M_a, m_b) \in (C_u,
C_v)\cup (C_u, C_v) \mbox{ for any $a$, $b$; }$ $ r_{a,b} = M_a -
M_b$, $s_{b,a} = m_b - m_a\}$. (In fact, the cost represents the
least cost of exchanging edges $e_{M_a}$ and $e_{m_b}$ in $V \times
U$.)
\end{itemize}
When $\psi(e_i) = r_{k,l}s_{l,k}$ for some $k, l$, we denote
$\mu(e_i) = k$ and $\nu(e_i) = l$. Let $T_\chi = (V_\chi,
E_{T_\chi})$ be a minimum spanning tree over $\chi$. With exchange
graph $\chi$ and its minimum spanning tree $T_\chi$ as the input of
Algorithm~\ref{alg:SOP4}, we can show that Algorithm~\ref{alg:SOP4}
is a 2-approximation algorithm for the SOP4 problem.

\begin{algorithm}[tbp]{\small
\floatname{algorithm}{Algorithm} \caption{$\mbox{\sc
ApproxBalloonDrawing-SOP4}$}\label{alg:SOP4}
%\textbf{Input:} the exchange graph $\chi = (V_\chi, E_\chi)$ for $V\times U$ and its minimum spanning tree $T_\chi = (V_\chi, E_{T_\chi})$ where $|V_\chi| = \eta$\\
%\textbf{Output:} a matching $N_\APX$ for $V \times U$\\
\begin{algorithmic}[1]
\STATE construct the exchange graph $\chi = (V_\chi, E_\chi)$ for $V\times U$
%, which is a complete graph where each node in $V_\chi$ corresponds to a subcycle of $I_0 \cup N_D$,
%i.e., $|V_\chi| = \eta$; each edge $e_v = (u,v)$ in $E_\chi$
%corresponding to two subcycles $C_u$ and $C_v$ in $I_0 \cup N_D$ has
%cost $\psi(e_i) = \min\{r_{a,b}s_{b,a}| (M_a, m_b) \in (C_u,
%C_v)\cup (C_u, C_v) \mbox{ for any $a$, $b$; }$ $ r_{a,b} = M_a -
%M_b$, $s_{b,a} = m_b - m_a\}$
\STATE find the minimum spanning tree $T_\chi = (V_\chi, E_{T_\chi})$ of exchange graph $\chi$ where $|V_\chi| = \eta$
\STATE let $S_i=\{ M_{\mu(e_i)}, m_{\mu(e_i)}, M_{\nu(e_i)},
m_{\nu(e_i)} \}$ for each edge $e_i \in E_{T_\chi}$ (noticing that if $\psi(e_i) = r_{k,l}s_{l,k}$ for some $k, l$, then $\mu(e_i) = k$ and $\nu(e_i) = l$), where each $e_i$ is said to correspond to $S_i$ (i.e., there are $S_1, S_2, \cdots, S_{\eta-1}$)
\STATE let $S = \{S_1, \cdots, S_{\eta-1}\}$
\FOR {{\bf each} set $S_a$ in $S$}
    \FOR {{\bf each} element $x$ in $S_a$}
        \STATE find a set $S_b$ that includes element $x$ but is not considered before
        \STATE append the elements in set $S_b$ to the end of set $S_a$ (i.e., the duplicate elements are not deleted)
        \STATE let both edges $e_i$ and $e_j$ correspond to $S_a$, where edges $e_i$ and $e_j$ in $T_\chi$ correspond to $S_a$ and $S_b$, respectively
        \STATE $S \leftarrow S \setminus S_b$
    \ENDFOR
\ENDFOR
\STATE for each set in $S$, remove the duplicate elements in each set
\STATE order the elements in each set $S_i$, and then denote the new set as
$S_i' = \{m_1', m_2', \cdots, m_l', M_l', M_{l-1}', \cdots, M_1'\}$
where $m_i'$ (resp.,
$M_i'$) is the $i$-th minimum (resp., maximum) in $S_i$; the cardinality of $S_i'$ is $2 l$\\
\FOR {{\bf each} $S_i'$}
    \STATE $M_j'$ is matched with $m_{j + 1}'$ for $j = 1, \cdots, l-1$
    \STATE $M_l'$ is matched with $m_1'$
\ENDFOR
\STATE output such a matching $N_\APX$ for $V \times U$
\end{algorithmic}
}\end{algorithm}

\begin{figure}[tbp]
\centering \scalebox{0.625}{\includegraphics{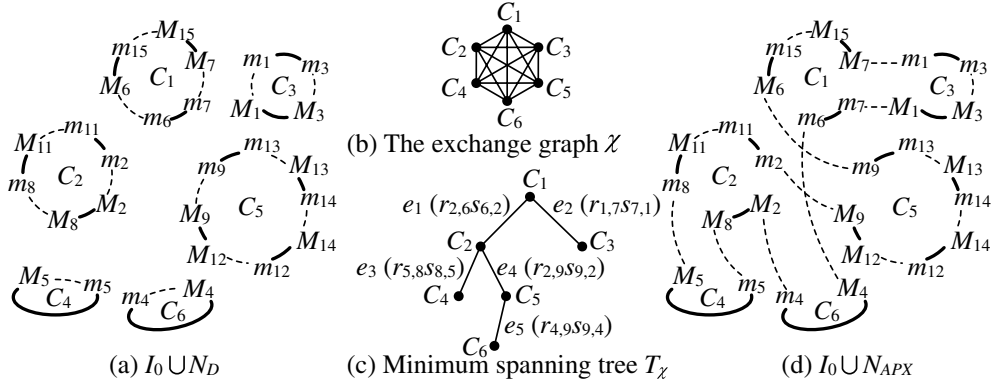}}%
\centering \caption{An example showing how Algorithm~\ref{alg:SOP4}
works.} \label{fgAlgSOP4}
\end{figure}

Figure~\ref{fgAlgSOP4} gives an example to illustrate how the
algorithm works. Figure~\ref{fgAlgSOP4}(a) is $I_0 \cup N_D$ where
the solid lines (resp., dash lines) are the edges in $I_0$ (resp.,
in $N_D$). Figure~\ref{fgAlgSOP4}(b) is its exchange graph $\chi$,
and we assume that Figure~\ref{fgAlgSOP4}(c) is the minimum spanning
tree $T_\chi$ for $\chi$ where each edge $e_i$ in $T_\chi$ has
weight $r_{\mu(e_i),\nu(e_i)}s_{\nu(e_i),\mu(e_i)}$. We illustrate
each $S_i$ after each modification in Line~11 of
Algorithm~\ref{alg:SOP4} as follows:
\begin{itemize}
    \item Initial:
    $S_1 = \{M_2, m_2, M_6, m_6\}$, $S_2 = \{M_1, m_1, M_7, m_7\}$, $S_3 = \{M_5, m_5, M_8, m_8\}$, $S_4 = \{M_2, m_2, M_9, m_9\}$, $S_5 = \{M_4, m_4, M_9, m_9\}$.%
    \item The elements in $S_4$ is appended to the end of $S_1$:\\
    $S_1 = \{M_2, m_2, M_6, m_6, M_2, m_2, M_9, m_9\}$, $S_2 = \{M_1, m_1, M_7, m_7\}$, $S_3 = \{M_5, m_5,$ $M_8, m_8\}$, $S_5 = \{M_4, m_4, M_9, m_9\}$.%
    \item The elements in $S_5$ is appended to the end of $S_1$:\\
    $S_1 = \{M_2, m_2, M_6, m_6, M_2, m_2, M_9, m_9, M_4, m_4, M_9, m_9\}, S_2 = \{M_1, m_1, M_7, m_7\},$ $S_3 = \{M_5, m_5, M_8, m_8\}$.%
\end{itemize}
Based on the above, Algorithm~\ref{alg:SOP4} returns $N_\APX$, and
$I_0 \cup N_\APX$ is shown in Figure~\ref{fgAlgSOP4}(d). In fact,
Algorithm~\ref{alg:SOP4} provides a 2-aproximation algorithm for
SOP4. A slight modification also yields a 2-approximation algorithm
for SOP3.

%The %
%{\em minimum spanning matching} $N_\MST$ corresponding to $T_\chi$
%is the matching of $V \times U$ where the edges in $N_D$
%respectively incident to $M_{\mu(e_i)}$ and $m_{\nu(e_i)}$ are
%exchanged every time when the edge $e_i$ in $\chi$ is selected in an
%iteration of the algorithm for computing the minimum spanning tree.
%Let $c_\LB = \sum_{i=1}^n M_i m_i + \sum_{e\in E_{T_\chi}}
%r_{\mu(e),\nu(e)} s_{\nu(e),\mu(e)}$. Hence,
%\begin{equation}\label{E-MSTgeqLB}
%\phi(N_\MST) \geq c_\LB
%\end{equation}
%since $N_\MST$ is transformed from $N_D$ by a sequence of exchanges,
%each of which contributes at least $r_{\mu(e),\nu(e)}
%s_{\nu(e),\mu(e)}$ if exchanging the edge $e$. Note that the strict
%condition of Equation (\ref{E-MSTgeqLB}) occurs when there exist at
%least two edges in $E_{T_\chi}$ sharing the same corresponding
%sub-wedge. For example, consider a minimum spanning tree for an
%exchange graph with three nodes corresponding to three subcycles
%$C_2$ (including the edge $[M_2,m_2]$), $C_4$ (including the edge
%$[M_5,m_5]$), and $C_5$ (including the edge $[M_4,m_4]$) in $U
%\times V$ and two edges $e_3$ and $e_4$ where $\mu(e_3)=5,
%\nu(e_3)=2, \mu(e_4)=4, \nu(e_4)=2$, as shown in Figure
%\ref{fgAlgSOP4}. Note that $\nu(e_3)=\nu(e_4)=2$ means that $e_3$
%and $e_4$ share the same sub-wedge $m_2$. Then $\phi(N_\MST) - c_\LB
%= r_{4,5} s_{4,2}$ or $r_{2,4}s_{5,4}
%> 0$.

Before showing our result, we need the following notation and lemma.
A {\em permutation} $\pi$ is a 1-to-1 mapping of $\{1, ..., n\}$
onto itself, which can be expressed as: $\pi = (\pi(1), \pi(2), ...,
\pi(n))$
%\[
%\pi = \left(
%\begin{array}{cccc}
%  1 & 2 & ... & n \\
%  \pi(1) & \pi(2) & ... & \pi(n) \\
%\end{array}\right) = (\pi(1), \pi(2), ..., \pi(n))
%\]
or in compact form in terms of {\em factors}. (Note that it is
different from the circular permutation used previously.) If
$\pi(j_k)=j_{k+1}$ for $k=1,2,...,h-1$, and $\pi(j_h)=j_1$, then
$\langle j_1, j_2, ..., j_h \rangle$ is called a {\em factor} of the
permutation $\pi$. A factor with $h \geq 2$ is called a nontrivial
factor. Note that a matching $N$ for the bipartite graph $V \times
U$ constructed above can be viewed as a permutation $\pi: V
\rightarrow U$.

\begin{lemma}\label{lemma-perm}
    For $n \geq 2$, let $X = \{x_1, x_2, \cdots, x_n\}$ (resp., $Y = \{y_1, y_2, \cdots, y_n\}$)
    where $x_i$ (resp., $y_i$) is the $i$-th maximum (resp., minimum) among all. Let $\varrho: X
    \rightarrow Y$ be a $1$-to-$1$ mapping, i.e., a permutation of
    $\{1,\cdots,n\}$. %Associate every pair $X_i, \varrho(X_i)$ with the weight $X_i \times \varrho(X_i)$.
%    Hence define $\phi(\varrho(X))$ be the sum of the weights of all the
%    pairs in $X$.
    If $\varrho(X)$ is a permutation consisting of only a
    nontrivial factor with size $n$, then
    \begin{equation} \label{E-lemma-perm-1}
    c(\varrho(X)) = \sum_{i=1}^n x_i y_{\varrho(i)} \geq
    \sum_{i=1}^n x_i y_i + \sum _{i=1}^{n-1} r_{i,i+1}s_{i+1,i}
    \end{equation}
    where $r_{a,b} = x_a - x_b, s_{c,d} = y_c - y_d$ for any $a,b,c,d$. Moreover, if
    $r_{j,i+1}s_{i+1,j'} - r_{i,i+1} s_{i+1,i} \geq 1$ for each $i,j,j' \in \{1, \cdots, n-1\}$ and $j,j' < i$,
    %$r_{i,i+1}, s_{i+1,i} \geq 1$ for each $i \in \{1, \cdots, n-1\}$,
    then
    \begin{equation} \label{E-lemma-perm-2}
    c(\varrho(X)) \geq \sum_{i=1}^n x_i y_i + \sum _{i=1}^{n-1}
    r_{i,i+1}s_{i+1,i} + n - 2
    \end{equation}
    Note that the difference between
    Equation~$(\ref{E-lemma-perm-1})$
    and Inequality~$(\ref{E-lemma-perm-2})$ is that Inequality~$(\ref{E-lemma-perm-2})$ can be applied only when the factor size $n$ is
    known.% Also note that our instance $\mathcal{B}(V,U)$ satisfies the premise of Equation (\ref{E-lemma-perm-2}), and hence can apply the equation.
\end{lemma}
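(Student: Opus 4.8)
The plan is to prove both inequalities by induction on $n$, peeling off the index $n$, i.e.\ the element carrying the smallest $x$-value and the largest $y$-value. This is the natural element to delete because removing it leaves a set of $n-1$ elements whose $i$-th maximum is still $x_i$ and whose $i$-th minimum is still $y_i$ for $1\le i\le n-1$; no other deletion preserves the ranking structure on which the telescoping sum $\sum_{i=1}^{n-1} r_{i,i+1}s_{i+1,i}$ is defined. Throughout I write $\Delta := c(\varrho(X)) - \sum_{i=1}^n x_i y_i$ for the excess to be bounded, and $X' = \{x_1,\dots,x_{n-1}\}$, $Y'=\{y_1,\dots,y_{n-1}\}$.

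For the inductive step I would set $a=\varrho(n)$ and $b=\varrho^{-1}(n)$, so that the single $n$-cycle $\varrho$ contains the fragment $b\to n\to a$; since $n\ge 3$ forces $a\ne b$, deleting $n$ and rerouting $b\to a$ yields a single $(n-1)$-cycle $\varrho'$ on $\{1,\dots,n-1\}$ to which the induction hypothesis applies. A direct computation gives
\[
c(\varrho(X)) - c(\varrho'(X')) - x_n y_n = (x_b - x_n)(y_n - y_a) = r_{b,n}\,s_{n,a},
\]
so that $\Delta \ge \big(\text{excess of }\varrho'\big) + r_{b,n}s_{n,a}$. The monotonicity $x_1\ge\cdots\ge x_n$ and $y_1\le\cdots\le y_n$, together with $a,b\le n-1$, gives $x_b-x_n\ge x_{n-1}-x_n=r_{n-1,n}\ge 0$ and $y_n-y_a\ge y_n-y_{n-1}=s_{n,n-1}\ge 0$, hence $r_{b,n}s_{n,a}\ge r_{n-1,n}s_{n,n-1}$. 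Combining this with the inductive bound for $\varrho'$, whose telescoping sum has $n-2$ terms, exactly reconstitutes the missing term $r_{n-1,n}s_{n,n-1}$ and establishes Inequality~$(\ref{E-lemma-perm-1})$. The base case $n=2$ holds with equality, since the only nontrivial $2$-cycle gives $\Delta = r_{1,2}s_{2,1}$.

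For the sharper bound $(\ref{E-lemma-perm-2})$ I would run the same induction but now demand that each step contribute an additional $+1$, so that the surplus accumulates to $n-2$ across the $n-2$ steps from $n=2$ up to $n$. The crucial point is that the per-step gain is exactly the product $r_{b,n}s_{n,a}$, and the extra hypothesis, applied with $i=n-1$, $j=b$, $j'=a$, asserts precisely $r_{b,n}s_{n,a} - r_{n-1,n}s_{n,n-1}\ge 1$, i.e.\ the gain beats the reconstituted telescoping term by at least one unit. Feeding this into the induction hypothesis, which now supplies surplus $(n-1)-2=n-3$ for $\varrho'$, delivers surplus $n-2$ for $\varrho$, which is $(\ref{E-lemma-perm-2})$.

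The main obstacle is the boundary bookkeeping in this last step: the hypothesis is stated for $j,j'<i$, so it applies verbatim only when neither cycle-neighbor of $n$ equals the index $n-1$, that is, when $a,b\le n-2$. The delicate case is when $n$ is cycle-adjacent to $n-1$ (so $a=n-1$ or $b=n-1$), where the unit surplus must still be extracted. Here I would observe that the residual gain $r_{b,n}s_{n,a}-r_{n-1,n}s_{n,n-1}$ collapses to a single-factor difference such as $(x_b-x_{n-1})\,s_{n,n-1}$, and recover the required $\ge 1$ either by invoking the hypothesis at the admissible boundary index or by peeling $n$ and $n-1$ jointly so that the rerouting avoids the forbidden adjacency. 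Verifying that this reassignment never destroys the $\ge 1$ surplus, and never double-counts a telescoping term, is the crux that the full proof must settle carefully.
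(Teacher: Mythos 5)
Your proposal is correct and follows essentially the same route as the paper's proof: both induct on $n$ by deleting the element with smallest $x$-value and largest $y$-value, derive the identity $c(\varrho)=c(\varrho')+x_ny_n+r_{b,n}s_{n,a}$ with $b=\varrho^{-1}(n)$, $a=\varrho(n)$, and then bound $r_{b,n}s_{n,a}$ below by $r_{n-1,n}s_{n,n-1}$ (plus $1$, via the extra hypothesis, for the sharper bound). The boundary subtlety you flag for Inequality~(\ref{E-lemma-perm-2}) when a cycle-neighbor of $n$ equals $n-1$ is present in the paper as well, which dismisses it only with the remark that $\varrho^{-1}(n)=\varrho(n)=n-1$ cannot occur in a single $n$-cycle for $n>2$.
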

\begin{proof} We proceed by induction on the size of $\varrho(X)$. If $n=2$,
$c(\varrho(X))-\sum_{i=1}^2 x_i y_i = x_1 y_2 + x_2 y_1 - x_1 y_1 -
x_2 y_2 = r_{1,2} s_{2,1}$ holds. Suppose that the required two
inequalities hold when $n=k$. When $n=k+1$,
\begin{eqnarray}
c(\varrho(X)) &=& \sum_{i \in \{1,\cdots,k\}\setminus\{\varrho^{-1}(k+1)\}} x_i y_{\varrho(i)} + x_{\varrho^{-1}(k+1)}y_{k+1} + x_{k+1}y_{\varrho(k+1)} %
\nonumber\\
&=& \sum_{i=1}^{k} x_i y_{\varrho'(i)} + x_{\varrho^{-1}(k+1)}y_{k+1} + x_{k+1}y_{\varrho(k+1)} - x_{\varrho^{-1}(k+1)}y_{\varrho(k+1)} %
\nonumber
\end{eqnarray}
where $\varrho'$ is a size-$k$ permutation consisting of a nontrivial
factor with size $k$. Then,
\begin{eqnarray}
c(\varrho(X)) &=& \sum_{i=1}^{k} x_i y_{\varrho'(i)} + x_{k+1} y_{k+1}
+ ( x_{\varrho^{-1}(k+1)} - x_{k+1} )( y_{k+1} - y_{\varrho(k+1)}
)\nonumber\\
&=& \sum_{i=1}^{k} x_i y_{\varrho'(i)} + x_{k+1} y_{k+1} +
r_{\varrho^{-1}(k+1),k+1} s_{k+1,\varrho(k+1)}\label{E-lemma-perm-3}
\end{eqnarray}

For proving Equation~(\ref{E-lemma-perm-1}), we replace the first
term in Equation~(\ref{E-lemma-perm-3}) by the inductive hypothesis
of Equation~(\ref{E-lemma-perm-1}), and then obtain:
\begin{eqnarray}
c(\varrho(X)) &\geq& \sum_{i=1}^{k+1} x_i y_i + \sum_{i=1}^{k-1}
r_{i,i+1} s_{i+1,i} + r_{\varrho^{-1}(k+1),k+1} s_{k+1,\varrho(k+1)} \nonumber \\
&\geq& \sum_{i=1}^{k+1} x_i y_i + \sum_{i=1}^{k} r_{i,i+1}
s_{i+1,i}\nonumber
\end{eqnarray}
since $x_{\varrho^{-1}(k+1)} \geq x_k$ and $y_{\varrho(k+1)} \leq
y_k$.

For proving Equation~(\ref{E-lemma-perm-2}), we replace the first
term in Equation~(\ref{E-lemma-perm-3}) by the inductive hypothesis
of Equation~(\ref{E-lemma-perm-2}), and then obtain:
\begin{eqnarray}
c(\varrho(X)) &\geq& \sum_{i=1}^{k+1} x_i y_i + \sum_{i=1}^{k-1}
r_{i,i+1} s_{i+1,i} + k-2 + r_{\varrho^{-1}(k+1),k+1} s_{k+1,\varrho(k+1)}\nonumber\\
&\geq& \sum_{i=1}^{k+1} x_i y_i + \sum_{i=1}^{k} r_{i,i+1} s_{i+1,i}
+ k-1\nonumber
\end{eqnarray}
since $( x_{\varrho^{-1}(k+1)} - x_{k+1} )( y_{k+1} - y_{\varrho(k+1)}
) \geq (x_k - x_{k+1})(y_{k+1} - y_k) + 1$ by the premise of
Equation~(\ref{E-lemma-perm-2}) (Note that the permutation consists
of a nontrivial factor of size $n$, and hence the case
$\varrho^{-1}(k+1)=\varrho(k+1)=k$ does not occur except for $n=2$).
%\hfill\hfill\qed
\end{proof}

Now, we are ready to show our result:

\begin{thm}
%\label{thm-DE3-approx1}
\label{thm-DE4-approx1}%
There exist $2$-approximation algorithms for SOP3 and SOP4, which
run in $O(n^2)$ time.
\end{thm}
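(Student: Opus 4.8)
The plan is to prove the bound for SOP4 in detail and observe that SOP3 follows verbatim with $V\times U$ fixed by the prescribed assignment $t$ (the fixed-assignment case being the simpler one). The argument compares both the optimum matching $N_\opt$ and the matching $N_\APX$ returned by Algorithm~\ref{alg:SOP4} against the \emph{diagonal} matching $N_D$ that pairs $M_i$ with $m_i$, and it isolates two quantities: the diagonal cost $c(N_D)=\sum_{i=1}^n M_i m_i$ and the weight $c(T_\chi)=\sum_{e\in E_{T_\chi}}\psi(e)$ of the minimum spanning tree of the exchange graph. The target is the pair of inequalities $c(N_\opt)\ge c(N_D)+c(T_\chi)$, call it (L), and $c(N_\APX)-c(N_D)\le c(N_D)+2\,c(T_\chi)$, call it (U); together they give $c(N_\APX)\le 2c(N_D)+2c(T_\chi)=2(c(N_D)+c(T_\chi))\le 2\,c(N_\opt)$, which is exactly the claimed ratio. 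I would first note feasibility: because the merges follow a spanning tree of the $\eta$ subcycles of $I_0\cup N_D$, the output $I_0\cup N_\APX$ is a single Hamiltonian cycle.

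For (L) I would first record the universal lower bound $c(N_D)\le c(N_\opt)$: by the rearrangement inequality, matching the decreasing sequence $M_1\ge\cdots\ge M_n$ with the increasing sequence $m_1\le\cdots\le m_n$ minimizes the sum of products over all matchings, so $N_D$ is optimal among unconstrained matchings. To capture the extra $c(T_\chi)$ I would invoke Lemma~\ref{lemma-perm}: since $I_0\cup N_\opt$ is a single Hamiltonian cycle, the associated permutation is one nontrivial factor of size $n$, and inequality~(\ref{E-lemma-perm-1}) gives $c(N_\opt)\ge c(N_D)+\sum_{i=1}^{n-1}r_{i,i+1}s_{i+1,i}$. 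Reading each consecutive product $r_{i,i+1}s_{i+1,i}$ whose indices $i,i+1$ lie in two different subcycles of $I_0\cup N_D$ as an edge of the exchange graph, these products dominate the corresponding minima $\psi(\cdot)$ and collectively connect all $\eta$ subcycle-nodes; hence $\sum_{i=1}^{n-1}r_{i,i+1}s_{i+1,i}\ge c(T_\chi)$, and (L) follows.

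For (U) the point is to analyze the cyclic shift matching performed inside each merged group $S_i'=\{m_1',\dots,m_l',M_l',\dots,M_1'\}$: its excess over the group's diagonal telescopes to $C_g=\sum_{j=1}^{l-1}(M_j'-M_l')(m_{j+1}'-m_j')\ge 0$. Summed over groups, the within-group diagonals (taken over representatives, which are a subset of all pairs) contribute at most $c(N_D)$, while the spanning-tree edges that built the groups contribute exactly $c(T_\chi)$, so (U) reduces to the per-group estimate that $C_g$ is at most the group's own diagonal cost $\sum_j M_j'm_j'$ plus twice the weight of the spanning-tree edges joining that group's subcycles. Here the small-gap part of the telescoped sum is charged to twice those spanning-tree edges, while the large-product part is absorbed by the diagonal. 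Making this estimate precise — showing the diagonal genuinely dominates the residual products under the worst clustering of sub-wedge sizes, and accounting for the duplicate representatives produced by the merge step (Lines~5--12) — is the step I expect to be the main obstacle; by contrast (L) and the final arithmetic are routine.

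For the running time I would note that sorting the $2n$ sub-wedges is $O(n\log n)$, extracting the $\eta\le n$ subcycles of $I_0\cup N_D$ is $O(n)$, constructing the complete exchange graph $\chi$ together with its weights $\psi(e)$ and computing a minimum spanning tree over its $\eta\le n$ vertices (e.g.\ by Prim's method) is $O(n^2)$, and the grouping and cyclic matching in the remaining lines are $O(n)$; the total is $O(n^2)$. The identical pipeline, with $V$ and $U$ determined directly by the prescribed assignment $t$, establishes the same bound for SOP3.
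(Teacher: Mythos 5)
Your overall architecture --- bounding $c(N_\opt)$ from below by $c(N_D)+c(T_\chi)$ and bounding $c(N_\APX)$ from above in terms of the same two quantities --- is the same as the paper's, and your lower bound (L) is essentially the paper's claim $c(N_\opt)\ge c_\LB$. But the decisive step, your inequality (U), is exactly the part you leave unproven (``the step I expect to be the main obstacle''), so the proposal does not yet contain a proof of the approximation ratio. Moreover, the route you sketch for (U) --- charging the ``small-gap part'' of the telescoped excess $C_g=\sum_{j=1}^{l-1}(M_j'-M_l')(m_{j+1}'-m_j')$ to \emph{twice} the group's spanning-tree edges and absorbing the rest into the group's diagonal --- is not what the paper does, and it is not clear it can be made to work: an individual product $(M_j'-M_l')(m_{j+1}'-m_j')$ can exceed every single edge weight $\psi(e)$ of the group, so a term-by-term charge of this kind has no obvious justification. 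The paper instead argues: (i) by minimality of $T_\chi$, each diagonal term $M_im_i$ with $i\in i(\mathbb{S})$ occurs at most twice among the endpoint values $\{M_{\mu(e)}m_{\mu(e)},M_{\nu(e)}m_{\nu(e)}:e\in E_{T_\chi}\}$, so $2\sum_i M_im_i$ pays for all of these plus the untouched diagonal terms; (ii) the exact identity $M_am_a+M_bm_b+r_{a,b}s_{b,a}=M_am_b+M_bm_a$ converts each tree edge together with its two diagonal endpoints into a crossed pair; and (iii) these crossed pairs, reordered inside each $S_i'$, dominate the consecutive products $M_j'm_{j+1}'+M_{j+1}'m_j'$ of the cyclic-shift matching, the closing edge being handled by $M_{|S_i'|}'m_1'\le M_2'm_1'$. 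Some version of (i)--(iii) is what you would need to supply to complete (U); note the paper in fact obtains the stronger bound $c(N_\APX)\le 2c(N_D)+c(T_\chi)$.

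A secondary problem lies in (L): you invoke Inequality~(\ref{E-lemma-perm-1}) on the grounds that ``since $I_0\cup N_\opt$ is a single Hamiltonian cycle, the associated permutation is one nontrivial factor of size $n$.'' That implication is false: the cycle structure of $I_0\cup N_\opt$ (which alternates $I_0$-edges and matching edges) is unrelated to the cycle structure of $N_\opt$ viewed as a permutation from $V$ to $U$, and the paper explicitly treats $N_\opt$ as a permutation with several factors. The bound can be salvaged (apply Lemma~\ref{lemma-perm} to each nontrivial factor and use superadditivity of $r_{a,c}s_{c,a}$ over intermediate indices, combined with your connectivity observation to reach $c(T_\chi)$), but as written the step is not justified. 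Your running-time analysis is fine and matches the paper's $O(n^2)$ bound, with the exchange-graph construction as the bottleneck.
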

\begin{proof}
Recall that given an unordered tree with fixed (resp., flexible) subwedges, the SOP3 (resp., SOP4) problem is to find a circular permutation $\sigma$ of $\{1, \cdots, n\}$ (resp., a circular permutation $\sigma$ of $\{1, \cdots, n\}$ and a sub-wedge assignment $t$) so that the sum of products of adjacent subwedge sizes ($SOP_{\sigma,t}$) is as small as possible.
We only consider SOP4; the proof of SOP3 is similar and simpler.
In what follows, we show that Algorithm~\ref{alg:SOP4} correctly produces the 2-approximation solution for SOP4 in $O(n \log n)$ time. %By Claim 1 and Claim 2 in proof of Theorem~\ref{thm-DE4}, since $N_\opt$ can be transformed from $N_D$ by
%a sequence of exchanges and there are no transition edge crossings
%in $N_D$

From \cite{KS2009}, we have $c(N_\opt) \geq c(N_D)$, which is explained briefly as follows.
From \citep{KS2009}, we have that $N_D$ can be transformed from $N_\opt$
by a sequence of exchanges $x_1, x_2, \cdots, x_n$ which can be
constructed as follows. Let $N_k$ denote the matching transformed by
the sequence of exchanges $x_1, x_2, \cdots, x_k$ for $k \leq n$. We
say a node $v$ in $V$ is {\em satisfied} in $N_k$ if its adjacent
node in $N_k$ is the same as its adjacent node in $N_D$. For $i = 1,
2, \cdots, n$, if the sub-wedge $M_i$ is satisfied, then $x_i$ is a
null exchange. Otherwise, if the node adjacent to $M_i$ in $N_i$ is
adjacent to the sub-wedge $M_j$ in $N_\opt$ for $i \neq j$ (i.e.,
$M_i$ is not adjacent to $m_i$ in $N_i$), then let $x_i$ be the
exchange between the edges respectively incident to
$M_i$ and $M_j$ in $N_i$. % (respectively denoting $e_{M_i}^{N_i}$ and $e_{M_j}^{N_i}$).
Here, by observing each non-null exchange $x_i$, $\phi(N_\opt) -
\phi(N_i) = r_{i,j}s_{j,i} \geq 0$. Hence, $\phi(N_\opt) \geq
\phi(N_n) = \phi(N_D)$.

Let
\begin{equation}
c_\LB = \sum_{i=1}^n M_i m_i + \sum_{e\in E_{T_\chi}}
r_{\mu(e),\nu(e)} s_{\nu(e),\mu(e)}.\nonumber
\end{equation}
We claim that $c(N_\opt) \geq c_\LB$. Since $I_0 \cup N_\opt$ is a
Hamiltonian cycle transformed from $I_0 \cup N_D$ consisting of
$\eta$ subcycles, there exist at least $\eta - 1$ times of merging
subcycles during the transformation (the sequence of exchanges). We
can view $N_\opt$ as a permutation with several factors
%\footnote{The preliminaries for permutation refer to Appendix.}
. There must exist a set $\Lambda$ of $\eta-1$ edges in $E_\chi$
forming a spanning tree for exchange graph $\chi$ such that each
edge in $\Lambda$ must correspond to an edge in $N_\opt$ which
cannot be in a trivial factor of permutation $N_\opt$, i.e., it
cannot be $M_i m_i$ for some $i$. Therefore, by
Inequality~(\ref{E-lemma-perm-1}) of Lemma~\ref{lemma-perm},
$c(N_\opt) \geq \sum_{i=1}^n M_i m_i + \sum_{e \in \Lambda}
r_{\mu(e),\nu(e)} s_{\nu(e),\mu(e)} \geq \sum_{i=1}^n M_i m_i +
\sum_{e\in E_{T_\chi}} r_{\mu(e),\nu(e)}s_{\nu(e),\mu(e)} = c_\LB$
since $E_{T_\chi}$ is the edge set of minimum spanning tree of
$\chi$.

In what follows, we show the approximation ratio to be 2. Note that
$N_\APX$ denotes the matching generated by Algorithm~\ref{alg:SOP4}.
Let $\mathbb{S}=\cup_{i=1}^{\eta-1} S_i$ and
$i(\mathbb{S})=\cup_{\forall e_i \in E_{T_\chi}}\{\mu(e_i),
\nu(e_i)\}$ in Algorithm~\ref{alg:SOP4}.
\begin{eqnarray}
&&2c(N_\opt) \geq 2 c_\LB % = 2  \left( \sum_{i=1}^n M_i m_i + \sum_{e\in E_{T_\chi}} r_{\mu(e),\nu(e)}s_{\nu(e),\mu(e)} \right)\nonumber\\
\geq 2 \sum_{i=1}^n M_i m_i + \sum_{e\in E_{T_\chi}} r_{\mu(e),\nu(e)}s_{\nu(e),\mu(e)}\nonumber\\
&&\geq \sum_{e\in E_{T_\chi}} (M_{\mu(e)}m_{\mu(e)} +
M_{\nu(e)}m_{\nu(e)}) + \sum_{i \in \{1,2,\cdots,n\} \setminus
i(\mathbb{S}) } M_i m_i + \sum_{e\in E_{T_\chi}}
r_{\mu(e),\nu(e)}s_{\nu(e),\mu(e)}\nonumber
\end{eqnarray}
The last inequality above holds since $M_i m_i$ for any $i \in
i(\mathbb{S})$ never presents in the first summation term more than
twice; otherwise we can find another spanning tree with cost
strictly less than that of $T_\chi$. For example, we consider
Figure~\ref{fgAlgSOP4}(c). Suppose that the cost of edge $e_3$ in
$T_\chi$ is $r_{2,5}s_{5,2}$, rather than $r_{5,8}s_{8,5}$, i.e.,
$M_2 m_2$ is used three times by $e_1$, $e_3$, and $e_4$ (with costs
$r_{2,6}s_{6,2}$, $r_{2,5}s_{5,2}$, and $r_{2,9}s_{9,2}$,
respectively). We can obtain a contradiction by considering a
spanning tree $T$ replacing edge $e_4$ by edge $C_4 C_5$ with cost
$r_{5,9}s_{9,5}$, which is less than $r_{2,9}s_{9,2}$ in general.
(The cost of $T$ is less than that of $T_\chi$.)

Recall that $r_{a,b} = M_a - M_b$ and $s_{c,d} = m_c - m_d$. Hence,
combining the first and third terms of the above inequality, we
obtain:
\begin{eqnarray}
2c(N_\opt) &\geq& \sum_{e \in E_{T_\chi}} (M_{\mu(e)}m_{\nu(e)} +
M_{\nu(e)}m_{\mu(e)}) + \sum_{i \in \{1,2,\cdots,n\} \setminus
i(\mathbb{S}) } M_i m_i\nonumber\\
&\geq& \sum_{i=1}^{\eta-1} \sum_{j=1}^{|S_i'|-1} (M_j'm_{j + 1}' +
M_{j + 1}'m_j') + \sum_{i \in \{1,2,\cdots,n\} \setminus
i(\mathbb{S}) } M_i m_i\nonumber
\end{eqnarray}
The above inequality holds due to $\mu(e) \neq \nu(e)$ for any $e
\in E_\chi$. Since $M_2'm_1' \geq M_{|S_i'|}'m_1'$ in every $S_i'$,
we obtain:
\begin{eqnarray}
2c(N_\opt) &\geq&  \sum_{i=1}^{\eta-1} \left( \sum_{j=1}^{|S_i'|-1}
(M_j'm_{j + 1}') + M_{|S_i'|}'m_1'\right) + \sum_{i \in
\{1,2,\cdots,n\} \setminus i(\mathbb{S}) } M_i m_i =
c(N_\APX)\nonumber
\end{eqnarray}

In what follows, we explain how the algorithm runs in $O(n^2)$ time.

\bigskip In Line 1, the exchange graph can be constructed in $O(n^2)$ time as follows.
It takes $O(n^2)$ time to construct a complete graph $\chi$ with $\eta \leq n$ nodes in which the nodes corresponds $\eta$ subcycles in $I_0 \cup N$, and the cost of each edge is assumed to be infinity.
Then, it takes $O({{n}\choose{2}}) = O(n^2)$ time to compute all possible $r_{a,b} s_{b,a} = (M_a - M_b) (m_b - m_a)$ for any $a, b \in \{1, \cdots, n\}$. Consider each $r_{a,b} s_{b,a}$. If $M_a$ and $M_b$ belong to two different subcycles in $I_0 \cup N$, say $C_u$ and $C_v$, respectively, and $r_{a,b} s_{b,a} < \psi(e_i)$ for their corresponding edge $e_i = (u,v)$ in graph $\chi$, then $\psi(e_i) \leftarrow r_{a,b} s_{b,a}$.
Obviously, after considering all possible $r_{a,b} s_{b,a}$ in $O(n^2)$ time, graph $\chi$ is the required exchange graph.

In Line 2, it is well-known that the minimum spanning tree for graph $\chi$ can be found in $O(n \log n)$ time.
Line~3 runs in $O(n)$ time since each element is denoted only once. Line~4 is done in $O(n)$ time.

We explain how Lines~5--13 can be done in $O(n)$ time as follows.
Note that in Line~3, in addition that each set includes four elements, we record that each element knows which set includes it.
Hence, in Line~7, any set $S_b$ including element $x$ can be found in $O(1)$ time. Line~8 is done in $O(1)$ time, since each set is a linked list. Note that in Line~7 all the sets that includes element $x$ will be considered at the end of Line~12, because in Line~8 a duplicate element of $x$ is appended to $S_a$ and will be considered again in later iteration. Lines~9 and 10 are done in $O(1)$ time.
Therefore, Lines~7--10 are done in $O(1)$ time. We observe from Lines 5, 6, 8, 10 that each element in $S_1$, \dots, $S_{\eta-1}$ is considered once at the end of Line~12.
Since the number of elements in $S_1, \cdots, S_{\eta-1}$ is $4(\eta-1)$, there are $4(\eta-1)$ iterations, each of which is done in $O(1)$ time. Hence, Lines~5--12 are done in $O(4(\eta-1)) = O(n)$ time.
In Line~13, by scanning each set in $S$, all duplicate elements are deleted in $O(n)$ time.

Line 14 can be done in $O(n)$ time, because the ordering of $\{m_1, m_2, \cdots, m_n,$ $M_n, M_{n-1}, \cdots, M_1\}$ is known.
Lines~15--18 are done in $O(n)$ time, because each element is matched only once.
%\hfill\hfill\qed
\end{proof}

Note that Algorithm~\ref{alg:SOP4} is a 2-approximation algorithm
for the SOP4 problem rather than the DE4 problem because the
approximation ratio is incorrect when the minus of the first and
third items inside the square root of Equation~(\ref{E-stdDev2}) is
negative. Therefore, we rewrite Equation~(\ref{E-stdDev2}) as:
{\footnotesize
\begin{eqnarray}
StdDev_{\sigma,t}%
= \sqrt{\frac{\sum_{i=1}^n(M_i^2 + m_i^2)}{n}%
+ \frac{2\sum_{i=1}^n w_{t_i'}(\sigma_i) w_{t_{i\oplus 1}'}(\sigma_{i\oplus 1})}{n}%
- \left(\frac{\sum_{i=1}^n (M_i + m_i)}{n}\right)^2}\nonumber\\
= \sqrt{\frac{\sum_{i=1}^n(M_i + m_i)^2}{n}%
+ \frac{-2\sum_{i=1}^n M_i m_i}{n} + \frac{2\sum_{i=1}^n w_{t_i'}(\sigma_i) w_{t_{i\oplus 1}'}(\sigma_{i\oplus 1})}{n}%
- \left(\frac{\sum_{i=1}^n (M_i+m_i)}{n}\right)^2}.\nonumber%\label{E-stdDev4}%
\end{eqnarray}
}Note that the combination of first and fourth items inside the
square root of the above equation is the variance of $\{M_1+m_1,
M_2+m_2, \cdots, M_n+m_n\}$, and hence must be positive. Therefore,
the DE4 problem is equivalent to minimizing the sum of the second
and third items, i.e., to minimize
\begin{eqnarray}
\sum_{i=1}^n w_{t_i'}(\sigma_i) w_{t_{i\oplus 1}'}(\sigma_{i\oplus 1}) - \sum_{i=1}^n M_i m_i = SOP_{\sigma, t} - \sum_{i=1}^n M_i m_i.\nonumber%
\end{eqnarray}
Algorithm~\ref{alg:DE4} provides an $O(\sqrt{n})$-approximation
algorithm for DE4. A slight modification also yields an
$O(\sqrt{n})$-approximation algorithm for DE3.
Figure~\ref{fgAlgDE4}(a) is an example for Algorithm~\ref{alg:DE4}.

\begin{algorithm}[tbp]{\small
\floatname{algorithm}{Algorithm} \caption{$\mbox{\sc
ApproxBalloonDrawing-DE4}$ } \label{alg:DE4} %
The algorithm is almost the same as Algorithm~\ref{alg:SOP4} except
Lines~15--18 in Algorithm~\ref{alg:SOP4} is replaced as follows:
\smallskip\\
13': {\bf for each} $S_i'$ with $|S_i'| \geq 2$ (otherwise trivially) {\bf do}\\
14': \ \ \ let $r_{a,b}' = M_a'- M_b'$ and $s_{c,d}' = m_c' - m_d'$\\
15': \ \ \ an element is said to be {\em available} if it is not matched yet\\
16': \ \ \ {\bf for each} $j = 1, \cdots, |S_i'|-2$ {\bf do}\\
17': \ \ \ \ \ \ {\bf if} $r_{j,j+1}' \geq r_{j+1,j+2}'$ {\bf do}\\
18': \ \ \ \ \ \ \ \ \ the available maximum is matched with the available second minimum\\
19': \ \ \ \ \ \ {\bf else}\\
20': \ \ \ \ \ \ \ \ \ the available minimum is matched with the available second maximum\\
21': \ \ \ \ \ \ {\bf end if}\\
22': \ \ \ {\bf end for}\\
\vspace{-0.8cm}
\begin{quote}{\small
\hspace{-1.16cm} 23': \ \ $M_a'$ is matched with $m_{|S_i'|}'$; $M_{|S_i'|}'$ is matched with $m_b'$, where $M_a'$ and
$m_b'$ are the remaining elements excluded in the above condition for some $a, b \in \{1,\cdots,|S_i'|-1\}$\\
}
\end{quote}
\vspace{-0.8cm}
24': {\bf end for}
%\begin{algorithmic}[1]
%\STATE let $r_{a,b}' = M_a'- M_b'$ and $s_{c,d}' = m_c' - m_d'$
%\STATE an element is said to be {\em available} if it is not matched yet
%\FOR {{\bf each} $S_i'$ with $|S_i'| \geq 2$ (otherwise trivially)}
%    \FOR{{\bf each} $j = 1, \cdots, |S_i'|-2$}
%            \IF {$r_{j,j+1}' \geq r_{j+1,j+2}'$}
%                \STATE the available maximum is matched with the available second minimum
%            \ELSE
%                \STATE the available minimum is matched with the available second maximum
%            \ENDIF
%    \ENDFOR
%\ENDFOR
%\STATE $M_a'$ is matched with $m_{|S_i'|}'$; $M_{|S_i'|}'$ is
%matched with $m_b'$, where $M_a'$ and $m_b'$ are the remaining
%elements excluded in the above condition for some $a, b \in
%\{1,\cdots,|S_i'|-1\}$.
%\end{algorithmic}
}\end{algorithm}

\begin{figure}[tbp]
\centering \scalebox{0.7}{\includegraphics{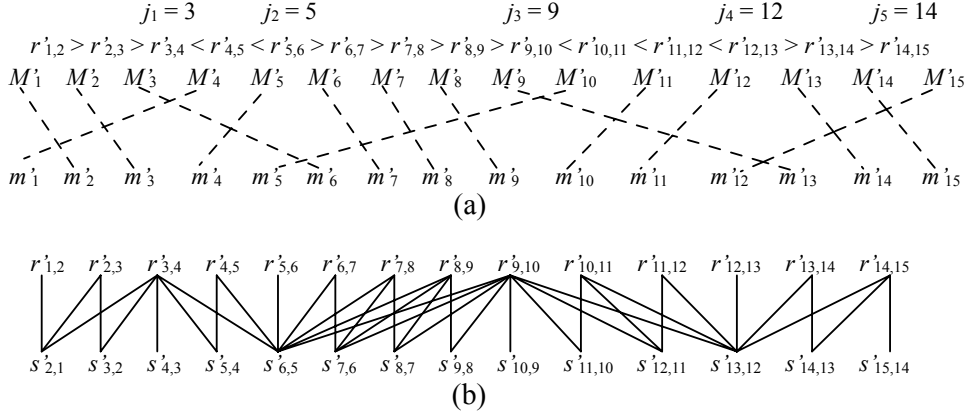}}%
\centering \caption{An example showing how Algorithm~\ref{alg:DE4}
works. (a) Certain $N_{S_i'}$ with $|S_i'|=15$ in $N_\APX$. (b)
Illustration of $c(N_{S_i'})-\sum_{j=1}^{|S_i'|}M_j'm_j'$ induced by
(a).} \label{fgAlgDE4}
\end{figure}

\begin{figure}[tbp]
\centering \scalebox{0.7}{\includegraphics{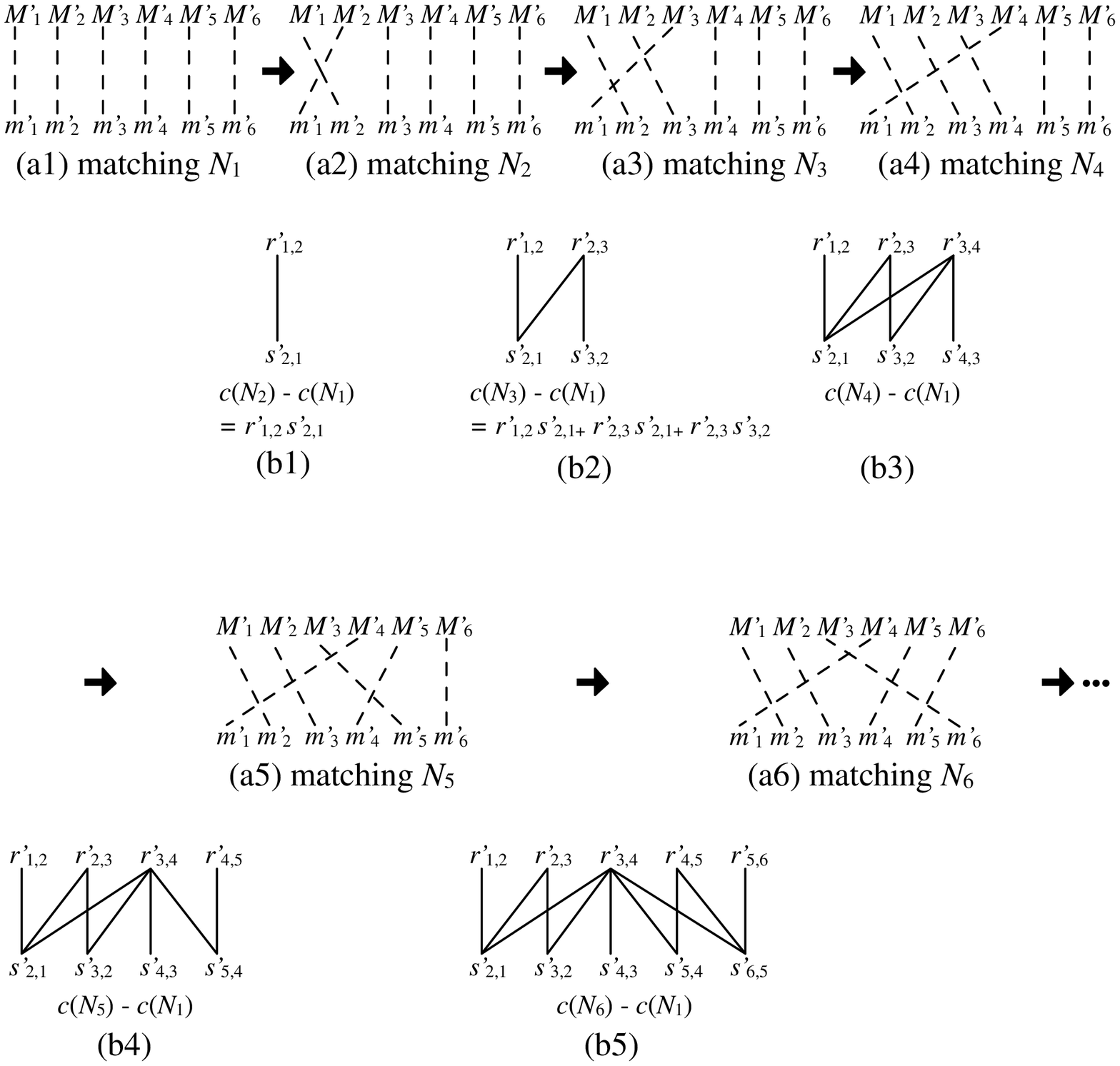}}%
\centering \caption{Illustration of the first several intermediate
steps of how to obtain Figure~\ref{fgAlgDE4}(b) from
Figure~\ref{fgAlgDE4}(a). For $i = 1, \cdots, 5$, matching $N_{i+1}$
is obtained by exchanging two edges in $N_i$, as shown from (a$i$)
to (a$(i+1)$). (b$i$) computes $c(N_{i+1})-c(N_i)$, and illustrates
the relation of the terms used in the cost difference as a bipartite
graph, in which each edge represents their multiplication relation.}
\label{fgAlgDE4pattern}
\end{figure}

\begin{thm}
\label{thm-DE3-approx2}
\label{thm-DE4-approx2}%
There exist $O(\sqrt{n})$-approximation algorithms for DE3 and DE4,
which run in $O(n^2)$ time.
\end{thm}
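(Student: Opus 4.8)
The plan is to reduce DE4 to minimizing the ``excess'' quantity $\Delta_{\sigma,t}=SOP_{\sigma,t}-\sum_{i=1}^n M_i m_i$, which was shown immediately before the statement to be an increasing function of $StdDev_{\sigma,t}$ (the other terms under the square root assemble into the positive, permutation--independent variance of $\{M_i+m_i\}$). In these coordinates the unconstrained optimum is attained by the matching $N_D$ pairing $M_i$ with $m_i$, where $\Delta=0$, and the sole obstruction is the requirement that $I_0\cup N$ be a single Hamiltonian cycle. From the proof of Theorem~\ref{thm-DE4-approx1} we already inherit the lower bound $c(N_\opt)\ge c_\LB$, i.e. $\Delta_\opt\ge\sum_{e\in E_{T_\chi}} r_{\mu(e),\nu(e)}\,s_{\nu(e),\mu(e)}$ (the cost of $T_\chi$). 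As the text notes, the $2$-approximation for $SOP$ is useless here, since $\Delta_\opt$ can be far smaller than $c(N_\opt)$; the whole task is thus to bound $\Delta_\APX$ against $\Delta_\opt$ directly.

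For the lower bound I would sharpen this estimate using \emph{both} parts of Lemma~\ref{lemma-perm}. Inequality~(\ref{E-lemma-perm-1}) yields the spanning--tree term; inequality~(\ref{E-lemma-perm-2}) charges, for every nontrivial factor of the permutation $N_\opt$ of size $s$, an additional additive $s-2$ once the gap products are unit--separated. Since scaling the whole instance by the smallest positive gap product $g_{\min}$ multiplies every candidate cost by the same factor and hence leaves the ratio unchanged, I may assume the separation premise of the lemma. Because fusing the $\eta$ subcycles of $I_0\cup N_D$ into one Hamiltonian cycle forces $N_\opt$ to carry, inside nontrivial factors, the pairs making up each merged component $S_i'$ that Algorithm~\ref{alg:DE4} builds, summing the $s-2$ charges gives $\Delta_\opt=\Omega\bigl(\sum_i l_i\bigr)$, where $l_i$ is the number of pairs in $S_i'$.

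The heart of the argument is the matching upper bound on $\Delta_\APX$, obtained from the greedy within--set construction of Algorithm~\ref{alg:DE4} (lines~13'--24'). Following Figure~\ref{fgAlgDE4pattern}, I would start from $N_D$ restricted to $S_i'$ and write $c(N_{S_i'})-\sum_j M_j'm_j'$ as a telescoping sum of the increments $c(N_{k+1})-c(N_k)=r'\cdot s'$ produced by the $O(l_i)$ exchanges the greedy performs. The greedy test $r_{j,j+1}'\ge r_{j+1,j+2}'$ is exactly what keeps each increment proportional to a single \emph{local} gap rather than to $M_1'$, so the products stay $O(\sqrt{l_i})$ on average and the telescoping sum is bounded by $O(l_i^{3/2})$ on the same $g_{\min}$--normalized scale as the lower bound, in contrast to the $O(l_i^2)$ excess that the plain single cycle of Algorithm~\ref{alg:SOP4} (lines~15--18) would incur. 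Summing over components and using $\sum_i l_i^{3/2}\le\sqrt{\max_i l_i}\,\sum_i l_i\le\sqrt{n}\,\sum_i l_i$ together with $\Delta_\opt=\Omega(\sum_i l_i)$ then gives $\Delta_\APX\le O(\sqrt{n})\,\Delta_\opt$, which is the claimed ratio. DE3 is handled identically after fixing $t$ so that the bipartition $V\times U$ is the prescribed one. The running time is inherited from Algorithm~\ref{alg:SOP4}: the exchange graph costs $O(n^2)$, its minimum spanning tree $O(n\log n)$, and the greedy re--matching only $O(n)$, for $O(n^2)$ overall.

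I expect the per--component upper bound to be the main obstacle. Proving that the greedy's $O(l_i)$ exchanges sum to $O(l_i^{3/2})$ rather than the naive $O(l_i^{2})$ requires an amortized accounting of the increments $r'\cdot s'$ along the pattern of Figure~\ref{fgAlgDE4pattern}, with a case split on the two branches of the greedy test. The truly delicate point is reconciling the gap \emph{scales}: one must verify that the products entering this upper bound and those entering the Lemma~\ref{lemma-perm} lower bound live on the same $g_{\min}$--normalized scale, so that the two estimates compose without an extra $g_{\max}/g_{\min}$ loss. It is precisely the separation premise, supplied after normalization, that makes these scales comparable, and the tightness of the $\sqrt{n}$ bound rests on this step.
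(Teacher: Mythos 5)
Your reduction of DE4 to minimizing the excess $\Delta_{\sigma,t}=SOP_{\sigma,t}-\sum_i M_i m_i$ and your use of the lower bound $c(N_\opt)\ge c_\LB$ both match the paper, but the two load-bearing steps of your plan do not hold up. First, the normalization trick you use to invoke Inequality~(\ref{E-lemma-perm-2}) fails: rescaling the instance multiplies every gap product $r\cdot s$ by the same factor, but the premise of that inequality is a lower bound of $1$ on \emph{differences} of gap products, $r_{j,i+1}s_{i+1,j'}-r_{i,i+1}s_{i+1,i}\ge 1$, and these differences can be identically zero (e.g.\ when all gaps are equal) no matter how you scale. Consequently the sharpened lower bound $\Delta_\opt=\Omega(\sum_i l_i)$ is not available, and without it your final comparison mixes a count-valued lower bound with a gap-product-valued upper bound --- precisely the ``scale reconciliation'' you flag at the end, which is not a delicate point to be checked but the place where the argument breaks. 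Second, the claimed $O(l_i^{3/2})$ amortized bound on the greedy's per-component excess is asserted without proof, and there is no reason to believe it: the paper's own analysis of the same greedy (via the telescoping in Inequality~(\ref{E-jRelation}) and the displayed expansion of $c(N_{S_i'})-\sum_j M_j'm_j'$) only shows that each local gap product $r'_{j,j+1}s'_{j+1,j}$ appears with a coefficient bounded by $|S_i'|\le n$, i.e.\ a factor $l_i$, not $\sqrt{l_i}$.

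The actual mechanism in the paper is different and simpler: both bounds are kept in the same units. The lower bound is $\Delta_\opt\ge\sum_{e\in E_{T_\chi}}r_{\mu(e),\nu(e)}s_{\nu(e),\mu(e)}\ge\sum_i\sum_j r'_{j,j+1}s'_{j+1,j}$ (Inequality~(\ref{E-DE4-1})), the upper bound is $\Delta_\APX\le n\sum_i\sum_j r'_{j,j+1}s'_{j+1,j}$ (Inequality~(\ref{E-DE4-2})), and together they give only $\Delta_\APX\le n\,\Delta_\opt$ --- a factor $n$, not $\sqrt{n}$, on the excess. The $O(\sqrt{n})$ in the theorem then comes for free from the square root in $StdDev_{\sigma,t}=\sqrt{V+\tfrac{2}{n}\Delta_{\sigma,t}}$ with $V\ge 0$ a permutation-independent variance term, so that a factor-$n$ guarantee on $\Delta$ becomes a factor-$\sqrt{n}$ guarantee on the standard deviation. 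Your proposal instead tries to extract the $\sqrt{n}$ from the combinatorial accounting itself (which would actually yield the stronger $O(n^{1/4})$ on $StdDev$ if it worked), and the two steps needed to do so are exactly the ones that are unsupported.
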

\begin{proof}
Recall that given an unordered tree with fixed (resp., flexible) subwedges, the DE3 (DE4) problem is to find a circular permutation $\sigma$ of $\{1, \cdots, n\}$ (resp., a circular permutation $\sigma$ of $\{1, \cdots, n\}$ and a sub-wedge assignment $t$) so that the standard deviation of angles ($StdDev_{\sigma,t}$) is as small as possible.
We only concern DE4; the proof of DE3 is similar and simpler.

In what follows, we show that Algorithm~\ref{alg:DE4} correctly
produces $O(\sqrt{n})$-approximation solution in $O(n\log n)$ time.
Let $N_\opt$ be the matching for $V\times U$ witnessing the optimal
solution of the DE4 problem, and $N_\APX$ be the matching generated
by Algorithm~\ref{alg:DE4}. From Theorem~\ref{thm-DE4-approx1},
$c(N_\opt) \geq c_\LB$, and hence
\begin{eqnarray}
n \left( c(N_\opt) - \sum_{i=1}^n M_i m_i \right)%
&\geq& n \left( c_\LB - \sum_{i=1}^n M_i m_i \right) = n \sum_{e\in E_{T_\chi}} r_{\mu(e),\nu(e)} s_{\nu(e),\mu(e)} \nonumber \\
&\geq& n \sum_{i=1}^{\eta-1} \left( \sum_{j=1}^{|S_i'|-1} r_{j,j+1}' s_{j+1, j}' \right)\label{E-DE4-1}%
\end{eqnarray}
since $\mu(e) \neq \nu(e)$ for every edge $e \in E_\chi$. Observing
the matching $N_{S_i'}$ for each $S_i'$ generated by
Algorithm~\ref{alg:DE4} (e.g., see also Figure~\ref{fgAlgDE4}(a)),
without lose of generality, we assume that $1 \leq j_1 \leq j_2 \leq
\cdots \leq j_k \leq \cdots \leq j_h = |S_i'|-1$ and $h$ is odd such
that in $S_i'$, {\footnotesize
\begin{eqnarray}
\begin{array}{l@{\hspace{0.1mm}}l@{\hspace{0.1mm}}l@{\hspace{0.1mm}}l@{\hspace{0.1mm}}l@{\hspace{0.1mm}}l@{\hspace{0.1mm}}l@{\hspace{0.1mm}}}
                             & \ \ \ r_{1,2}'                 & \geq r_{2,3}' \geq \cdots & \geq r_{j_1,j_1+1}'         & \leq r_{j_1+1,j_1+2}'         & \leq \cdots \leq r_{j_2, j_2+1}'        & \geq \cdots \\
 \ \ \ \cdots &&& \ \ \ \cdots \\
\leq r_{j_k,j_k+1}'          & \geq r_{j_k+1,j_k+2}'          & \geq \cdots               & \geq r_{j_{k+1},j_{k+1}+1}' & \leq r_{j_{k+1}+1,j_{k+1}+2}' & \leq \cdots \leq r_{j_{k+2},j_{k+2}+1}' & \geq \cdots \\
 \ \ \ \cdots &&& \ \ \ \cdots\\
\leq r_{j_{h-1}, j_{h-1}+1}' & \geq r_{j_{h-1}+1, j_{h-1}+2}' & \geq \cdots               & \geq r_{j_h-1, j_h}'.%
\end{array}\label{E-jRelation}
\end{eqnarray}
}

Inequality (\ref{E-jRelation}) is explained as follows.
Since Line~17' in Algorithm~\ref{alg:DE4} considers the relationship between $r_{j,j+1}'$ and $r_{j+1,j+2}'$ for $j = 1, \cdots, |S_i'|-2$,
thus, without loss of generality, we use $h+1$ numbers (i.e., $1 \leq j_1 \leq j_2 \leq \cdots \leq j_k \leq \cdots \leq j_h = |S_i'|-1$) to classify all $r_{j,j+1}'$ data.
Then the data is alternately expressed as Inequality (\ref{E-jRelation}), in which $r_{j_1,j_1+1}'$, $r_{j_3,j_3+1}'$, \dots are local minimal; $r_{1,2}'$, $r_{j_2,j_2+1}'$, $r_{j_4,j_4+1}'$, \dots are local maximal.

Then, {\footnotesize
\begin{eqnarray}
%\phi(N_{S_i'}) &=& (M_1'm_2' + M_2'm_3' + \cdots + M_{j_1-1}'m_{j_1}' + M_{j_1+1}' m_1')\\
%                &&+(M_{j_1+2}'m_{j_1+1}' + M_{j_1+3}'m_{j_1+2}' + \cdots + M_{j_2}'m_{j_2-1}'+ M_{j_1}'m_{j_2+1}') \nonumber\\
%                &&+(M_{j_2+1}'m_{j_2+2}' + M_{j_2+2}'m_{j_2+3}' + \cdots + M_{j_3-1}' m_{j_3}' + M_{j_3+1}'
%                m_{j_2})\nonumber\\
c(N_{S_i'}) &=& (\sum_{j = 1}^{j_1-1} M_j' m_{j+1}' + M_{j_1+1}'m_1') + (M_{j_1}'m_{j_2+1}' + \sum_{j = j_1+1}^{j_2-1} M_{j+1}' m_{j}')\nonumber\\
                &&+ \cdots + (\sum_{j = j_k+1}^{j_{k+1}-1} M_j' m_{j+1}' + M_{j_{k+1}+1}'m_{j_k}') + (M_{j_{k+1}}'m_{j_{k+2}+1}' + \sum_{j = j_{k+1}+1}^{j_{k+2}-1} M_{j+1}' m_{j}')\nonumber\\
                &&+ \cdots + (\sum_{j = j_{h-1}+1}^{j_h-1} M_j' m_{j+1}' +
                M_{j_h}'m_{j_{h-1}}')\nonumber
\end{eqnarray}
}Therefore, %By Observing Figures~\ref{fgAlgDE4} and\ref{fgAlgDE4pattern}, we obtain
{\footnotesize
\begin{eqnarray}
&&c(N_{S_i'}) - \sum_{j=1}^{|S_i'|} M_j' m_j'\nonumber\\
&&= (\sum_{j = 1}^{j_1} \sum_{l=j}^{j_1} r_{l,l+1}'s_{j+1,j}') + (\sum_{j = j_1+1}^{j_2} \sum_{l=j_1}^{j} r_{l,l+1}'s_{j+1,j}') - r_{j_2,j_2+1}'s_{j_2+1,j_2}'\nonumber\\
&&\ \ \ + \cdots + (\sum_{j = j_k}^{j_{k+1}} \sum_{l=j}^{j_{k+1}} r_{l,l+1}'s_{j+1,j}') + (\sum_{j = j_{k+1}+1}^{j_{k+2}} \sum_{l=j_{k+1}}^{j} r_{l,l+1}'s_{j+1,j}')- r_{j_{k+2},j_{k+2}+1}'s_{j_{k+2}+1,j_{k+2}}' \nonumber\\
&&\ \ \ + \cdots + (\sum_{j = j_{h-1}}^{j_h} \sum_{l=j}^{j_h} r_{l,l+1}'s_{j+1,j}') \nonumber%
\end{eqnarray}
}Consider Figure~\ref{fgAlgDE4}(b) for an example. The above
multiplication relationship of those $r_{\cdot,\cdot}$ and
$s_{\cdot,\cdot}$ for Figure~\ref{fgAlgDE4}(a) is given in
Figure~\ref{fgAlgDE4}(b). Figure~\ref{fgAlgDE4pattern} shows how to
transform from Figure~\ref{fgAlgDE4}(a) to Figure~\ref{fgAlgDE4}(b).

By Inequality~(\ref{E-jRelation}), since $r_{l,l+1}' \leq
r_{j+1,j}'$ for $j \leq l \leq j_1$ or $j_1 \leq l \leq j$ or
$\cdots$ or $j \leq l \leq j_{k+1}$ or $j_{k+1} \leq l \leq j$ or
$\cdots$ or $j \leq l \leq j_h$, we obtain:
\begin{eqnarray}
&&c(N_{S_i'}) - \sum_{j=1}^{|S_i'|} M_j' m_j' \nonumber\\
&&\leq (\sum_{j = 1}^{j_1} (j_1-j+1)  r_{j,j+1}' s_{j+1,j}') + (\sum_{j = j_1+1}^{j_2} (j-j_2+1)  r_{j,j+1}' s_{j+1,j}') - r_{j_2,j_2+1}'s_{j_2+1,j_2}'\nonumber\\
&&\ \ \ + \cdots + (\sum_{j = j_k}^{j_{k+1}} (j_{k+1}-j+1)  r_{j,j+1}' s_{j+1,j}') + (\sum_{j = j_{k+1}+1}^{j_{k+2}} (j-j_{k+2}+1)  r_{j,j+1}' s_{j+1,j}')\nonumber\\
&&\hspace{10cm} - r_{j_{k+2},j_{k+2}+1}'s_{j_{k+2}+1,j_{k+2}}'\nonumber\\
&&\ \ \ + \cdots + (\sum_{j = j_{h-1}}^{j_h} (j_{h}-j+1)  r_{j,j+1}' s_{j+1,j}')\nonumber\\
%&&\leq n \sum_{j = 1}^{j_1} r_{j,j+1}' s_{j+1,j}' + n \sum_{j = j_1+1}^{j_2} r_{j,j+1}' s_{j+1,j}' + \cdots + n \sum_{j = j_{h-1}}^{j_h} r_{j,j+1}' s_{j+1,j}'\nonumber\\
&&\leq n \left(\sum_{j = 1}^{|S_i'|-1}  r_{j,j+1}' s_{j+1,j}' \right) \label{E-DE4-2}%
\end{eqnarray}
Considering Figure~\ref{fgAlgDE4}(b) for an example, $c(N_{S_i'}) -
\sum_{j=1}^{|S_i'|} M_j' m_j' \leq 3 r_{1,2}' s_{2,1}' + 2 r_{2,3}'
s_{3,2}' + 1 r_{3,4}' s_{4,3}' + 2 r_{4,5}' s_{5,4}' + (3+5-1)
r_{5,6}' s_{6,5}' + 4 r_{6,7}' s_{7,6}' + 3 r_{7,8}' s_{8,7}' + 2
r_{8,9}' s_{9,8}' + 1 r_{9,10}' s_{10,9}' + 2 r_{10,11}' s_{11,10}'
+ 3 r_{11,12}' s_{12,11}' + (4 + 3 - 1) r_{12,13}' s_{13,12}' + 2
r_{13,14}' s_{14,13}' + 1 r_{14,15}' s_{15,14}'$.

By Inequalities~(\ref{E-DE4-1}) and (\ref{E-DE4-2}), we have
\begin{eqnarray}
n \left( c(N_\opt) - \sum_{i=1}^n M_i m_i \right)%
\geq c(N_\APX) - \sum_{i=1}^n M_i m_i \nonumber
\end{eqnarray}

In what follows, we explain how the algorithm runs in $O(n^2)$ time.
It suffices to explain Lines~13'--24'.
Lines~14' and 15' are just notations for the proof of correctness, not being executed.
In Line 17', $r_{j,j+1}'$ and $r_{j+1,j+2}'$ can be calculated in $O(1)$ time.
Hence, Lines~13'--24' in Algorithm~\ref{alg:DE4}
runs in $O(n)$ time, because the concerned availability (available maximum, minimum, second maximum, second minimum) is recorded and updated at each iteration in $O(1)$ time (noticing that $U$ and $V$ have been sorted, so has $S_i'$);
each element is recorded as the concerned availability at most $O(1)$ and matched only once.
%\hfill\hfill\qed
\end{proof}

\section{Conclusion}
\label{sec:conclusion} This paper has investigated the tractability
of the problems for optimizing the angular resolution, the aspect
ratio, as well as the standard deviation of angles for balloon
drawings of ordered or unordered rooted trees with even sub-wedges
or uneven sub-wedges. It turns out that some of those problems are
NP-complete while the others can be solved in polynomial time. We
also give some approximation algorithms for those intractable
problems. A line of future work is to investigate the problems of
optimizing other aesthetic criteria of balloon drawings.

%
% ---- Bibliography ----
%
\bibliographystyle{elsart-num-sort}

\newpage

%\noindent{\bf APPENDIX}\\
\section*{Appendix}

%\noindent $\bullet$ {\bf Proof of Theorem \ref{thm-RA2}}
%\begin{proof}
%\end{proof}

%\noindent $\bullet$  {\bf Proof of Theorem \ref{thm-DE2}}
%\begin{proof}
%\end{proof}

%\noindent $\bullet$  {\bf Proofs  of Statements S-1, S-2, and S-3 in
%Theorem \ref{thm-DE4}}
%\begin{proof}
%\end{proof}

\noindent $\bullet$  {\bf On Proof of Theorem \ref{thm-DE4}}\\
%We now turn our attention to the DE3 and DE4 problems.
Recall that the DE problem is concerned with minimizing the standard
deviation, which involves keeping all the angles as close to each
other as possible. Such an observation allows us to take advantage
of what is known for the 2SLW problem (which also involves finding a
circular permutation to bound a measure within  given lower and
upper bounds) to solve our problems. It turns out that, like 2SLW,
DE3 and and DE4 are NP-complete. Even though DE3, DE4 and 2SLW bear
a certain degree of similarity, a direct reduction from 2SLW to DE3
or DE4 does not seem obvious. Instead, we are able to tailor the
technique used for proving NP-hardness of 2SLW to showing DE3 and
DE4 to be NP-hard. To this end, we first briefly explain the
intuitive idea behind the NP-hardness proof of 2SLW shown in
\cite{V2003} to set the stage for our lower bound proofs.

The technique utilized in \cite{V2003} for the NP-hardness proof of
2SLW relies on  reducing from the {\em Hamiltonian-circle problem on
cubic graphs} (HC-CG)\footnote[1]{A cubic graph is a graph in which
every node  has degree three.}  (a known NP-complete problem). The
reduction is as follows. For a given  cubic graph $G$ with $n$
nodes, we construct a complete bipartite graph $\mathcal{B}(V,U)$
consisting of $n$ {\em blocks} in the following way. (For
convenience, $V$ (resp., $U$) is called the upper (resp., lower)
side.) For each node $v_i$ adjacent to $v_j$, $v_k$, $v_l$ in cubic
graph $G$, a {\em block} $\mathcal{B}_i$ of 14 nodes (7 on each
side)
%$\mathcal{B}(V,U)$ consisting of  (7 in
is associated to $v_i$,   where the upper  side (resp., lower side)
contains three $v$-nodes (resp., $u$ nodes) corresponding to $v_j$,
$v_k$, $v_l$, and each side has a pair of $\lambda$-nodes, as well
as  a pair of $b$-nodes (as shown in Figure~\ref{fig:fg2SLW}). For
the three blocks $\mathcal{B}_j$, $\mathcal{B}_k$, and
$\mathcal{B}_l$ associated with nodes $v_j$, $v_k$, and $v_l$,
respectively, each has a $v$-node corresponding to $v_i$ (because
$v_i$ is adjacent to $v_j$, $v_k$, and $v_l$). These three $v$-nodes
are labelled as $v_{i1}$, $v_{i2}$, and $v_{i3}$. In the
construction, nodes in $V$ and $U$ correspond to those tasks to be
performed in stations $ST1$ and $ST2$, respectively, in 2SLW.

\begin{figure}[t]
\centering \scalebox{0.625}{\includegraphics*[0,0][416,183]{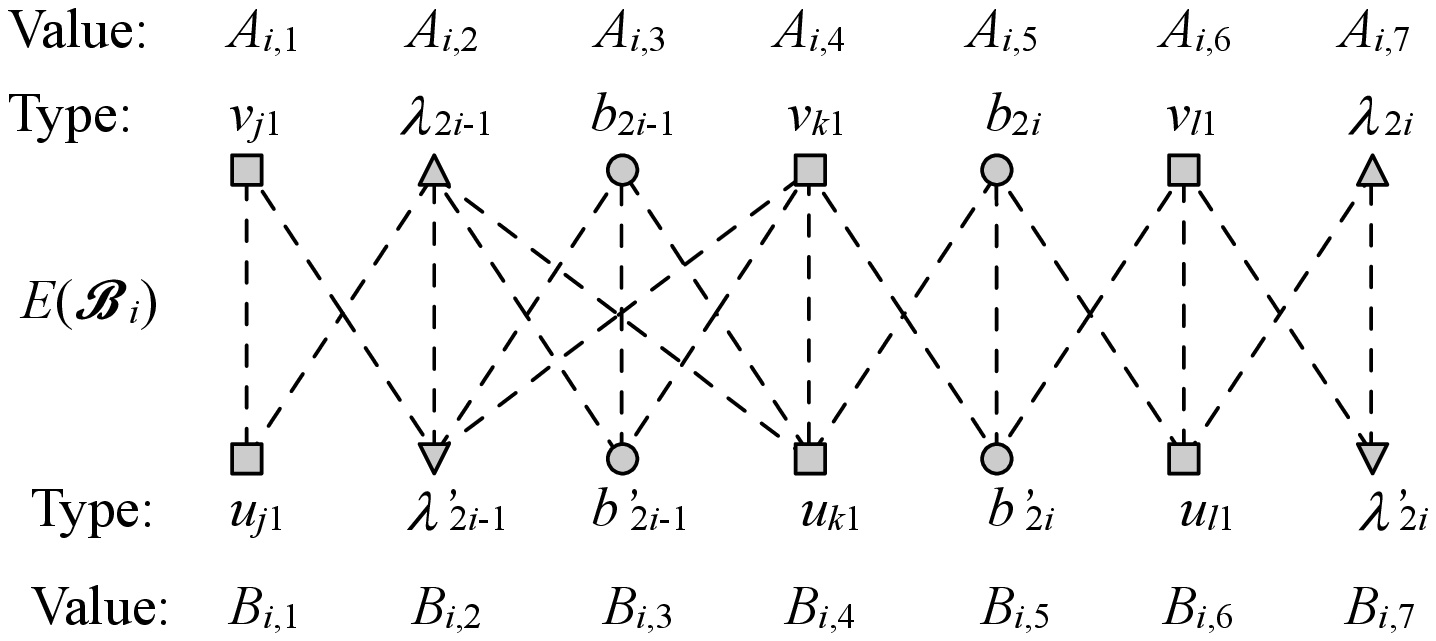}}%
\centering \caption{Illustration of reduction.} \label{fig:fg2SLW}
\bigskip
\bigskip
\centering \scalebox{0.625}{\includegraphics*[0,0][442,162]{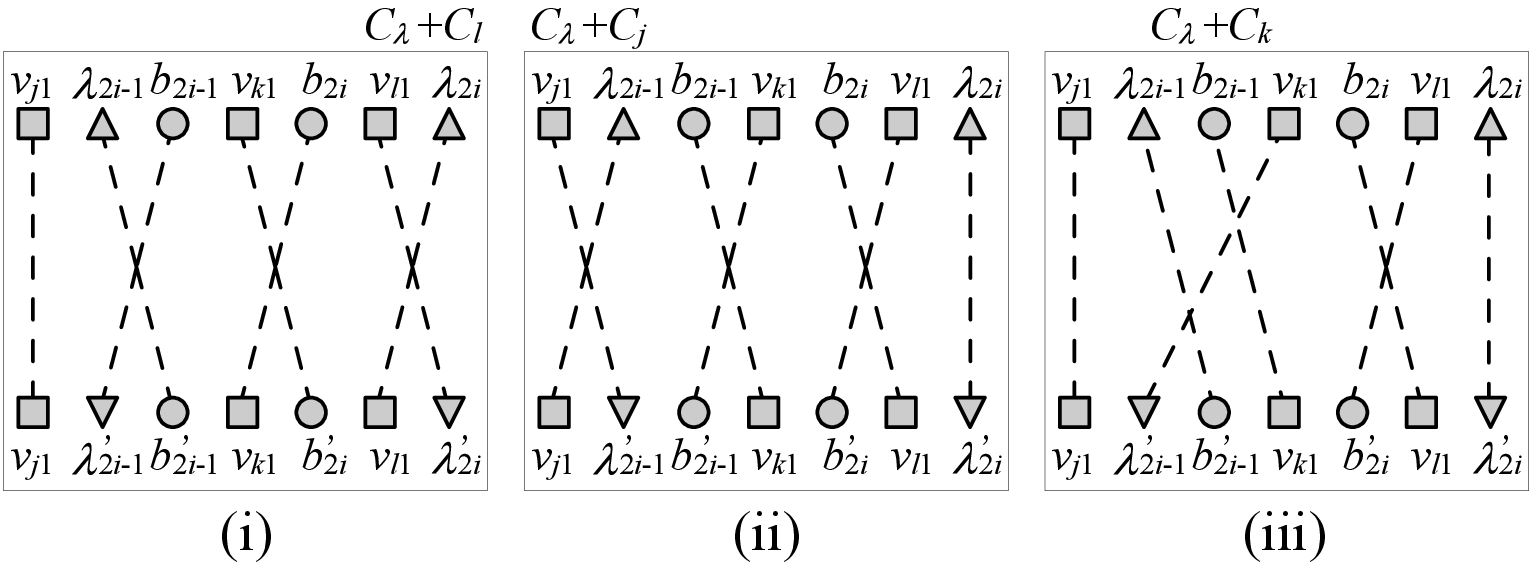}}%
\centering \caption{Three possible transition matchings for
$\mathcal{B}_i$.} \label{fig:fg4matching}
\end{figure}

As shown in Figure~\ref{fig:fg2SLW}, the nodes on the upper and
lower sides in $\mathcal{B}_i$ from the left to the right are
associated with the following values {\small
\begin{eqnarray}
&&(A_{i,1}, \cdots, A_{i,7}) =
(\kappa_i, \kappa_i-1, \kappa_i-2, \kappa_i-2, \kappa_i-3, \kappa_i-4, \kappa_i-5),\mbox{ and} \label{E-2SLW-valueA}\\
&&(B_{i,1}, \cdots, B_{i,7}) = (iK, iK+1, iK+2, iK+2, iK+3, iK+4,
iK+5),\label{E-2SLW-valueB}
\end{eqnarray}}respectively, where $\kappa_i = (n+1-i)K$ and $K$ is any integer
$\geq 7$; $LB=(n+1)K-1$ and $UB=(n+1)K+1$. Each edge in
$\mathcal{B}(V,U)$ has weight equal to the sum of the values of its
end points.

The instance of 2SLW consists of $7n$ jobs, in which $2n$ jobs
associated with pairs of $b$-nodes are $I_{01} = \{(b_{2i-1},
b_{2i-1}')$, $(b_{2i}, b_{2i}'): 1 \leq i \leq n\}$, $3n$ jobs
associated with $v$-nodes are $I_{02} = \{(v_{i1},u_{i2})$,
$(v_{i2}, u_{i3})$, $(v_{i3}, u_{i1}): i=1,...,n \}$, and $2n$ jobs
associated with pairs of $\lambda$-nodes are $I_{03} = \{(\lambda_i,
\lambda_{i\oplus 1}')$, $(\lambda_{i\oplus 1},\lambda_{i\oplus 2}'):
1 \leq i \leq n \}$. Note that $I_0 = I_{01} \cup I_{02} \cup
I_{03}$ is a perfect matching for $\mathcal{B}(V,U)$, and such a
matching is called a {\em city matching}.

The crux of the remaining construction is  based on the idea of
relating a permutation of the $7n$ jobs $(J_{[1]}, J_{[2]}, ...,
J_{[7n]})$ in the constructed 2SLW instance to a perfect matching
 in $\mathcal{B}(V,U)$ in such a
way that  $(W_{[i]2}, W_{[(i \ mod \ 7n)+1]1})$, $1 \leq i \leq 7n$,
are matches. Note that $W_{[i]2}, W_{[i+1]1}$ are the two tasks
performed by stations $ST1$ and $ST2$, respectively, simultaneously
at a certain time. One can easily observe that, because of bounds
$LB$ and $UB$,
 %In the other hand, to find
 any matching $N$ as a solution for 2SLW cannot involve a edge connecting  two different
 blocks, and the only
edges which can be included in $N$ in each block are the dash lines
in Figure~\ref{fig:fg2SLW}.
%
%in which each edge has weight between  one can easily check that
 %is not included in $N$,
Such a  perfect matching $N$ is called a {\em transition matching}.
If $I_0 \cup N$ forms a Hamiltonian cycle for $\mathcal{B}(V,U)$,
then it is called {\em complementary Hamiltonian cycle} (CHC).

We use notation $(\cdot,\cdot)$ (resp., $[\cdot,\cdot]$) to indicate
an edge of a city matching (resp., transition matching). Consider a
special transition matching $N_D = \{[A_{i,j},B_{i,j}]: i = 1, ...,
n, j = 1, ..., 7\}$. $I_0 \cup N_D$ consists of a master
$\lambda$-subcycle $\mathcal{C}_\lambda = [\lambda_{1},\lambda_1']
(\lambda_1',\lambda_{2}) ... [\lambda_{i},\lambda_i']
(\lambda_i',\lambda_{i\oplus 1}) ... [\lambda_{n},\lambda_n']
(\lambda_n',\lambda_1)$, $n$ $v$-subcycles $\mathcal{C}_i$ for
$i=1,...,n$ (e.g., $\mathcal{C}_1 =
[v_{11},u_{11}](u_{11},v_{12})[v_{12},u_{12}](u_{12},v_{13})[v_{13},u_{13}](u_{13},v_{11})$),
and $2n$ $b$-subcycles $\mathcal{C}_b = [b_i,b_i'](b_i,b_i')$ for
$i=1,...,n$. Hence, a CHC for $\mathcal{B}(V,U)$ is formed by
combining the $3n+1$ subcycles. From \cite{V2003}, in order to yield
a CHC for $\mathcal{B}(V,U)$, there are exactly three possible
transaction matchings for $\mathcal{B}_i$ as shown in
Figure~\ref{fig:fg4matching}. The design is such that edge $(v_i,
v_l)$ (resp., $(v_i, v_j)$ and $(v_i, v_k)$)  is in a HC of G if
Figure~\ref{fig:fg4matching}(i) (resp., (ii) and (iii)) is the
chosen permutation for the constructed 2SLW instance.   Following a
somewhat complicated argument, \cite{V2003} proved that there exists
a Hamiltonian cycle (HC) for the cubic graph $G$ if and only if
there exists a CHC for $\mathcal{B}(V,U)$, and such a CHC for
$\mathcal{B}(V,U)$ in turn suggest a sufficient and necessary
condition for  a solution for 2SLW.

%\begin{proof} (Sketch)
\bigskip \noindent\underline{Proofs of Theorem~\ref{thm-DE4}}.
(Sketch) Now we are ready to show the theorem. We only consider the
DE4 problem; the DE3 problem is similar and in fact  simpler. Recall
that the DE4 problem is equivalent to finding a balloon drawing
optimizing $optSOP$. Consider the following  decision problem:
\begin{description}
    \item[$\mbox{\sc The DE4 Decision Problem}:$] %
Given  a star graph with flexible uneven angles specified by
Equation~(\ref{E-subWedge-uneven}) and an integer $UB$, determine
whether a drawing (i.e., specified by the permutation $\sigma \in
\Sigma$ and the assignments (0 or 1) for $t_i$ ($1\leq i \leq n$))
exists so that $SOP_{\sigma,t} \leq UB$.
\end{description}

It is obvious that  the problem is in NP; it remains to show
NP-hardness, which is established by a reduction from  HC-CG. In
spite of the similarity between our reduction  and the reduction
from HC-CG to  2SLW (\cite{V2003})  explained earlier,  the
correctness proof of our reduction is a lot more complicated than
the latter, as we shall explain in detail shortly.

%The reduction is almost the same as
%that of the 2SLW except
In the new setting, Equations~(\ref{E-2SLW-valueA}) and
(\ref{E-2SLW-valueB}) become: {\small
\begin{eqnarray}
&&(A_{i,1}, \cdots, A_{i,7}) =
(\kappa(i), \kappa(i)-2, \kappa(i)-3, \kappa(i)-4, \kappa(i)-6, \kappa(i)-8, \kappa(i)-9);\nonumber\\%\label{E-DE4-valueA}\\
&&(B_{i,1}, \cdots, B_{i,7}) = (9ni, 9ni+1, 9ni+2, 9ni+3, 9ni+5, 9ni+7, 9ni+9) \nonumber%\label{E-DE4-valueB}
\end{eqnarray}}for $i = 1, 2, ..., n$ where $\kappa(i) = 9n(2n+2-i)$, $n \geq 2$,
and $UB=\sum_{i=1}^{7n}M_i m_i+7n$ where $M_i$ (resp., $m_i$) is the
$i$-th maximum (resp., minimum) among the $14n$ values.  (Note that
such a setting satisfies the premise of
Inequality~(\ref{E-lemma-perm-2}) in Lemma~\ref{lemma-perm}, and
hence can utilize the inequality.) Hence, we have that:
\begin{eqnarray}
A_{i,j} &>& B_{i,j}, \mbox{ for any $i,j$;}\nonumber\\
A_{i,j} &>& A_{k,l} \mbox{ and } B_{i,j} < B_{k,l} \mbox{ if ($i <
k$) or ($i = k$ and $j < l$).}\nonumber
\end{eqnarray}
Note that the above implies that the $j$-th upper (resp., lower)
node in $\mathcal{B}_i$ is $M_{7i+j}$ (resp., $m_{7i+j}$) for $i\in
\{1,...,n\}$ and $j \in \{1,...,7\}$. Define $r_{a,b} = M_{7i+a} -
M_{7i+b}$ and $s_{a,b} = m_{7i+a} - m_{7i+b}$ in $\mathcal{B}_i$.
Hence,
\[
r_{1,2}s_{2,1} = 2, \ r_{2,3}s_{3,2} = 1, \ r_{3,4}s_{4,3} = 1,
r_{4,5}s_{5,4} = 4, \ r_{5,6}s_{6,5} = 4, \ r_{6,7}s_{7,6} = 2,
\]
which are often utilized throughout the remaining proof.

If $\Omega$ is a set of transition edges, the sum of the transition
edge weights is denoted by $c(\Omega)$. If $C_H = I_0 \cup N$ is a
CHC for $\mathcal{B}(V,U)$ where $I_0$ (resp., $N$) is the city
matching (resp., transition matching) of the CHC and $t_i = 0$ for
$i=1,...,n$ (i.e., flipping sub-wedges is not allowed), then $c(N) =
\sum_{e \in N} c(e) = SOP_{\sigma,t}$ where $c(e)$ is the weight of
the transition edge $e$.

Now based on the above setting, we show that there exists a HC for
the cubic graph $G$ if and only if there exists a CHC $C_H = I_0
\cup N$ for the instance $\mathcal{B}(V,U)$ of the DE4 problem such
that $c(N) \leq UB$.

Suppose that $G$ has a Hamiltonian cycle $C_H$. Let $ C_H =
v_{[1]},v_{[2]}, ..., v_{[n]}$. The construction of a solution for
$\mathcal{B}(U,V)$ is the same as \cite{V2003},  as explained in the
following. Initiating with $\mathcal{B}_{[1]}$, there exists a pair
($u_{[2]}, v_{[2]}$) of nodes in $V \times U$ corresponding to
$v_{[2]} \in G$ because $v_{[1]}$ is connected with $v_{[2]}$. From
\cite{V2003}, we have that $\mathcal{C}_\lambda$ is merged with
$\mathcal{C}_l$, $\mathcal{C}_j$, and $\mathcal{C}_k$ respectively
in Figure~\ref{fig:fg4matching} (i), (ii), and (iii). Hence,
considering the order of $B_{[1]},B_{[2]},...,B_{[n]}$, in iteration
$i$, by choosing the appropriate transition matching, say $N_{[i]}$,
of $\mathcal{B}_i$ from the three possible matchings in
Figure~\ref{fig:fg4matching}, $N_{[i]}$ merges
$\mathcal{C}_{[i\oplus 1]}$ with the master subcycle
$\mathcal{C}_{\lambda}$. Besides, since the two $b$-subcycles in
each $\mathcal{B}_i$ also are merged with $\mathcal{C}_\lambda$ in
any matching of Figure~\ref{fig:fg4matching}, we can obtain a
complementary cycle traversing all nodes in $\mathcal{B}(U,V)$.

We need to check $c(N) \leq UB$. In fact, we show that $c(N) = UB$
as follows. It suffices to show that $c(N_i) = \sum_{j=1}^7
M_{7i+j}m_{7i+j} + 7$ for any $i \in \{1,...,n\}$ where $N_i$ is the
transition matching for $\mathcal{B}_i$. Denote $\Delta c(N_i) =
c(N_i) - \sum_{j=1}^7 M_{7i+j}m_{7i+j}$. We can prove that $\Delta
c(N_i) = 7$ for every matching in Figure~\ref{fig:fg4matching}.
Case~(i) is shown as follows, and the others are similar: {\small
\begin{eqnarray}
\Delta c(N_i) &=& M_{7i+1}m_{7i+1} + M_{7i+2}m_{7i+3} +
M_{7i+3}m_{7i+2} + M_{7i+4}m_{7i+5} \nonumber\\
&&  + M_{7i+5}m_{7i+4} + M_{7i+6}m_{7i+7} + M_{7i+7}m_{7i+6} -
\sum_{j=1}^7
M_{7i+j}m_{7i+j} \label{E-DE4proof-if}\\
&=& r_{2,3}s_{3,2} +
r_{4,5}s_{5,4} + r_{6,7}s_{7,6} %\nonumber\\
= 1 + 4 + 2 = 7\nonumber
\end{eqnarray}
%where $r_{a,b} = M_{7i+a} - M_{7i+b}$ and $s_{a,b} = m_{7i+a} - m_{7i+b}$.
}From the above computation, one should notice that if $M_j$ is
matched with a sub-wedge larger than $m_j$ and $M_{j+1}$ is matched
with a sub-wedge less than $m_{j+1}$ for $j \in
\{7i+1,7i+2,...,7i+6\}$, then $\Delta c(N_i)$ includes
$r_{j,j+1}s_{j+1,j}$.

%
%The weight of (ii) is increased by
%\begin{eqnarray}
%&& M_{7i+1}m_{7i+2} + M_{7i+2}m_{7i+1} + M_{7i+3}m_{7i+4} +
%M_{7i+4}m_{7i+3} + M_{7i+5}m_{7i+6} \nonumber\\ && +
%M_{7i+6}m_{7i+5} + M_{7i+7}m_{7i+7} -
%\sum\limits_{t=1}^7(M_{7i+t}m_{7i+t})\nonumber\\ &=&
%r_{7i+1,7i+2}s_{7i+2,7i+1} + r_{7i+3,7i+4}s_{7i+4,7i+3} +
%r_{7i+5,7i+6}s_{7i+6,7i+5}\nonumber\\ &=& 2\times 1 + 1\times 1 +
%2\times 2 = 7
%\end{eqnarray}

%The weight of (iii) is increased by
%\begin{eqnarray}
%&& M_{7i+1}m_{7i+1} + M_{7i+2}m_{7i+3} + M_{7i+3}m_{7i+4} +
%M_{7i+4}m_{7i+2} + M_{7i+5}m_{7i+6} \nonumber\\ && +
%M_{7i+6}m_{7i+5} -
%\sum\limits_{t=1}^7(M_{7i+t}m_{7i+t})\nonumber\\ &=&
%r_{7i+2,7i+4}s_{7i+3,7i+2} + r_{7i+3,7i+4}s_{7i+4,7i+3} +
%r_{7i+5,7i+6}s_{7i+6,7i+5}\nonumber\\ &=& 2\times 1 + 1\times 1 +
%2\times 2 = 7
%\end{eqnarray}

%The weight of (vi) is increased by
%\begin{eqnarray}
%&& M_{7i+1}m_{7i+1} + M_{7i+2}m_{7i+4} + M_{7i+3}m_{7i+2} +
%M_{7i+4}m_{7i+3} + M_{7i+5}m_{7i+6} \nonumber\\ && +
%M_{7i+6}m_{7i+5} -
%\sum\limits_{t=1}^7(M_{7i+t}m_{7i+t})\nonumber\\ &=&
%r_{7i+2,7i+3}s_{7i+4,7i+2} + r_{7i+3,7i+4}s_{7i+4,7i+3} +
%r_{7i+5,7i+6}s_{7i+6,7i+5}\nonumber\\ &=& 2\times 1 + 1\times 1 +
%2\times 2 = 7
%\end{eqnarray}

The converse, i.e., showing the existence of a CHC $C_H = I_0 \cup
N$ for the instance $\mathcal{B}(V,U)$ of   DE4  with $c(N) \leq UB$
implies the presence of a HC in $G$, is rather complicated. The key
relies on the following three claims.
%, whose proofs are in Appendix.
% Once these three claims are established, the rest of the
% proof parallels  that of 2SLW:
\begin{enumerate}
    \item[(S-1)] (\emph{Bipartite}) There are no transition edges in $N$ between any pairs of upper (resp., lower) nodes in $C_H$.
    \item[(S-2)] (\emph{Block}) There are no transition edges in $N$ between two blocks in $C_H$.
    \item[(S-3)] (\emph{Matching}) There is only one of $\mathcal{C}_j$, $\mathcal{C}_k$, and $\mathcal{C}_l$ merged with the master subcycle $\mathcal{C}_\lambda$ in
    each $\mathcal{B}_i$. (Recall that each node $v_i$ is adjacent to $v_j$, $v_k$, $v_l$ in $G$, and hence the statement implies the presence of a HC in $G$.)
\end{enumerate}

%Before proving these statements, we introduce some notation similar
%to \cite{V2003,KS2006} as follows: Given a transition matching $N$
%with two edges $e_1 = (v_a, u_b)$ and $e_2 = (v_c, u_d)$ in $U
%\times V$, an {\em exchange} on $e_1$ and $e_2$ returns a transition
%matching $N'$ such that $N'=N\otimes (e_1, e_2) = (N \setminus\{e_1,
%e_2\}) \cup \{(v_a,u_d),(v_c,u_b)\}$. Given two transition edges
%$e_1 = (v_a, u_b)$ and $e_2 = (v_c, u_d)$ in $V \times U$, the edge
%$e_1$ is called to {\em cross} $e_2$ if $v_a > v_c$ and $u_d < u_b$.
%Note that the two new edges $(v_a,u_d)$ and $(v_c,u_b)$ doesn't
%cross each other if and only if $e_1$ crosses $e_2$.

For proving the above statements, we need the following claims:
\begin{description}
    \item[Claim 1] (see \cite{KS2009}) Given two transition matchings $N$
and $N'$ between $V$ and $U$, there exists a sequence of exchanges
which transforms $N$ to $N'$.
    \item[Claim 2] If $N$ is a transition matching between $V$ and $U$ and involves two
    edges $e_1$ and $e_2$ crossing each other, then $c(N) > c(N')$ for $N'=N\otimes
    (e_1,e_2)$.
\end{description}
(Claim 2 can be proved by easily checking $c(N)-c(N')>0$.) It is
very important to notice that Claim 2 can be adapted even when $I_0
\cup N$ may NOT be a CHC. The transition matching where $M_j$ is
matched with $m_j$ for every $j$ (every transition edge is visually
vertical) is denoted by $N_D$, i.e., $c(N_D) = \sum_{j=1}^{7n} M_j
m_j$. Note that if each edge in $N$ is between $V$ and $U$, we can
obtain $c(N) > c(N_D)$ by repeatedly using Claim 2 in the order from
the leftmost node to the rightmost node of $V$, similar to the
technique in the proof of Claim 1 \cite{KS2009}. \qed

\vspace{0.1cm}\underline{Proof of Statement (S-1)}. Supposing that
there exits $\overline{k} \geq 1$ transition edges between pairs of
upper nodes in $C_H$, then there must exist $\overline{k}$
transition edges between pairs of lower nodes in $C_H$, by
Pigeonhole Principle. Select one of the upper (resp., lower)
transition edges, say $e_1 = (M_a, M_b)$, (resp., say $e_2 = (m_p,
m_q)$). Consider $N' = N \otimes (e_1, e_2)$. Then $c(N) -
c(N')=(M_a-m_p)(M_b-m_q) \geq (M_n - m_n)^2 = 18^2(n-1)^2
> 7n$. Hence, $c(N) > c(N') + 7n$. By the same
technique, we can find $N''$ where each edge in $N''$ is between $U$
and $V$ such that $c(N) > c(N'')+7\overline{k}n \geq c(N'')+7n \geq
c(N_D) + 7n = UB$, which is impossible.\qed
%By Claim 1 and Claim 2, we
%can find $N'''$ such that $\phi(N) > \phi(N''')+7n = \sum_{j=1}^{7n}
%M_j m_j + 7n = UB$, which is impossible.\qed

\vspace{0.1cm}\underline{Proof of Statement (S-2)}. By Statement
(S-1), each edge in the transition matching of $C_H$ is between $V$
and $U$. Suppose there exists at least one transition edge between
two blocks. Assume there are $l$ blocks, $\{\mathcal{B}_{k_1},
\mathcal{B}_{k_2}, ..., \mathcal{B}_{k_l}\}$, with transition edges
across two blocks. Let $k_{min} = \min(k_1, k_2, ..., k_l)$.
Consider $e_1 = (M_a, m_d)$ is the transition edge between
$\mathcal{B}_{k_{min}}$ and $\mathcal{B}_{k_{i}}$ for $i \in \{1,
...,l\}$, and $k_{min} \neq k_i$. Then there must exist a transition
edge connecting to one of the lower nodes of
$\mathcal{B}_{k_{min}}$, say $m_c$, by Pigeonhole Principle, and we
say the edge $e_2 = (M_b, m_c)$ where $m_c$ and $M_b$ are
respectively from $\mathcal{B}_{k_{min}}$ and $\mathcal{B}_{k_j}$
for $j \in \{1, ...,l\}$ and $k_j \neq k_{min}$. Note that $e_2$
must cross $e_1$ because $M_a$ and $m_c$ are in
$\mathcal{B}_{k_{min}}$, i.e., $M_a
> M_b$ and $m_c < m_d$. Besides, we have $M_a
\geq M_b + 9n - 9$ and $m_d \geq m_c + 9n - 9$ because two end
points of edge belong to different blocks. Consider $N'=N\otimes
(e_1, e_2)$. Then $c(N) - c(N') = (M_a-M_b)(m_d-m_c) \geq (9n-9)^2
> 7n$ for $n \geq 2$. That is, $c(N) > c(N') + 7n \geq c(N_D) +
7n$, which is a contradiction.\qed
%By Claim 1 and
%Claim 2, $\phi(N) > \sum_{j=1}^{7n} M_j m_j + 7n$ --
%contradiction.\qed

\vspace{0.1cm}\underline{Proof of Statement (S-3)}. Recall that $I_0
\cup N_D$ in every $\mathcal{B}_i$ involves subcycles
$\mathcal{C}_j$, $\mathcal{C}_\lambda$, $\mathcal{C}_{b_{2i-1}}$,
$\mathcal{C}_k$, $\mathcal{C}_{b_{2i}}$, $\mathcal{C}_l$,
$\mathcal{C}_\lambda$ from the leftmost to the rightmost. If there
exists a CHC $C_H = I_0 \cup N$ for the instance $\mathcal{B}(V,U)$,
each $b$-subcycle in $\mathcal{B}_i$ has to be merged with some
subcycle in the same $\mathcal{B}_i$ by Statements (S-1) and (S-2).
$\Delta c(N_i)$ is at least 5 due to the merging of $b$-subcycles
from the following four cases (here it suffice to discuss the
merging of $b$-subcycles with their adjacent subcycles because
$\Delta c(N_i)$ in others cases are larger):
\begin{enumerate}
    \item $\mathcal{C}_{b_{2i-1}}$ merged with $\mathcal{C}_\lambda$ and $\mathcal{C}_{b_{2i}}$ merged with $\mathcal{C}_k$: %\\
$\Delta c(N_i) > r_{2,3}s_{3,2}+r_{4,5}s_{5,4} = 1 + 4 =5$%
    \item $\mathcal{C}_{b_{2i-1}}$ merged with $\mathcal{C}_\lambda$ and $\mathcal{C}_{b_{2i}}$ merged with $\mathcal{C}_l$: %\\
$\Delta c(N_i) > r_{2,3}s_{3,2}+r_{5,6}s_{6,5} = 1 + 4 =5$%
    \item $\mathcal{C}_{b_{2i-1}}$ merged with $\mathcal{C}_k$ and $\mathcal{C}_{b_{2i}}$ merged with $\mathcal{C}_k$: %\\
$\Delta c(N_i) > r_{3,4}s_{4,3}+r_{4,5}s_{5,4} = 1 + 4 =5$%
    \item $\mathcal{C}_{b_{2i-1}}$ merged with $\mathcal{C}_k$ and $\mathcal{C}_{b_{2i}}$ merged with $\mathcal{C}_l$: %\\
$\Delta c(N_i) > r_{3,4}s_{4,3}+r_{5,6}s_{6,5} = 1 + 4 =5$%
\end{enumerate}

Recall that there are $3n + 1$ subcycles in $\mathcal{B}$. Hence we
require at least $3n$ times of merging subcycles to ensure these
subcycles to be merged as a CHC. Since we have discussed that two
$b$-subcycles have to be merged in each $\mathcal{B}_i$ (i.e., the
total times of merging $b$-subcycles are $2n$), we require at least
$n$ more times of merging subcycles to obtain a CHC. In fact, the
$n$ times of merging subcycles is because each $\mathcal{B}_i$
contributes once of merging subcycles. As a result, Statement (S-3)
is proved if we can show that after merging two $b$-subcycles in
each $\mathcal{B}_i$, the third merging subcycles in $\mathcal{B}_i$
is to merge one of $C_j$, $C_k$, and $C_l$ with $C_\lambda$.

In what follows, we discuss $\Delta c(N_i)$ when there are exactly
$\overline{h}$ times of merging subcycles in $N_i$:
\begin{itemize}
    \item If $\overline{h} = 2$, then $\Delta c(N_i) > 5$.
    \item If $\overline{h} = 3$ and the transition matching of $\mathcal{B}_i$ is one of the matchings in Figure~\ref{fig:fg4matching}, then $\Delta c(N_i) = 7$.
    \item If $\overline{h} = 3$ and the transition matching of $\mathcal{B}_i$ is NOT any of the matchings in Figure~\ref{fig:fg4matching}, then $\Delta c(N_i) > 7$.
    \item If $\overline{h} = 4$, then $\Delta c(N_i) > 9$.
    \item If $\overline{h} = 5$, then $\Delta c(N_i) > 11$.
    \item If $\overline{h} = 6$, then $\Delta c(N_i) > 13$.
\end{itemize}
If the above statements on $\overline{h}$ hold, then Statement (S-3)
hold. The reason is as follows. Remind that we need $3n$ times of
merging subcycles to be a CHC. Therefore, if there exists a
transition matching of $\mathcal{B}_i$ with $\overline{h}=2$ for
some $i$ (i.e., there are exactly two times of merging subcycles in
$\mathcal{B}_i$), then there must exists a $\mathcal{B}_j$ for some
$j$ with $\overline{h} \geq 4$. Then $\Delta c(N_i)+\Delta c(N_j)
> 14$, which is impossible because this results in the total
$\Delta c$ larger than $7n$.\qed

\vspace{0.1cm}\underline{Proof of Statements on $\overline{h}$}.
%By Statements (S-1) and (S-2), it suffices to consider the transition
%matching of every $\mathcal{B}_i$.
Note that the transition matching of every $\mathcal{B}_i$ can be
viewed as a permutation of $\{ M_{7i+1}, M_{7i+2}, ..., M_{7i+7}\}$
(a mapping from $V$ to $U$), and hence different ordering or
different times of merging subcycles lead to a permutation with
different factors, e.g, the permutation for
Figure~\ref{fig:fg4matching}(i) is $\langle M_{7i+1}\rangle \langle
M_{7i+2} M_{7i+3}\rangle \langle M_{7i+4} M_{7i+5}\rangle \langle
M_{7i+6} M_{7i+7}\rangle$. If we let $f = \langle M_{j_1}, M_{j_2},
..., M_{j_h}\rangle$ be a nontrivial factor of the permutation for
$N_i$, then $c(N_i) \geq c(f) \geq \sum_{k=j_1}^{j_h} M_{k} m_{k} +
\sum_{k=j_1}^{j_{h-1}} r_{k,k+1}s_{k+1,k}$ by
Equation~(\ref{E-lemma-perm-1}) in Lemma~\ref{lemma-perm}. Here we
concern the value $\sum_{k=j_1}^{j_{h-1}} r_{k,k+1}s_{k+1,k}$
induced by $f$ (which is denoted by $\Theta(f)$) because it can be
viewed as a lower bound of $\Delta c(N_i)$.

If a factor $f$ includes $M_j$ but excludes $M_{j+1}$, then we say
that $f$ has a {\em lack} at $M_{j+1}$. We observe that if the
permutation $p_i$ for $\mathcal{B}_i$ has a lack, then we can find a
permutation $p_i'$ for $\mathcal{B}_i$ consisting of the factors
without any lacks such that $\Theta(p_i') < \Theta(p_i)$ in which
the number of factors of $p_i'$ is the same as that of $p_i$ and the
size of each factor is also the same. The reason is as follows.
Assume that $p_i$ has a factor $f = \langle ..., M_j, M_l,
...\rangle$ with a lack at $M_{j+1}$ (i.e., $l\neq j+1$) and the
minimum number appearing in the factor is $M_q$. Let $p_i'$ be
almost the same as $p_i$ except the factor $f$ in $p_i$ is modified
as a factor without any lacks involving $M_{j+1}$ but excluding
$M_q$ in $p_i'$. Then by Equation~(\ref{E-lemma-perm-1}) in
Lemma~\ref{lemma-perm}, $\Theta(p_i)-\Theta(p_i') \geq
(r_{j,j+2}s_{j+2,j} - r_{j,j+1}s_{j+1,j} - r_{j+1,j+2}s_{j+2,j+1}) +
r_{q-1,q}s_{q,q-1} \geq 2 + 1 > 0$. In the similar way, we can find
a permutation with factors without any lacks.

In light of the above, it suffices to consider the permutation for
$\mathcal{B}_i$ consisting of the factors without any lacks when
discussing the lower bound of $c(N_i)$. Thus, in the following, when
we say that the permutation for $\mathcal{B}_i$ has a factor $f$,
this implies that $f$ has no lacks, so Lemma~\ref{lemma-perm} can be
applied to $f$.

Now we are ready to prove the statements on $\overline{h}$. The
statement of $\overline{h}=2$ holds because $c(N_i)$ is increased by
at least 5 when two $b$-subcycles have to be merged in each
$\mathcal{B}_i$. As for the statement of $\overline{h}=6$, note that
merging six subcycles implies a permutation with a factor of size
seven. Thus, by Equation~(\ref{E-lemma-perm-1}) in
Lemma~\ref{lemma-perm}, $\Delta c(N_i) \geq \sum_{j=1}^6 r_{j,j+1}
s_{j+1,j} = 2 + 1 + 1 + 4 + 4 + 2 = 14
> 13$, as required. Let $\psi = \sum_{j=1}^6 r_{j,j+1} s_{j+1,j} = 14$
for the convenience of the following discussion. As for the
statement of $\overline{h}=5$, the permutation involves two
nontrivial factors after five times of merging subcycles. Note that
one of the two factors has size at least four, and hence the factor
$\langle j_1, ..., j_4\rangle$ contributes $\sum_{k=j_1}^{j_4} M_k
m_k + \sum_{k=j_1}^{j_4} r_{k,k+1} s_{k+1,k} + (4-2)$ by
Equation~(\ref{E-lemma-perm-2}) in Lemma~\ref{lemma-perm}.
Therefore, by Equation~(\ref{E-lemma-perm-1}) in
Lemma~\ref{lemma-perm}, $\Delta c(N_i) \geq \psi -
r_{x,x+1}s_{x+1,x} + (4-2) = 16 - r_{x,x+1}s_{x+1,x}$ for some $x
\in \{1,...,6\}$. (Note that $-r_{x,x+1}s_{x+1,x}$ suggests that
$M_x$ and $M_{x+1}$ are in different factors.) Since
$r_{x,x+1}s_{x+1,x} \leq 4$, hence $\Delta c(N_i) \geq 12 > 11$, as
required.

As for the statement of $\overline{h}=4$, by
Equation~(\ref{E-lemma-perm-1}), $\Delta c(N_i) \geq \psi -
r_{x,x+1} s_{x+1,x} - r_{y,y+1} s_{y+1,y}$ for some $x,y \in
\{1,...,6\}$ and $x \neq y$. Discuss all possible cases of pair
$(x,y)$ as follows. Consider one of $x,y$ is 2 or 3. We assume that
$x = 2$, and the other case is similar. Hence,
$r_{x,x+1}s_{x+1,x}=1$. Since $r_{y,y+1}s_{y+1,y} \leq 4$ and there
exists a factor with size at least three in this case, $\Delta
c(N_i) \geq \psi - 1 - 4 + (3-2) = 10 > 9$ by
Equation~(\ref{E-lemma-perm-2}), as required. The remaining cases
are $(1,4), (1,5), (1,6), (4,5), (4,6)$, and $(5,6)$. Consider one
of $x,y$ is 1 or 6. We assume that $x = 1$, and the other case is
similar. Hence $r_{x,x+1}s_{x+1,x}=2$. Since $r_{y,y+1}s_{y+1,y}
\leq 4$ and there exists a factor with size at least four or two
factors with size at least three in this case, $\Delta c(N_i) \geq
\psi - 2 - 4 + (4-2) \vee 2(3-2) = 10
> 9$ by Equation~(\ref{E-lemma-perm-2}), as required. Last, consider $(x,y) =
(4,5)$, namely, $M_4$ and $M_5$ (resp., $M_5$ and $M_6$) are in
different factors. Hence, $M_5$ cannot be matched with $m_4$ nor
$m_6$, i.e., subcycle $\mathcal{C}_{b_{2i}}$ cannot be merged with
adjacent subcycles $\mathcal{C}_k$, $\mathcal{C}_l$. Since merging
$\mathcal{C}_{b_{2i}}$ with $\mathcal{C}_{b_{2i-1}}$ induces the
smallest cost $r_{3,5}s_{53}=9$ in this case, and the other two
times of merging subcycles must induce cost more than 2, hence
$\Delta c(N_i)$ is at least 9.

As for the two statements of $\overline{h}=3$, by
Equation~(\ref{E-DE4proof-if}), $\Delta c(N_i)$ in the case when
$N_i$ is one of the matchings in Figure~\ref{fig:fg4matching} is
exactly seven, as required. Then we consider the case when $N_i$ is
not in Figure~\ref{fig:fg4matching} in the following. By
Equation~(\ref{E-lemma-perm-1}), $\Delta c(N_i) \geq
r_{x,x+1}s_{x+1,x} + r_{y,y+1}s_{y+1,y} +r_{z,z+1}s_{z+1,z} \geq 5 +
r_{z,z+1}s_{z+1,z}$ for some $x,y,z\in\{1,...,6\}$ and $x\neq y \neq
z \neq x$ since it is necessary to merge $b$-subcycles, which
contributes at least 5. It suffices to consider the cases when
$r_{z,z+1}s_{z+1,z} \leq 2$, which may violate our required. That
is, $z$ may be 1, 2, 3 or 6. By considering four possible cases of
merging $b$-subcycles, one may easily check that whatever $z$ is,
$\Delta c(N_i)$ must be either larger than 7 or in
Figure~\ref{fig:fg4matching}.\qed
%\end{proof}

\end{document}